\title{Proposition Algebra with Projective Limits}
\author{
	Jan A.\ Bergstra\thanks{J.A. Bergstra acknowledges
	support from NWO (project Thread Algebra for Strategic
	Interleaving).} \and
	Alban Ponse \\
\\
  {\small
	  Section Theory of Computer Science,
	  Informatics Institute,
	  University of Amsterdam}\\
	{\small Url: \url{www.science.uva.nl/~{janb,alban}}}\\
	{\small Email: \url{{janb,alban}@science.uva.nl}
	}
}
\newcommand{\axname}[1]{\textup{\ensuremath{\mathrm{#1}}}}
\newcommand{\mf}{\ensuremath{\mathit{mf}}}
\newcommand{\ree}{\ensuremath{\textit{re{-}eval}\:^P}}
\newcommand{\cach}{\ensuremath{\mathit{caching}}}
\newcommand{\fr}{\ensuremath{\mathit{fr}}}
\newcommand{\rp}{\ensuremath{\mathit{rp}}}
\newcommand{\con}{\ensuremath{\mathit{cr}}}
\newcommand{\wm}{\ensuremath{\mathit{wm}}}
\newcommand{\mem}{\ensuremath{\mathit{mem}}}
\newcommand{\st}{\ensuremath{\mathit{st}}}
\newcommand{\Pmem}{\ensuremath{\mathit{Pmem}}}
\newcommand{\Nmem}{\ensuremath{\mathit{Nmem}}}
\newcommand{\SAT}{\ensuremath{\mathbf{SAT}}}
\newcommand{\FAS}{\ensuremath{\mathbf{FAL}}}
\newcommand{\dd}[1]{\textstyle{\frac{\partial}{\partial #1}}}
\newcommand{\BA}{\ensuremath{\mathbb{A}}}
\newcommand{\BB}{\ensuremath{\mathbb{B}}}
\newcommand{\trVal}{\ensuremath{\tr_{\RV}}}
\newcommand{\faVal}{\ensuremath{\fa_{\RV}}}
\newcommand{\ReVal}{\ensuremath{\Sigma_{\textit{Val}}}}
\newcommand{\RV}{\ensuremath{\mathit{Val}}}
\newcommand{\RVA}{\ensuremath{\mathrm{VA}}}
\newcommand{\andthen}{\ensuremath{\circ}}
\newcommand{\step}[1]{\ensuremath{\stackrel{#1}{\rightarrow}}}
\newcommand{\leftand}{~
     \mathbin{\setlength{\unitlength}{1ex}
     \begin{picture}(1.4,1.8)(-.3,0)
     \put(-.6,0){$\wedge$}
     \put(-.53,-0.36){\circle{0.6}}
     \end{picture}
     }}
\newcommand{\rightand}{~
     \mathbin{\setlength{\unitlength}{1ex}
     \begin{picture}(1.4,1.8)(0,0)
     \put(-.8,0){$\wedge$}
     \put(.72,-0.36){\circle{0.6}}
     \end{picture}
     }}
\newcommand{\leftor}{~
     \mathbin{\setlength{\unitlength}{1ex}
     \begin{picture}(1.4,1.8)(-.3,0)
     \put(-.6,0){$\vee$}
     \put(-.53,1.7){\circle{0.6}}
     \end{picture}
     }}
\newcommand{\rightor}{~
     \mathbin{\setlength{\unitlength}{1ex}
     \begin{picture}(1.4,1.8)
     \put(-.8,0){$\vee$}
     \put(.72,1.7){\circle{0.6}}
     \end{picture}
     }}
 \newcommand{\leftimp}{~
     \mathbin{\setlength{\unitlength}{1ex}
     \begin{picture}(1.5,1.8)
     \put(-.1,0){$\rightarrow$}
     \put(-.3,0.57){\circle{0.6}}
     \end{picture}
     ~}}
\newcommand{\rightimp}{~
     \mathbin{\setlength{\unitlength}{1ex}
     \begin{picture}(1.5,1.8)
     \put(-.9,0){$\rightarrow$}
     \put(1.76,0.57){\circle{0.6}}
     \end{picture}
     ~}}
\newcommand{\leftbiimp}{~
     \mathbin{\setlength{\unitlength}{1ex}
     \begin{picture}(1.5,1.8)
     \put(-.1,0){$\leftrightarrow$}
     \put(-.44,0.57){\circle{0.6}}
     \end{picture}
     ~}}
\newcommand{\rightbiimp}{~
     \mathbin{\setlength{\unitlength}{1ex}
     \begin{picture}(1.5,1.8)
     \put(-.9,0){$\leftrightarrow$}
     \put(1.76,0.57){\circle{0.6}}
     \end{picture}
     ~}}
\newcommand{\tr}{\ensuremath{T}}
\newcommand{\fa}{\ensuremath{F}}
\newcommand{\lef}{\ensuremath{\triangleleft}}
\newcommand{\rig}{\ensuremath{\triangleright}}
\newtheorem{theorem}{Theorem}[section]
\newtheorem{lemma}[theorem]{Lemma}
\newtheorem{proposition}[theorem]{Proposition}
\newtheorem{definition}[theorem]{Definition}  
\theoremstyle{definition}
\newtheorem*{ack}{Acknowledgment}
\newcommand{\PR}{\axname{PR}}
\newcommand{\CP}{\axname{\rm CP}}
\newcommand{\ACP}{\axname{ACP}}
\newcommand{\Nplus}{\ensuremath{\mathbb N}^+}
\newcommand{\Nat}{\ensuremath{\mathbb N}}
\newcommand{\PLM}{\ensuremath{\mathbb A}^\infty}
\newcommand{\Prop}{\ensuremath{\mathcal P}}
\newcommand{\Propinfty}{\ensuremath{\Prop^\infty}}
\begin{document}
\date{ }
\maketitle

\begin{abstract}
Sequential propositional logic deviates from ordinary propositional 
logic by taking into account that during the sequential evaluation 
of a propositional statement,atomic propositions may yield 
different Boolean values at repeated occurrences. 
We introduce `free valuations' to capture this dynamics of a 
propositional statement's environment. 
The resulting logic is phrased as an equationally specified algebra 
rather than in the form of proof rules, and is 
named `proposition algebra'. 
It is strictly more general than Boolean algebra to the extent that 
the classical connectives fail to be expressively complete in the 
sequential case.
The four axioms for free valuation congruence are then combined with 
other axioms in order define a few more valuation congruences that 
gradually identify more propositional statements, up to static valuation 
congruence (which is the setting of conventional propositional logic).

Proposition algebra is developed in a fashion similar to the process 
algebra ACP and the program algebra PGA, via an algebraic specification 
which has a meaningful initial algebra for which a range of coarser 
congruences are considered important as well. In addition infinite 
objects (that is propositional statements, processes and programs 
respectively) are dealt with by means of an inverse limit construction 
which allows the transfer of knowledge concerning finite objects 
to facts about infinite ones while reducing all facts about 
infinite objects to an infinity of facts about finite ones in return.
\end{abstract}

\section{Introduction}
\label{sec:1}
A propositional statement is a composition 
of atomic propositions made by 
means of one or more (proposition) composition mechanisms, 
usually called connectives. 
Atomic propositions are considered to represent facts about an 
environment (execution environment, execution architecture, 
operating context) that are used by the logical mechanism 
contained in the propositional statement 
which aggregates these facts for 
presentation to the propositional statement's user. Different occurrences 
of the same atomic propositions represent different queries 
(measurements, issued information requests) at different moments 
in time. 

A valuation that may return different Boolean values for the same 
atomic proposition during the sequential evaluation of a single 
propositional statement is called \emph{free}, or in 
the case the
evaluation result of an atomic proposition can have
effect on subsequent evaluation, it is called
\emph{reactive}. 
This is in contrast to a ``static'' valuation, which always 
returns the same value for the same atomic proposition. 
Free valuations are thus semantically at the opposite 
end of static valuations, and are \emph{observation based}
in the sense that they capture the identity of a 
propositional statement as 
a series of queries followed by a Boolean value.

Many classes of valuations can be distinguished. 
Given a class $K$ of valuations, two propositional statements 
are $K$-equivalent if 
they evaluate to the same Boolean value for each valuation 
in $K$. 
Given a family of proposition connectives, $K$-equivalence 
need not be a congruence, and $K$-congruence is the largest 
congruence that is contained in $K$-equivalence. 
It is obvious that with larger $K$
more propositional statements can be distinguished 
and the one we consider most 
distinguishing is named \emph{free valuation congruence}.
It is this congruence that plays the role of an initial 
algebra for the proposition algebras developed in this paper.
The axioms of proposition algebra specify free
valuation congruence in terms of 
the single ternary connective \emph{conditional composition}
(in computer science terminology: \emph{if-then-else})
and constants for truth and falsity, and 
their soundness and completeness (for closed equations)
is easily shown. 
Additional axioms are given for static valuation congruence, and 
for some reactive valuation congruences in between.

Sequential versions of the well-known binary connectives of 
propositional logic and negation can be expressed with 
conditional composition. We prove that these connectives 
have insufficient expressive power at this level of generality 
and that a ternary connective is needed
(in fact this holds for any collection of binary connectives 
definable by conditional composition.)  

Infinite
propositional statements are defined by means of an 
inverse limit construction 
which allows the transfer of knowledge concerning finite objects 
to facts about infinite ones while reducing all facts about 
infinite objects to an infinity of facts about finite ones 
in return. This construction was applied in giving standard semantics 
for the process algebra \ACP\ (see~\cite{BK84} and for a more recent 
overview~\cite{BW90}).
In doing so the design of proposition algebra is very similar to the 
thread algebra of~\cite{BM07} which is based on a similar ternary 
connective but features constants for termination and deadlock rather 
than for truth and falsity. Whereas thread algebra focuses on 
multi-threading and concurrency, proposition algebra has a 
focus on sequential mechanisms. 

The paper is structured as follows: In the next section we discuss
some motivation for proposition algebra. 
In Section~\ref{sec:apa}
we define the signature and equations of proposition algebra, 
and in Section~\ref{sec:reval} we formally define 
valuation algebras. 
In Section~\ref{sec:rvv} we consider some
observation based equivalences and congruences 
generated by valuations, and in 
Sections~\ref{sec:com}-\ref{sec:stat}
we provide complete axiomatisations of these congruences.
Definable (binary) connectives are formally
introduced in Section~\ref{sec:add}.
In Section~\ref{sec:sat} we briefly consider
some complexity issues concerning satisfiability.
The expressiveness
(functional incompleteness)
of binary connectives is discussed in 
Section~\ref{sec:exp}. 
In Section~\ref{sec:proj} we introduce projection and 
projective limits for defining potentially infinite  
propositional statements. In Section~\ref{sec:recspec} we discuss 
recursive specifications of such propositional statements, and in 
Section~\ref{sec:app} we sketch an application perspective
of proposition algebra.
The paper is ended with some conclusions in Section~\ref{sec:conc}.

\begin{ack}
We thank Chris Regenboog for discussions about completeness 
results. We thank three referees
for their reviews of a prior version, which we think
improved the presentation considerably.
\end{ack}

\section{Motivation for proposition algebra} 
\label{sec:motiv}

Proposition algebra
is proposed as a preferred way 
of viewing the data type of propositional statements, 
at least in a context of 
sequential systems. Here are some arguments in favor of that thesis:

In a sequential program a test, which is a conjunction of $P$ 
and $Q$ will be evaluated in a sequential fashion, beginning with $P$ 
and not evaluating $Q$ unless the evaluation of $P$ led to a
positive outcome. 
The sequential form of evaluation takes precedence over the axioms or 
rules of conventional propositional logic or Boolean algebra. 
For instance, neither conjunction nor disjunction are commutative 
when evaluated sequentially in the presence of side-effects, errors 
or exceptions. The absence of these latter features is never claimed 
for imperative programming and thus some extension of ordinary 
two-valued logic is necessary to understand the basics of propositional 
logic as it occurs in the context of imperative programs. 
Three-, four- or more sophisticated many-valued logics may be used to 
explain the logic in this case (see, e.g., \cite{BBR95,BP98,Hahn05}),
and the non-commutative, sequential reading
of conjunction mentioned above can be traced back to
McCarthy's seminal work on computation theory~\cite{McC63},
in which a specific value for
undefinedness (e.g., a divergent computation) is considered
that in conjunction with falsity results 
the value that was evaluated first. 

Importing non-commutative conjunction
to two valued propositional logic means that 
the sequential order of events is significant, and that is what
proposition algebra is meant to specify and analyze in the first place. 
As a simple example, consider the propositional statement 
that a pedestrian evaluates just before
crossing a road with two-way traffic driving on the right:
\begin{equation}
\label{pedes}
\textit{look-left-and-check}\leftand\textit{look-right-and-check}
\leftand\textit{look-left-and-check}.
\end{equation}
Here $\leftand$ is \emph{left-sequential conjunction}, which is 
similar to conjunction but the left argument is
evaluated first and upon \fa\ (``false''), 
evaluation finishes with result \fa.
A valuation associated with this example is (should be) a
free valuation:
also in the case that the leftmost occurrence of 
\textit{look-left-and-check} evaluates
to \tr\ (``true''), its second evaluation might very well evaluate to \fa.
However, the order of events (or their amount) needs not to be 
significant in all circumstances and 
one may still wish or require that in particular cases 
conjunction is idempotent or even
commutative. A most simple example is perhaps
\[a\leftand a=a\]
with $a$ an atomic proposition, 
which is not valid in free valuation semantics 
(and neither is the falsity of $a\leftand \neg a$).
For this reason we distinguish a number of restricted forms of
reactive valuation equivalences and congruences that
validate this example or variations thereof, 
but still refine static valuation congruence. It is evident that
many more such refinements can be distinguished.

We take the point of departure that the very purpose of any 
action taken by a program under execution is to change the state 
of a system. If no change of state results with 
certainty the action can just as well be skipped. This holds for tests 
as well as for any action mainly performed because of its intended 
side-effects. 
The common intuition that the state is an external matter not 
influenced by the evaluation of a test, justifies ignoring side-effects 
of tests and for that reason it justifies an exclusive focus on static 
valuations to a large extent, thereby rendering the issue of 
reactivity pointless as well. But  there are some interesting cases 
where this intuition is not necessarily convincing. 
We mention three such issues, all of which
also support the general idea of considering
propositional statements under sequential evaluation:

\begin{enumerate}
\item It is common to accept that in a mathematical text an expression 
$1/x$ is admissible only after a test $x \neq 0$ has been performed. 
One might conceive this test as an action changing the state of mind of 
the reader thus influencing the evaluation of further assertions 
such as $x/x=1$. 

\item A well-known dogma on computer viruses introduced by Cohen 
in 1984 (in~\cite{Coh84}) states that a computer cannot decide whether or not 
a program that it is running is itself a virus.  The proof involves a 
test which is hypothetically enabled by a decision mechanism which is 
supposed to have been implemented on a modified instance of the machine 
under consideration. It seems fair to say that the property of a program 
being viral is not obviously independent of the state of the program. 
So here is a case where performing the test might (in principle at least) 
result in a different state from which the same test would lead to a 
different outcome. 
\\
This matter has been analyzed in detail in~\cite{BP05,BBP07} with the 
conclusion that the reactive nature of valuations gives room for criticism 
of Cohen's original argument. In the didactic literature on computer security 
Cohen's viewpoint is often repeated and it can be found on many 
websites and in the introduction of many texts. But there is a remarkable 
lack of secondary literature on the matter; an exception is
the discussion in~\cite{Coh01} and the papers cited therein. 
In any case the common claim that 
this issue is just like the halting problem (and even more important in 
practice) is open for discussion.
  
\item The on-line halting problem is about execution 
environments which allow a running program to acquire information about 
its future halting or divergence. 
This information is supposed to be provided by means of a forecasting 
service. 
In~\cite{BP07} that feature is analyzed in detail in a setting of 
thread algebra and the impossibility of sound and complete forecasting of 
halting is established. In particular, calling a forecasting
service may have side-effects which leads to different replies 
in future calls (see, e.g.,~\cite{PZ06}).
This puts the impossibility of effectively deciding the 
halting problem at a level comparable to the impossibility of finding a 
convincing truth assignment for the liar paradox sentence in conventional 
two valued logic.
\end{enumerate}
  
Our account of proposition algebra is based on
the ternary operator \emph{conditional composition} 
(or \emph{if-then-else}). 
This operator has a  sequential form of
 evaluation as its natural semantics,
and thus combines naturally with free and reactive valuation semantics.
Furthermore, proposition algebra
constitutes a simple setting for constructing 
infinite propositional statements by means of an inverse 
limit construction. The resulting
\emph{projective limit model} can
be judged as one that didactically precedes (prepares for)
technically more involved versions for
process algebra and thread algebra, and as such provides 
by itself a motivation for proposition algebra.

\section{Proposition algebra}
\label{sec:apa}
In this section we introduce the signature and
equational axioms of proposition algebra. Let $A$ 
be a countable set of atomic propositions
$a,b,c,...$. The elements of $A$ serve 
as atomic (i.e., non-divisible) queries that will produce a 
Boolean reply value.

We assume that $|A|>1$. The case that $|A|=1$ is described
in detail in~\cite{Chris}. We come back to this point
in Section~\ref{sec:conc}.

The signature of proposition algebra consists of the constants  
$\tr$ and $\fa$ (representing true and false), 
a constant $a$ for each $a \in A$,
and, following Hoare in \cite{Hoa85}, the ternary operator 
\emph{conditional composition} 
\[\_\lef \_ \rig \_ \;.\]
We write $\Sigma_{\CP}(A)$ for the signature introduced here. 
Terms are subject to the equational axioms in Table~\ref{CP}. 
We further write \CP\ for this set of axioms (for conditional propositions).

\begin{table}
\centering
\hrule
\begin{align*}
	\nonumber&(\axname{CP1}) &x \lef \tr \rig y &= x\\
	\nonumber&(\axname{CP2}) &x \lef \fa \rig y &= y\\
	\nonumber&(\axname{CP3}) &\tr \lef x \rig \fa  &= x\\
    \nonumber&(\axname{CP4}) &
    x \lef (y \lef z \rig u)\rig v &= 
	(x \lef y \rig v) \lef z \rig (x \lef u \rig v)
\end{align*}
\hrule
\caption{The set \CP\ of axioms for proposition algebra}
\label{CP}
\end{table}

An alternative name for the conditional composition 
$y\lef x\rig z$ is 
\emph{if $x$ then $y$ else $z$}: 
the axioms CP1 and CP2 model that its central
condition $x$ is evaluated first, and depending on the reply 
either its
leftmost or rightmost argument is evaluated. 
Axiom CP3 establishes that a term 
can be extended to a larger conditional composition by adding 
$\tr$ as a leftmost argument and $\fa$ as a 
rightmost one, and CP4 models the way
a non-atomic central condition distributes over the outer arguments.
We note that the expression 
\[\fa\lef x\rig\tr\]
can be seen as defining the negation of $x$:
\begin{equation}
\label{ss}
\CP\vdash 
z\lef(\fa\lef x\rig \tr)\rig y
=(z\lef\fa\rig y)\lef x\rig (z\lef\tr\rig y)=y\lef x\rig z
,
\end{equation}
which illustrates that
 ``\emph{if $\neg x$ then $z$ else $y$}" 
and ``\emph{if $x$ then $y$ else $z$}"
are considered equal.

We introduce the abbreviation 
\[x\andthen y\quad\text{for}\quad y\lef x\rig y,\]
and we name this expression \emph{$x$ and then $y$}.
It follows easily that $\andthen$ is associative:
\[(x\andthen y)\andthen z=z\lef(y\lef x\rig y)\rig z=
(z\lef y\rig z)\lef x\rig(z\lef y\rig z)=
x\andthen (y\andthen z).
\] 
We take the and-then operator
\andthen\ to bind stronger than conditional composition.
At a later stage we
will formally add negation, the ``and then'' connective
\andthen, and some other binary connectives to proposition
algebra (i.e., add their function symbols to $\Sigma_\CP(A)$ 
and their defining equations to \CP).

Closed terms over $\Sigma_{\CP}(A)$ are called
\emph{propositional statements}, 
with typical elements $P,Q,R,...$. 

\begin{definition}
\label{propdef}
A propositional statement $P$ is a \textbf{basic form} if 
\begin{align*}
P ::= \tr\mid\fa\mid P_1\lef a \rig P_2
\end{align*}
with $a\in A$, and $P_1$ and $P_2$ basic forms.
\end{definition}
So, basic forms can be seen as binary trees of which the leaves 
are labeled with either $\tr$ or $\fa$, and the internal nodes 
with atomic propositions. Following \cite{BW90} we
 use the name \emph{basic form}
instead of \emph{normal form} because we associate the latter
with a term rewriting setting.

\begin{lemma}
\label{proplem}
Each propositional statement can be proved equal to 
one in basic form using the axioms in Table~\ref{CP}.  
\end{lemma}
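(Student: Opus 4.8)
The plan is to prove the statement by structural induction on the propositional statement, the only real work being the reduction of a conditional composition whose central condition is not atomic. First I would dispose of the base cases: $\tr$ and $\fa$ are already basic forms, and an atomic proposition $a$ is handled by axiom \axname{CP3}, which gives $a = \tr \lef a \rig \fa$, a basic form. The remaining case is a statement of the shape $P \lef Q \rig R$; here the induction hypothesis lets me assume that $P$, $Q$ and $R$ have already been proved equal to basic forms, so it suffices to show that $P \lef Q \rig R$ can be brought into basic form whenever $P$, $Q$, $R$ are themselves basic forms.

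For that step I would prove an auxiliary claim by a second induction, this time on the structure (say the number of internal nodes) of the central condition $Q$. If $Q = \tr$ or $Q = \fa$, then \axname{CP1} or \axname{CP2} collapses $P \lef Q \rig R$ to $P$ or $R$, which is a basic form. If $Q = Q_1 \lef a \rig Q_2$ with $Q_1, Q_2$ basic forms, then \axname{CP4} rewrites
\[P \lef (Q_1 \lef a \rig Q_2) \rig R = (P \lef Q_1 \rig R) \lef a \rig (P \lef Q_2 \rig R).\]
The two new conditional compositions $P \lef Q_1 \rig R$ and $P \lef Q_2 \rig R$ have central conditions $Q_1$ and $Q_2$, each structurally smaller than $Q$, so the inner induction hypothesis converts them to basic forms $S_1$ and $S_2$; the whole expression then equals $S_1 \lef a \rig S_2$, which is a basic form because its central condition $a$ is atomic. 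Combining the two inductions yields the lemma.

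The hard part, or at least the point that needs care, is the choice of induction measure. One cannot simply induct on the overall size of the term, because \axname{CP4} duplicates $P$ and $R$ and therefore increases that size. The decreasing quantity is instead the complexity of the central condition, and it is precisely this that the nested induction isolates: each application of \axname{CP4} trades a condition $Q_1 \lef a \rig Q_2$ for the strictly simpler conditions $Q_1$ and $Q_2$, even as the surrounding term grows. A reader who prefers a single induction can package this as a lexicographic measure on the pair (depth of the central condition, size of the term), but the two-layered argument above seems the most transparent.
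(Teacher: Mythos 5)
Your proof is correct and follows essentially the same route as the paper: an outer structural induction whose only nontrivial case is handled by an auxiliary claim (for basic forms $P,Q,R$, the statement $P\lef Q\rig R$ is provably equal to a basic form), proved by an inner structural induction on the central condition $Q$ using \axname{CP4}. Your closing remark about why the induction must be on the central condition rather than on term size (since \axname{CP4} duplicates the outer arguments) is a point the paper leaves implicit, but the argument itself is the same.
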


\begin{proof}
We first show that if $P,Q,R$ are basic forms, then
$P\lef Q\rig R$ can be proved equal to a basic form  by structural induction
on $Q$. If $Q=\tr$ or $Q=\fa$, this follows immediately, and
if $Q=Q_1\lef a\rig Q_2$ then 
\begin{align*}
\CP\vdash P\lef Q\rig R&=P\lef(Q_1\lef a\rig Q_2)\rig R \\
&=
(P\lef Q_1\rig R)\lef a\rig (P\lef Q_2\rig R)
\end{align*}
and by induction there are basic forms $P_i$ for $i=1,2$
such  that
$\CP\vdash P_i=P\lef Q_i\rig R$, 
hence $\CP\vdash P\lef Q\rig R= P_1\lef a\rig P_2$
and $P_1\lef a\rig P_2$ is a basic form.

Next we prove the lemma's statement by
structural induction on the form that propositional
statement $P$ may take. 
If $P=\tr$ or $P=\fa$ we are done, and if
$P=a$, then $\CP\vdash P=\tr\lef a \rig \fa$. 
\\
For the case $P=P_1\lef P_2\rig P_3$ it follows by induction that
there are basic forms $Q_1$, $Q_2$, $Q_3$ with 
$\CP\vdash P_i=Q_i$, 
so
$\CP\vdash P=Q_1\lef Q_2\rig Q_3$. We proceed by case 
distinction on $Q_2$:
if $Q_2=\tr$ or $Q_2=\fa$
the statement follows immediately; if $Q_2=R_1\lef a\rig R_2$,
then
\begin{align*}
\CP\vdash P=Q_1\lef Q_2\rig Q_3
&=Q_1\lef (R_1\lef a\rig R_2)\rig Q_3\\
&=
(Q_1\lef R_1\rig Q_3)\lef a\rig (Q_1\lef R_2 \rig Q_3)
\end{align*}
and by the first result there are basic forms $S_1,S_2$
such that $\CP\vdash S_i=Q_1\lef R_i\rig Q_3$. Hence
$\CP\vdash P=S_1\lef a\rig S_2$ and $S_1\lef a\rig S_2$ is a 
basic form.
\end{proof}

We write 
\[P\equiv Q\]
to denote that propositional statements
$P$ and $Q$ are syntactic equivalent.
In Section~\ref{sec:rvv} we prove that
basic forms constitute a convenient representation:

\begin{proposition}
\label{propo}
If $\CP\vdash P=Q$ for basic forms $P$ and $Q$, then 
$P\equiv Q$.
\end{proposition}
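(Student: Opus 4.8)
The plan is to prove this via a sound semantics that is injective on basic forms: if I can equip terms with a free-valuation interpretation $\llbracket\cdot\rrbracket$ for which all four \CP-axioms are valid and for which distinct basic forms receive distinct meanings, then $\CP\vdash P=Q$ forces $\llbracket P\rrbracket=\llbracket Q\rrbracket$, whence $P\equiv Q$. Concretely I would model a free valuation as a function $\sigma$ assigning a Boolean to each pair $(h,a)$ of a finite history $h$ (a sequence of atom/Boolean pairs already observed) and an atom $a$, and define by recursion a run $\mathrm{run}(t,\sigma,h)=(b,w)$ returning the resulting Boolean $b$ together with the \emph{trace} $w$ of atom/Boolean pairs produced: the constants $\tr,\fa$ return $(\tr,\varepsilon)$ and $(\fa,\varepsilon)$; an atom $a$ returns $(\sigma(h,a),(a,\sigma(h,a)))$; and for $t_1\lef t_2\rig t_3$ one first runs $t_2$, appends its trace to $h$, and then runs $t_1$ or $t_3$ according to the Boolean returned by $t_2$, concatenating traces. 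The meaning $\llbracket t\rrbracket$ is the function $(\sigma,h)\mapsto\mathrm{run}(t,\sigma,h)$. It is essential that the semantics records the trace and not merely the final Boolean: otherwise $\tr\lef a\rig\tr$ would be identified with $\tr$, contradicting the proposition for these two distinct basic forms.

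First I would verify soundness, i.e.\ that for every $\sigma$ and $h$ both components of $\mathrm{run}$ agree on the two sides of CP1--CP4. CP1 and CP2 are immediate since the central condition evaluates to $\tr$ (resp.\ $\fa$) with empty trace; CP3 holds because running $\tr\lef x\rig\fa$ returns exactly the Boolean and trace of $x$; and CP4 is the only genuine computation: running the central condition $y\lef z\rig u$ first evaluates $z$, then $y$ or $u$, and one checks that on each of the four branches (determined by the Booleans of $z$ and of the selected argument) both the returned Boolean and the concatenated trace coincide with those of $(x\lef y\rig v)\lef z\rig(x\lef u\rig v)$, the subterms $x$ and $v$ being run against identical histories on both sides. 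Since $\mathrm{run}$ is defined compositionally, equality of meanings is a congruence, so soundness of the four axioms yields $\CP\vdash P=Q\Rightarrow\llbracket P\rrbracket=\llbracket Q\rrbracket$.

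The heart of the argument is injectivity on basic forms, which I would prove by structural induction on $P$, showing that distinct basic forms are separated by some $(\sigma,\varepsilon)$. If $P$ equals $\tr$ or $\fa$ it produces the empty trace and a fixed Boolean under every $\sigma$, so any basic form $Q\not\equiv P$ either has the wrong value or, being of the shape $Q_1\lef b\rig Q_2$, produces a nonempty trace, and is thus separated. If $P=P_1\lef a\rig P_2$ then every run of $P$ begins by querying $a$; a $Q$ equal to $\tr$ or $\fa$ (empty trace), or of the form $Q_1\lef b\rig Q_2$ with $b\neq a$ (different first atom), is separated immediately, so the only remaining case is $Q=Q_1\lef a\rig Q_2$ with, say, $P_1\not\equiv Q_1$. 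Here I invoke the induction hypothesis to obtain a valuation $\sigma_1$ separating $P_1$ from $Q_1$ at the empty history, and then exploit the \emph{freeness} of valuations: I define $\sigma$ to answer the initial query $a$ positively and, on every history of the form $(a,\tr)\cdot g$, to return $\sigma_1(g,c)$ when the atom queried is $c$. Running $P$ (resp.\ $Q$) under $\sigma$ then descends into $P_1$ (resp.\ $Q_1$) at history $(a,\tr)$, and because $\sigma$ ignores this fixed prefix the subrun reproduces exactly $\mathrm{run}(P_1,\sigma_1,\varepsilon)$ (resp.\ $\mathrm{run}(Q_1,\sigma_1,\varepsilon)$); these differ, so prefixing the common pair $(a,\tr)$ keeps them distinct. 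The case $P_2\not\equiv Q_2$ is symmetric, answering $a$ negatively.

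The main obstacle is precisely this last step: threading the induction through the history parameter. The subtlety is that in the whole-term run the subtree $P_1$ is forced to start from the non-empty history $(a,\tr)$ rather than from $\varepsilon$, so one must use that a free valuation may depend on the history arbitrarily in order to ``restart'' it and transport the separating valuation $\sigma_1$ from the subtree up to the whole term. Everything else---the soundness checks, including the four-way case analysis for CP4---is routine bookkeeping once the trace-recording semantics is in place.
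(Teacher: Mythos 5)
Your proposal is correct, and it shares the paper's overall strategy---prove the proposition by exhibiting a semantics that is sound for \CP\ and injective on basic forms---but you realize that strategy by a genuinely different route. The paper deduces Proposition~\ref{propo} from its general Soundness Theorem (Theorem~\ref{thm:sound}) together with Lemma~\ref{lem:above}, which says that $P=_{\fr}Q$ implies $P\equiv Q$ for basic forms. Since the paper's pointwise observation at a valuation $H$ is only the Boolean $P/H$, terms such as $\tr$ and $\tr\lef a\rig\tr$ cannot be separated by any single observation; the paper compensates in two ways: its soundness statement additionally tracks the derivative $\dd P(H)$, and Lemma~\ref{lem:above} exploits that $=_{\fr}$ is a \emph{congruence}, wrapping both terms in a context $\_\andthen b$ and choosing valuation algebras whose reply to $b$ flips after any evaluation that passes through $a$ (the conditions $y_b(H)=\tr$, $y_b(\dd R(\dd a(H)))=\fa$ for all $R$). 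You instead enrich the observation itself: your meaning is the pair (Boolean, trace), and your trace is the paper's derivative in disguise---concatenating the trace onto the starting history is exactly the generalized derivative $\dd P(H)$ in the valuation structure whose elements are histories, with $y_a(h)=\sigma(h,a)$. This buys a self-contained argument: one concrete model, no variety of valuation algebras, no detour through the largest-congruence definition of $=_{\fr}$, and no context trick, since distinct basic forms already denote distinct functions; your prefix-shift construction of the separating $\sigma$ is the same combinatorial core as the paper's choice of $H$ with prescribed replies after $\dd a$. What the paper's route buys in exchange is reuse: Theorem~\ref{thm:sound} and Lemma~\ref{lem:above} are precisely the ingredients needed again for the completeness theorem (Theorem~\ref{thm:complete}), so there the proposition comes essentially for free, whereas your trace model serves only this one statement.
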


\section{Valuation algebras}
\label{sec:reval}
In this section we formally define valuation algebras.
Let $B$ be the sort of the Booleans with constants
\tr\ and \fa.
The signature $\ReVal(A)$ of valuation algebras 
contains the sort $B$ and 
a sort \RV\ of valuations. The sort $\RV$ has two constants
\[\trVal\quad\text{and}\text\quad\faVal,\]
which represent the valuations that assign to 
each atomic proposition the value \tr\ respectively \fa,
and for each $a \in A$ a function
\[y_a:\RV\rightarrow B\]
called the \emph{yield} of $a$, and a function
\[\dd a:\RV\rightarrow\RV\]
called the $a$-derivative. 
A $\ReVal(A)$-algebra is thus a two-sorted algebra.

A $\ReVal(A)$-algebra 
$\BA$ over $\ReVal(A)$ is a \emph{valuation algebra} 
(\RVA) if for all $a\in A$ it satisfies the axioms
\begin{align*}
y_a(\trVal)&=\tr,\\
y_a(\faVal)&=\fa,\\
\dd a(\trVal)&=\trVal,\\
\dd a(\faVal)&=\faVal.
\end{align*}

Given a valuation algebra $\BA$
with a valuation $H$ (of sort \RV)
and a propositional statement
$P$ we simultaneously define the evaluation
of $P$ over $H$, written
\[P/H,\]
and a generalized notion of an $a$-derivative for 
propositional statement $Q$, 
notation
\(\dd Q(H)\)
by the following case distinctions:
\begin{align*}
\tr/H&=\tr,\\
\fa/H&=\fa,\\
a/H&=y_a(H),\\
(P\lef Q\rig R)/H&=
\begin{cases}
P/\dd Q(H)&\text{if $Q/H =\tr$},\\
R/\dd Q(H)&\text{if $Q/H =\fa$},
\end{cases}
\end{align*}
and
\begin{align*}
\dd\tr(H)&=H,\\
\dd\fa(H)&=H,\\
\dd{(P\lef Q\rig R)}(H)&=
\begin{cases}
\dd P(\dd Q(H))&\text{if $Q/H =\tr$},\\
\dd R(\dd Q(H))&\text{if $Q/H =\fa$}.
\end{cases}
\end{align*}
Some explanation: whenever in a conditional
composition
the central condition is an atomic proposition, say $c$, then
a valuation $H$ distributes over the outer arguments as $\dd c(H)$, 
thus
\begin{align}
\label{oso}
(P\lef c\rig Q)/H=
\begin{cases}
P/\dd c(H)&\text{if $y_c(H) =\tr$},\\
Q/\dd c(H)&\text{if $y_c(H) =\fa$}.
\end{cases}
\end{align}
If in a conditional
composition the central condition is not atomic, valuation
decomposes further according to the equations above, e.g.,
\begin{align}
\nonumber
(a\lef ( b\lef c\rig d)\rig e)/H&=
\begin{cases}
a/\dd{( b\lef c\rig d)}(H)&\text{if $( b\lef c\rig d)/H =\tr$},\\
e/\dd{( b\lef c\rig d)}(H)&\text{if $( b\lef c\rig d)/H =\fa$},
\end{cases}
\\
&=
\label{isi}
\begin{cases}
a/\dd b(\dd c(H))&\text{if $y_c(H)=\tr$ and $y_b(\dd c(H))=\tr$},\\
a/\dd d(\dd c(H))&\text{if $y_c(H)=\fa$ and $y_d(\dd c(H))=\tr$},\\
e/\dd b(\dd c(H))&\text{if $y_c(H)=\tr$ and $y_b(\dd c(H))=\fa$},\\
e/\dd d(\dd c(H))&\text{if $y_c(H)=\fa$ and $y_d(\dd c(H))=\fa$}.
\end{cases}
\end{align}

We compare the last example with 
\begin{align}
\label{asa}
((a\lef b\rig e)\lef c\rig (a\lef d\rig e))/H,
\end{align}
which is a particular instance of
\eqref{oso} above.
For the case $y_c(H)=\tr$ we find from \eqref{oso} that
\begin{align*}
\eqref{asa}=(a\lef b\rig e)/\dd c(H)=
\begin{cases}
a/\dd b(\dd c(H))&\text{if $y_b(\dd c(H))=\tr$},\\
e/\dd b(\dd c(H))&\text{if $y_b(\dd c(H))=\fa$},\\
\end{cases}
\end{align*}
and for the case $y_c(H)=\fa$ we find the other two right-hand sides
of \eqref{isi}. In a similar way it follows that  
\[\dd{((a\lef b\rig e)\lef c\rig (a\lef d\rig e))}(H)
=\dd{(a\lef ( b\lef c\rig d)\rig e)}(H),\]
thus providing 
a prototypical example of the soundness of 
axiom CP4 of \CP.
Without (further) proof we state the following result.

\begin{theorem}[Soundness]
\label{thm:sound}
If for propositional statements $P$ and $Q$,
$\CP\vdash P=Q$, then for all \RVA s \BA\ and all
valuations $H\in \BA$,
$P/H=Q/H$ and $\dd P(H)=\dd Q(H)$.
\end{theorem}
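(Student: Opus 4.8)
The plan is to characterise provability between propositional statements as a congruence closure and then verify soundness through its two defining ingredients. Since $\CP\vdash P=Q$ for propositional statements $P,Q$ means precisely that $P=Q$ lies in the congruence on closed terms generated by the closed substitution instances of CP1--CP4, it suffices to show that the binary relation $\approx$ given by
\[
P\approx Q\iff\text{for all \RVA s $\BA$ and all $H\in\BA$:\ }P/H=Q/H\text{ and }\dd P(H)=\dd Q(H)
\]
is itself a congruence and contains every closed instance of CP1--CP4. (That we may restrict attention to closed terms is legitimate: given any derivation between closed endpoints, substituting $\tr$ for each remaining variable throughout turns it into a derivation through closed terms only, leaving the endpoints untouched.) The decisive design choice is to carry \emph{both} the evaluation identity and the derivative identity through the argument at once, and to keep them universally quantified over $H$, since the maps $\cdot/H$ and $\dd\cdot(H)$ are defined by mutual recursion.

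For the base case I would check the four closed axiom instances directly against the defining clauses. CP1 and CP2 are immediate: a central condition $\tr$ (resp.\ $\fa$) makes the clause for $\lef\cdot\rig$ select the left (resp.\ right) branch while $\dd\tr(H)=\dd\fa(H)=H$ records no state change. CP3 holds because the branches of $\tr\lef x\rig\fa$ are chosen exactly by the yield of the condition, reproducing both $x/H$ and $\dd x(H)$. The remaining axiom CP4 is exactly the prototypical computation already carried out in the text: the case analysis culminating in \eqref{isi} together with the instance \eqref{asa} of \eqref{oso} shows that the two sides agree on evaluation, and the text notes the matching agreement on the derivative, which is precisely what $\approx$ requires.

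The heart of the matter is that $\approx$ is a congruence, i.e.\ preserved by $\lef\cdot\rig$ (compatibility with the constants is trivial). Assume $P_i\approx Q_i$ for $i=1,2,3$ and fix $H$. From $P_2\approx Q_2$ we get $P_2/H=Q_2/H$, so the conditional clause branches identically, and $\dd{P_2}(H)=\dd{Q_2}(H)$; write $H'$ for this common valuation. In the $\tr$-branch the two sides evaluate to $P_1/H'$ and $Q_1/H'$, which coincide because $P_1\approx Q_1$ holds \emph{at the valuation $H'$}; the $\fa$-branch is symmetric in $P_3,Q_3$, and the derivative is settled identically using $\dd{P_1}(H')=\dd{Q_1}(H')$. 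This is the step in which the universal quantification over $H$ is genuinely used, since $H'$ is produced only at evaluation time and depends on $H$.

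The step I expect to be the main obstacle is exactly this interleaving of the two semantic functions. Proving $P/H=Q/H$ in isolation would stall at the congruence step for the central argument, which needs $\dd{P_2}(H)=\dd{Q_2}(H)$ before it can descend into the branches; dually, the derivative clause needs the yield equality $P_2/H=Q_2/H$ to decide which branch to take. Treating the pair $(\,\cdot/H,\ \dd\cdot(H)\,)$ as a single invariant, quantified over all valuations, is what lets the induction close.
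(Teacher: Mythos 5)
Your proposal is correct and follows essentially the same route as the paper: the paper's proof checks that every instance of CP1--CP4 satisfies both $P/H=Q/H$ and $\dd P(H)=\dd Q(H)$ (left there as an ``easy exercise''), and the remark immediately following the theorem makes exactly your key point, namely that carrying the derivative identity alongside the evaluation identity is what makes the relation a congruence under conditional composition. You have merely spelled out the details (including the standard reduction of a derivation with closed endpoints to one through closed terms) that the paper compresses into a sketch.
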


\begin{proof}
Let $\BA$ be some \RVA\ and $H\in\BA$.
It is an easy exercise to show that an arbitrary instance 
$P=Q$ of
one of the axioms in $\CP$ satisfies $P/H=Q/H$ and 
$\dd P=\dd Q$.
\end{proof}

We note that $\CP\vdash P=Q~\Rightarrow~\dd P(H)=\dd Q(H)$)
ensures that
the congruence property is respected, e.g., if 
for some $H$, $\tr=P/H=Q/H$, then
\[(R\lef P\rig S)/H=R/\dd P(H)=R/\dd Q(H)=(R\lef Q\rig S)/H\]
and $\dd{(R\lef P\rig S)}(H)=\dd R(\dd P(H))=\dd R(\dd Q(H))
=\dd{(R\lef Q\rig S)}(H)$.

\section{Valuation varieties}
\label{sec:rvv}
We introduce some specific equivalences and congruences 
generated by classes of valuations.
The class of \RVA s that satisfy a certain collection of equations over 
$\ReVal(A)$ is called a \emph{valuation} variety.
We distinguish the following varieties, where each next 
one is subvariety of the one defined:

\begin{enumerate}
\item The variety of \RVA s with \emph{free}
valuations: 
no further \RVA-equations than those defined in 
Section~\ref{sec:reval}.

\item 
\label{rep}
The variety of \RVA s with \emph{repetition-proof} valuations:
all \RVA s that satisfy for all $a\in A$,
\begin{equation*}
\label{rva:rp}
y_a(x)=y_a(\dd a(x)).
\end{equation*}
So the reply to a series of consecutive atoms $a$ is 
determined by the first reply.

Typical example: $(P\lef a \rig Q)\lef a \rig (R\lef a\rig S)
=(a\andthen P)\lef a \rig (a\andthen S)$.

\item The variety of \RVA s with \emph{contractive}
valuations: 
all repetition-proof \RVA s that 
satisfy for all $a\in A$,
\begin{equation*}
\label{rva:cr}
\dd a(\dd a(x))=\dd a(x).
\end{equation*}
Each successive atom $a$ 
is contracted by using the same evaluation result.

Typical example: $(P\lef a \rig Q)\lef a \rig (R\lef a\rig S)
=P\lef a \rig S$.

\item 
\label{rva:wm}
The variety of \RVA s with \emph{weakly memorizing}
valuations consists of
all contractive \RVA s that
satisfy for all $a,b\in A$,
\begin{align*}
&y_b(\dd a(x))=y_a(x)\rightarrow[\dd a(\dd b(\dd a(x)))
=\dd b(\dd a(x))
~\wedge~
y_a(\dd b(\dd a(x)))=y_a(x))].
\end{align*}
Here the evaluation result of 
an atom $a$ is memorized in a subsequent evaluation
of $a$ if the evaluation of intermediate atoms 
yields the same result, and this subsequent $a$
can be contracted. 

Two typical examples are
\begin{align*}
((P\lef a \rig Q)\lef b \rig R)\lef a \rig S
&=(P\lef b\rig R)\lef a \rig S,\\
P\lef a\rig(Q\lef b \rig (R\lef c\rig (S\lef a\rig V)))
&=P\lef a\rig(Q\lef b \rig (R\lef c\rig V)).
\end{align*}
The case in which there are two intermediate atoms is 
discussed in Section~\ref{sec:wm}.
 
\item The variety of \RVA s with \emph{memorizing}
valuations: 
all contractive \RVA s that
satisfy for all $a,b\in A$,
\begin{align*}
&\dd a(\dd b(\dd a(x)))
=\dd b(\dd a(x))
~\wedge~
y_a(\dd b(\dd a(x)))=y_a(x).
\end{align*}
Here the evaluation result of 
an atom $a$ is memorized in all subsequent evaluations
of $a$ and all subsequent $a$'s can be contracted. 

Typical axiom (right-oriented version):
\[
x\lef y\rig(z\lef u\rig(v\lef y\rig w))= 
x\lef y\rig(z\lef u\rig w).
\]
Typical counter-example: $a\lef b\rig\fa\neq b\lef a\rig\fa$
(thus $b\leftand a\neq a\leftand b$).
\item The variety of \RVA s with \emph{static} 
valuations: 
all \RVA s that satisfy for all $a,b\in A$,
\[y_a(\dd b(x))=y_a(x)\]

This is the 
setting of conventional propositional
logic. 

Typical identities:
$a=b\andthen a$ and
$a\lef b\rig\fa = b\lef a\rig\fa$ 
(thus $b\leftand a= a\leftand b$).
\end{enumerate}

\begin{definition}
\label{def:eqval}
Let $K$ be a variety of valuation algebras over $A$. 
Then propositional statements $P$ and $Q$ are 
\textbf{$K$-equivalent}, notation
\[P\equiv_K Q,\]
if $P/H=Q/H$ for all $\BA\in K$ and $H\in \BA$.
Let $=_K$ be the largest 
congruence contained in $\equiv_K$.
Propositional statements $P$ and $Q$ are
\textbf{$K$-congruent} if
\[P=_K Q.\]
\end{definition}

So, by the varieties defined thus far we distinguish six
types of $K$-equivalence and $K$-congruence: 
free, repetition-proof, 
contractive, weakly memorizing, memorizing and static. 
We use the following
abbreviations for these: 
\[K=\fr,{\rp},{\con},{\wm},{\mem},{\st},\]
respectively. A convenient auxiliary
notation in comparing these equivalences and congruences
concerns the valuation of strings: 
given a \RVA, say $\BA$, a  
valuation $H\in\BA$ can be associated with a function
$H_f: A^+\rightarrow \BB$
by defining
\[
  H_f(a)=y_a(H)\quad\text{and}\quad
  H_f(a\sigma)=(\dd a (H))_f(\sigma).
\]

\begin{proposition}
The inclusions $\equiv_{\fr}\;\subseteq\;\equiv_{\rp}\;\subseteq\;
\equiv_{\con}\;\subseteq\;\equiv_{\wm}\;\subseteq\;\equiv_{\mem}\;
\subseteq\;\equiv_{\st}$, and
$=_{K}\;\subseteq\;\equiv_{K}$ for 
$K\in\{\fr,\rp,\con,\wm,\mem\}$
are all proper.
\end{proposition}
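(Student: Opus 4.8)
The plan is first to dispose of the inclusions themselves and then to concentrate on strictness, which is where all the content lies. Each variety in the list is a subvariety of its predecessor, so if one class of valuation algebras is contained in another, then $P\equiv_K Q$ quantifies over a \emph{larger} class of algebras than $P\equiv_{K'}Q$ whenever $K'\subseteq K$; hence $\equiv_K\;\subseteq\;\equiv_{K'}$. Reading this along the chain of varieties (free $\supseteq$ repetition-proof $\supseteq$ contractive $\supseteq$ weakly memorizing $\supseteq$ memorizing $\supseteq$ static) yields the six displayed inclusions at once, and $=_K\;\subseteq\;\equiv_K$ holds by the very definition of $=_K$ as the largest congruence contained in $\equiv_K$. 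So the proposition reduces to producing, for each of the five steps of the chain and for each of the five gaps $=_K\subsetneq\equiv_K$, an explicit separating pair of propositional statements.

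For the chain I would take, at each adjacent pair of varieties, exactly the ``typical example'' already displayed for the smaller one, instantiating its schematic arguments by concrete constants ($\tr$, $\fa$, or distinct atoms $a\neq b$, which exist as $|A|>1$) chosen so that the two branches carry different Boolean leaves. For the equal-in-the-smaller-variety direction I would evaluate both sides by the clauses of Section~\ref{sec:reval}, reading \eqref{oso} repeatedly, and observe that the defining equation of that variety (e.g.\ $y_a(x)=y_a(\dd a(x))$ for the repetition-proof case) collapses the two sides to the same value for every valuation in that variety; this is exactly the computation carried out for the repetition-proof example above. For the not-equal-in-the-larger-variety direction I would exhibit a single separating valuation in the larger variety, using the string encoding $H_f:A^+\rightarrow\BB$ recalled just before the statement as the convenient tool: the larger variety corresponds precisely to dropping the extra constraint on $H_f$, so I prescribe a function $A^+\rightarrow\BB$ satisfying the weaker constraints but violating the stronger one at the relevant history (for free versus repetition-proof, a valuation with $y_a(H)=\tr$ but $y_a(\dd a(H))=\fa$; for memorizing versus static, the valuation underlying the displayed counterexample $a\lef b\rig\fa\neq b\lef a\rig\fa$).

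For the five gaps $=_K\subsetneq\equiv_K$ I would use one pair that works uniformly: $P=\tr\lef a\rig\tr$ and $Q=\tr$. Both evaluate to $\tr$ under every valuation, so $P\equiv_K Q$ for all $K$; but they consume different histories, and in the context $b\lef\cdot\rig c$ one computes $(b\lef P\rig c)/H=y_b(\dd a(H))$ while $(b\lef Q\rig c)/H=y_b(H)$. It therefore suffices to produce, for each $K\in\{\fr,\rp,\con,\wm,\mem\}$, a valuation in $K$ with $y_b(\dd a(H))\neq y_b(H)$ for distinct atoms $a,b$; none of these varieties constrains the dependence of one atom's yield on a preceding \emph{different} atom (only the static law $y_a(\dd b(x))=y_a(x)$ does, which is exactly why static is excluded from this part), so such valuations exist. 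In fact a single memorizing valuation with this property lies in all five varieties simultaneously, so one construction settles all five gaps.

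The main obstacle is the construction side of the separations, and in particular the weakly memorizing versus memorizing step. The weakly memorizing law is \emph{conditional}---it forces contraction of a repeated $a$ only when the intervening atom reproduces $a$'s earlier reply---whereas the memorizing law is unconditional. To separate $\equiv_{\wm}$ from $\equiv_{\mem}$ I must build a valuation that is genuinely weakly memorizing (every contraction demanded by its hypotheses actually holds) yet makes the intervening atom yield the opposite reply, so that the memorizing identity fails while all weakly memorizing constraints remain satisfied; verifying that the prescribed $H_f$ meets \emph{every} weakly memorizing constraint, and not merely the one being violated for memorizing, is the delicate bookkeeping. The remaining separations are routine once the valuations are defined through the string encoding and the leaves are chosen to be distinct Booleans.
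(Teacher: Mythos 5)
Your handling of the inclusions, of the five gaps $=_K\subsetneq\;\equiv_K$ (the pair $\tr\lef a\rig\tr$ versus $\tr$ in the context $b\lef\cdot\rig c$, with one memorizing valuation witnessing all five cases at once, is essentially the paper's own argument), and of the separations for $\fr/\rp$, $\rp/\con$, $\con/\wm$ and $\mem/\st$ is sound and follows the paper's method of concrete pairs plus separating valuations prescribed through $H_f$. But there is a genuine gap at the step $\equiv_{\wm}\;\subsetneq\;\equiv_{\mem}$, and it is not the one you flag. You locate the difficulty in the bookkeeping needed to verify that a prescribed $H_f$ satisfies \emph{all} weakly memorizing constraints; the real problem is that the pair your recipe produces cannot be separated by any valuation whatsoever. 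The typical example displayed for the memorizing variety is the right-oriented axiom $x\lef y\rig(z\lef u\rig(v\lef y\rig w))= x\lef y\rig(z\lef u\rig w)$, and in it the repeated occurrence of $y$ sits in the branch reached only when $y$ and the intervening $u$ have \emph{both} answered $\fa$, i.e.\ exactly when the intervening reply equals the first reply. Instantiate $y\mapsto a$, $u\mapsto b$ and the other variables by any closed terms: the two sides can differ only after replies $a=\fa$, $b=\fa$, and there the hypothesis of the weakly memorizing law is fulfilled, so it forces $y_a(\dd b(\dd a(H)))=\fa$ and $\dd a(\dd b(\dd a(H)))=\dd b(\dd a(H))$, making both sides equal to $w/\dd b(\dd a(H))$. (The case $u\mapsto a$ collapses by contraction, and $y,u\in\{\tr,\fa\}$ is trivial.) Every admissible instance is therefore already $\wm$-equivalent and cannot witness properness; your intended valuation ``making the intervening atom yield the opposite reply'' simply steers evaluation away from the only branch where the two sides differ.

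What is needed is a pair in which the repeated atom lies in the branch reached when the intervening atom returns the \emph{opposite} reply, so that the weakly memorizing law is silent while the memorizing law still contracts. This is exactly what the paper's witness does: $(\tr\lef b\rig(\fa\lef a\rig\tr))\lef a\rig\fa\;\equiv_{\mem}\;b\lef a\rig\fa$, where the inner $a$ is reached after replies $a=\tr$, $b=\fa$; a weakly memorizing valuation with $H_f(a)=\tr$ and $H_f(ab)=H_f(aba)=\fa$ then separates the two sides, and no $\wm$ constraint applies to it because the intervening reply differs from the first. Alternatively you could instantiate one of those symmetric variants of \axname{CPmem} (derived later in the paper, not the displayed typical axiom) in which the recurring condition is placed in the branch selected by the opposite reply. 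With that one repair, the rest of your proof goes through.
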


\begin{proof}
In this proof we assume that all \RVA s we use satisfy
$\tr\ne\fa$. We first consider the differences between
the mentioned equivalences:
\begin{enumerate}
\item
 $a\equiv_{\rp}a\lef a\rig\fa$, but $\equiv_{\fr}$ 
does not hold in this case as is witnessed by a \RVA\ 
with valuation $H$ that satisfies
$H_f(a)=\tr$ and $H_f(aa)=\fa$ (yielding
$a/H=\tr$ and $(a\lef a\rig\fa)/H=\fa$).

\item
$b\lef a\rig\fa\equiv_{\con}b\lef(a\lef a\rig\fa)\rig\fa$,
but $\equiv_{\rp}$ does not hold in 
the \RVA\ with element $H$ with
$H_f(a)=H_f(ab)=\tr$ and $H_f(aab)=\fa$.

\item
$(a\lef b\rig \fa)\lef a\rig\fa\equiv_{\wm}
b\lef a\rig\fa$, but $\equiv_{\con}$ does not hold in 
the \RVA\ with element $H$ with
$H_f(a)=H_f(ab)=\tr$ and $H_f(aba)=\fa$.

\item
$(\tr\lef b\rig (\fa\lef a\rig\tr))\lef a\rig\fa\equiv_{\mem}
b\lef a\rig\fa$, 
but $\equiv_{\wm}$ does not hold in 
the \RVA\ with element $H$ with
$H_f(a)=\tr$ and $H_f(ab)=H_f(aba)=\fa$.

\item
$a\equiv_{\st}a\lef b\rig a$
(distinguish all possible cases), but 
$\equiv_{\mem}$ does not hold as is witnessed
by the \RVA\ with element $H$ with
$H_f(a)=H_f(b)=\tr$ and $H_f(ba)=\fa$ (yielding 
$a/H=\tr$ and $(a\lef b\rig a)/H=\fa$). 
\end{enumerate}
Finally, for $K\in\{\fr,\rp,\con,\wm,\mem\}$, 
\(\tr  \equiv_{K}\tr \lef a \rig \tr,\)
but
$b\lef \tr\rig \tr\not\equiv_{K}
b\lef (\tr \lef a \rig \tr)\rig \tr$ 
as is
witnessed by the \RVA\ with element $H$ with
$H_f(a)=H_f(b)=\tr$ and $H_f(ab)=\fa$.
\end{proof}

The following proposition stems from~\cite{Chris} and 
can be used to deal with 
the difference between $K$-congruence and $K$-equivalence.

\begin{proposition}
\label{prop:3}
If $P\equiv_K Q$ and for all $\BA\in K$ and $H\in\BA$,
$\dd P(H)=\dd Q(H)$,
then $P=_K Q$.
\end{proposition}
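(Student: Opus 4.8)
Since $=_K$ is by definition the \emph{largest} congruence contained in $\equiv_K$, it suffices to exhibit \emph{some} congruence $R$ with $(P,Q)\in R\subseteq\;\equiv_K$; maximality then yields $R\;\subseteq\;=_K$ and hence $P=_K Q$. The plan is to take $R$ to be the relation pairing exactly those propositional statements that satisfy both hypotheses of the proposition, namely
\[
R=\{(P',Q')\mid P'\equiv_K Q'\ \text{ and }\ \dd{P'}(H)=\dd{Q'}(H)\text{ for all }\BA\in K,\ H\in\BA\}.
\]
By assumption $(P,Q)\in R$, and $R\;\subseteq\;\equiv_K$ is immediate from the first conjunct of the definition. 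That $R$ is an equivalence relation (reflexive, symmetric, transitive) is routine, since both defining conditions are themselves equivalences. The entire content of the proof therefore lies in verifying that $R$ is a \emph{congruence} for conditional composition, i.e.\ that it is closed under substitution into each of the three argument slots.

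\textbf{Congruence check.} Given $(P_i,Q_i)\in R$ for $i=1,2,3$, I would show $(P_1\lef P_2\rig P_3,\;Q_1\lef Q_2\rig Q_3)\in R$ by fixing an arbitrary $\BA\in K$ and $H\in\BA$ and unfolding the two semantic clauses for $\cdot/H$ and $\dd{\cdot}(H)$ defined in Section~\ref{sec:reval}. The key observation is that the pair in the \emph{central} (condition) position controls both the case split and the derived valuation: because $(P_2,Q_2)\in R$ we have $P_2/H=Q_2/H$, so the branch selected by $P_2$ agrees with that selected by $Q_2$, and $\dd{P_2}(H)=\dd{Q_2}(H)$, so the valuation handed to the chosen outer argument is the same on both sides. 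This is exactly the point where the derivative hypothesis is indispensable — it is precisely the condition that $\equiv_K$ alone fails to guarantee, which is why $\equiv_K$ need not be a congruence in the first place.

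\textbf{The crux.} In the branch where $P_2/H=\tr$ (hence $Q_2/H=\tr$), the clauses give $(P_1\lef P_2\rig P_3)/H=P_1/\dd{P_2}(H)$ and $\dd{(P_1\lef P_2\rig P_3)}(H)=\dd{P_1}(\dd{P_2}(H))$, and likewise for the $Q$-side with $\dd{Q_2}(H)=\dd{P_2}(H)$. I then match these against the $Q$-expressions using $(P_1,Q_1)\in R$: since the defining conditions of $R$ are universally quantified over \emph{all} valuations in \emph{all} algebras of $K$, and since $\dd{P_2}(H)$ is again a valuation of sort \RV\ lying in the \emph{same} algebra $\BA\in K$, I may instantiate $P_1\equiv_K Q_1$ and $\dd{P_1}=\dd{Q_1}$ at the derived valuation $\dd{P_2}(H)$ to obtain $P_1/\dd{P_2}(H)=Q_1/\dd{P_2}(H)$ and $\dd{P_1}(\dd{P_2}(H))=\dd{Q_1}(\dd{P_2}(H))$. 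The branch $P_2/H=\fa$ is symmetric, using $(P_3,Q_3)\in R$ in place of $(P_1,Q_1)$. This establishes both conjuncts required for membership in $R$, so $R$ is a congruence contained in $\equiv_K$, and maximality of $=_K$ finishes the argument. I expect the main obstacle to be purely a matter of bookkeeping the branch analysis; the one genuinely essential insight — that the derived valuation stays within $K$ so that the universally quantified hypotheses can be re-applied to it — is what makes the central-position case go through.
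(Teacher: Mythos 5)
Your proof is correct, and at its core it runs on the same computation as the paper's: the central-position case, where $P_2/H=Q_2/H$ forces both sides to select the same branch and $\dd{P_2}(H)=\dd{Q_2}(H)$ hands the same derived valuation to the selected outer argument. The difference is in the packaging, and it is worth noting. The paper verifies only the three one-level contexts ($P\lef S\rig R\equiv_K Q\lef S\rig R$, and so on, for arbitrary $S,R$) and then appeals directly to the definition of $=_K$ as the largest congruence contained in $\equiv_K$; strictly speaking this leaves a gap, because membership in the largest congruence requires $C[P]\equiv_K C[Q]$ for \emph{nested} contexts $C$ as well, and to iterate the one-level verification one must also know that the composed terms again have equal derivatives --- a fact the paper does not check inside this proof (the needed computation appears only in the remark following the soundness theorem). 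Your formulation closes exactly that gap: by making the relation $R$ carry \emph{both} conditions and proving $R$ closed under simultaneous substitution in all three argument slots --- preserving both conjuncts, not just $\equiv_K$ --- the extension to arbitrary contexts is automatic, and maximality of $=_K$ can be applied to the congruence $R$ as a whole. What the paper's terser route buys is brevity (no auxiliary relation, arbitrary co-arguments rather than related pairs); what yours buys is a genuinely self-contained argument. Your key observation, that $\dd{P_2}(H)$ is again a valuation in the same algebra $\BA\in K$ so that the universally quantified hypotheses can be instantiated at it, is indeed the point on which everything turns, and it is used silently in the paper's proof as well.
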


\begin{proof} Assume $P\equiv_K Q$ 
and the further requirement in the proposition is satisfied.
Since $=_K$ is 
defined as the largest congruence contained in 
$\equiv_K$, $P =_K Q$ if for all $S$ and $R$ 
the following three cases are true: 
\begin{align*}
P\lef S\rig R 
&\equiv_K Q\lef S\rig R,&
\\ 
S\lef P\rig R 
&\equiv_K S\lef Q\rig R,&
\\ 
S\lef R\rig P 
&\equiv_K S\lef R\rig Q.&
\end{align*}
The first and last case follow immediately. For the middle
case, derive
\begin{align*}
(S\lef P\rig R )/H&=
\begin{cases}
S/\dd P(H)&\text{if }P/H=\tr,\\
R/\dd P(H)&\text{if }P/H=\fa,
\end{cases}
\\
&=\begin{cases}
S/\dd Q(H)&\text{if }Q/H=\tr,\\
R/\dd Q(H)&\text{if }Q/H=\fa,
\end{cases}
\\
&=(S\lef Q\rig R )/H.
\end{align*}
\end{proof}

\section{Completeness for $=_{\rp}$ and
$=_{\con}$}
\label{sec:com}
In this section we give complete axiomatizations of 
repetition-proof valuation congruence and of
contractive valuation congruence.
We start with a basic result on free
valuation congruence
of basic forms.

\begin{lemma}
\label{lem:above}
For all basic forms $P$ and $Q$, $P=_{\fr}Q$ implies
$P\equiv Q$.
\end{lemma}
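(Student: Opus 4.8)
The plan is to work in the concrete free valuation algebra $\BA^\ast$ whose carrier of sort \RV\ is the set of all functions $A^+\to\BB$, with $\trVal,\faVal$ the constant functions, $y_a(f)=f(a)$ and $\dd a(f)=(\sigma\mapsto f(a\sigma))$. This is a \RVA, and since for any \RVA\ and valuation $H$ the quantities $P/H$ and $(\dd P(H))_f$ depend on $H$ only through $H_f$ (an easy induction), it suffices to reason inside $\BA^\ast$, where $H_f=H$. First I would reduce the hypothesis $P=_{\fr}Q$ to two pointwise conditions: for every $H\in\BA^\ast$,
\[P/H=Q/H\qquad\text{and}\qquad \dd P(H)=\dd Q(H).\]
The evaluation condition is immediate since $=_{\fr}\,\subseteq\,\equiv_{\fr}$. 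For the derivative condition, apply the congruence to the context $R\lef\,\cdot\,\rig S$: expanding $(R\lef P\rig S)/H$ and $(R\lef Q\rig S)/H$ and using $P/H=Q/H$ yields $R/\dd P(H)=R/\dd Q(H)$ for all $R$ (put $R$ in the branch selected by $P/H$, the other argument arbitrary). Letting $R$ range over the basic forms $R_\sigma$ defined by $R_c=\tr\lef c\rig\fa$ and $R_{c\sigma}=R_\sigma\lef c\rig R_\sigma$, which satisfy $R_\sigma/G=G(\sigma)$ for every $G$ and $\sigma\in A^+$, extracts $\dd P(H)=\dd Q(H)$ as an identity of functions on $A^+$.

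It then suffices to prove, by induction on the combined size of $P$ and $Q$, that these two conditions force $P\equiv Q$. The semantics I exploit is that evaluating a basic form $P$ at $H$ traces a single root-to-leaf path whose successive node labels spell a string $w=w(P,H)$, with $P/H$ the value of the leaf reached and $\dd P(H)=(\sigma\mapsto H(w\sigma))$. For the base case $P=\tr$: the evaluation condition forces every leaf of $Q$ to be $\tr$ (each leaf is reached by some $H$), while the derivative condition $\dd Q(H)=H$ forces $Q$ to have no internal node. Otherwise, forcing $Q$ along a fixed path gives a nonempty $w=w(Q,H)$ starting with $Q$'s root label $b$; since $|A|>1$ I may pick $d\in A$ with $d\ne b$ and set the (unconstrained) values $H(wd)\ne H(d)$, contradicting $H(wd)=\dd Q(H)(d)=H(d)$. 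Hence $Q=\tr$, and symmetrically for $P=\fa$; this also shows that if one of $P,Q$ is a leaf then so is the other.

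For the inductive step write $P=P_1\lef a\rig P_2$ and $Q=Q_1\lef b\rig Q_2$. The crux is $a=b$. Suppose $a\ne b$. For any $H$ the path strings $w(P,H)$ and $w(Q,H)$ begin with $a$ and $b$, so the prefix values forcing $P$'s path (all starting with $a$) are disjoint from those forcing $Q$'s path (all starting with $b$); I can therefore fix both paths $u=w(P,H)$ and $v=w(Q,H)$ independently. Then for any nonempty $\sigma$ the strings $u\sigma,v\sigma$ are distinct (different first letters) and unconstrained, so I set $H(u\sigma)\ne H(v\sigma)$, contradicting $H(u\sigma)=\dd P(H)(\sigma)=\dd Q(H)(\sigma)=H(v\sigma)$. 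Hence $a=b$. I then peel off the root using the substructure valuations $a_{\tr}\cdot G$ (value $\tr$ at $a$ and $G(\rho)$ at $a\rho$): since both roots are $a$, one checks $P/(a_{\tr}\cdot G)=P_1/G$ and $\dd P(a_{\tr}\cdot G)=\dd{P_1}(G)$, and likewise for $Q$, so $P_1,Q_1$ satisfy the two conditions for all $G$; $a_{\fa}\cdot G$ does the same for $P_2,Q_2$. The induction hypothesis gives $P_1\equiv Q_1$ and $P_2\equiv Q_2$, whence $P\equiv Q$.

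The main obstacle is exactly the root-agreement step. Free \emph{equivalence} is blind to redundant tests (e.g.\ $\tr\lef a\rig\tr\equiv_{\fr}\tr$), so the argument must lean entirely on the derivative, i.e.\ congruence, condition to recover the root label; the delicate point is arranging two forced evaluation paths simultaneously while retaining enough free string-values to break the derivative identity.
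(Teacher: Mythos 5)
Your proof is correct, and although it rests on the same semantic insight as the paper's proof --- namely that congruence, unlike plain $\fr$-equivalence, gives access to the derivative $\dd P(H)$ through conditional contexts, and that evaluation values together with derivatives pin down a basic form completely --- it is organized along a genuinely different route. The paper inducts directly on the hypothesis $P=_{\fr}Q$ and, in each distinguishing case, postulates an abstract \RVA\ containing a valuation with suitable properties (e.g.\ $y_b(\dd R(\dd a(H)))=\fa$ for all $R$), exploiting congruence only through the contexts $\cdot\andthen b$; you instead fix one concrete free valuation algebra of string functions $A^+\to\BB$, reduce the hypothesis once and for all to two pointwise conditions (evaluation equality, and derivative equality extracted via the test terms $R_\sigma$ --- in effect a converse of Proposition~\ref{prop:3} inside this model), and then run a self-contained combinatorial induction on those two conditions. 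This organization buys two things. First, explicitness: the distinguishing valuations whose existence the paper merely asserts are all constructed at once inside your string model. Second, and more substantively, it makes rigorous the paper's tersest step: in the same-root case the paper concludes ``for $a=b$ by induction the desired result'', but the structural induction hypothesis there needs $P_i=_{\fr}Q_i$, which is not a formal consequence of $P=_{\fr}Q$ (a congruence does not automatically descend to subterms); your peeling step with the forcing valuations $a_{\tr}\cdot G$ and $a_{\fa}\cdot G$ supplies exactly the missing argument, and it works precisely because your induction is phrased over the concrete-model conditions rather than over $=_{\fr}$ itself. What the paper's route buys in return is brevity and independence from any particular model. One small remark: your appeal to $|A|>1$ in the base case is not actually needed --- taking $d=b$, the value $H(wb)$ is still unconstrained once the path $w$ is fixed, so it can be set different from $H(b)$ --- but since the paper assumes $|A|>1$ throughout, this costs nothing.
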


\begin{proof}
We prove the lemma by structural 
induction on $P$ and $Q$.

If $P\equiv\tr$ then if $Q\equiv\tr$ there is nothing to prove.
If $Q\equiv\fa$ then $P\neq_{\fr} Q$ 
(consider a \RVA\ in which $\tr\neq\fa$). If 
$Q\equiv Q_1\lef a\rig Q_2$ then consider a \RVA\ with 
$\tr\neq\fa$ and with an
element $H$ that satisfies 
$y_b(H)=\tr$ and
$y_b(\dd R(\dd a(H)))=\fa$ for all $R$. Assume
$\tr=_{\fr}Q$, then also $(\tr\andthen b)=_{\fr}
((Q_1\lef a\rig Q_2)\andthen b)$ and hence 
$(\tr\andthen b)/H=\tr$ while $(Q\andthen b)/H=
y_b(\dd{Q_1}(\dd a(H)))=\fa$ which is a contradiction, hence
$\tr\neq_{\fr} Q$.

If $P\equiv\fa$ a similar argument applies.

If $P\equiv P_1\lef a\rig P_2$ then the cases $Q\equiv
\tr$ and $Q\equiv\fa$ 
can de dealt with as above. 
If $Q\equiv Q_1\lef b \rig Q_2$ we find for $a=b$
by induction the desired result, and otherwise  
consider a \RVA\ with $\tr\neq\fa$
and with an element $H$ that satisfies 
$y_a(H)=y_b(H)=\tr$,
$y_a(\dd R(\dd a(H)))=\fa$ and
$y_a(\dd R(\dd b(H)))=\tr$ for all $R$. Assume
$P=_{\fr}Q$, then also $(P\andthen a)/H=
(Q\andthen a)/H$ and hence $\fa=
y_a(\dd{P_1}(\dd a(H)))=
y_a(\dd{Q_1}(\dd b(H)))=\tr$, which is a contradiction, hence
$P\neq_{\fr} Q$.
\end{proof}

As a corollary we find a proof of Proposition~\ref{propo},
i.e., for basic forms,
provable equality in \CP\ and syntactic equality coincide:
\begin{proof}[Proof of Proposition~\ref{propo}.]
By soundness (Theorem~\ref{thm:sound})
it is sufficient to prove that for all
basic forms $P$ and $Q$, $P=_{\fr}Q$ implies
$P\equiv Q$, and this is proved in Lemma~\ref{lem:above}.
\end{proof}

It now easily follows that \CP\ axiomatizes
free valuation congruence:
\begin{theorem}[Completeness]
\label{thm:complete}
If $P=_{\fr}Q$ for propositional statements $P$ and $Q$,
then $\CP\vdash P=Q$.
\end{theorem}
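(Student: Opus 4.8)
The plan is to reduce the general completeness statement to the already-established result about basic forms. The key observation is that we have three ingredients ready to hand: Lemma~\ref{proplem} (every propositional statement equals a basic form in \CP), soundness (Theorem~\ref{thm:sound}), and Lemma~\ref{lem:above} (for basic forms, $=_{\fr}$ implies syntactic identity $\equiv$). The strategy is to sandwich an arbitrary pair $P,Q$ with $P=_{\fr}Q$ between their basic forms and push the equality down to the syntactic level where Lemma~\ref{lem:above} applies.

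**First I would** invoke Lemma~\ref{proplem} to obtain basic forms $P'$ and $Q'$ with $\CP\vdash P=P'$ and $\CP\vdash Q=Q'$. By soundness, provable equality in \CP\ implies $=_{\fr}$-congruence (since $P/H=P'/H$ and $\dd P(H)=\dd{P'}(H)$ for all valuations, hence $P$ and $P'$ are $\equiv_{\fr}$-equivalent, and indeed $=_{\fr}$-congruent). Therefore $P=_{\fr}P'$ and $Q=_{\fr}Q'$. Since $=_{\fr}$ is a congruence (in particular an equivalence relation), transitivity with the hypothesis $P=_{\fr}Q$ yields
\[
P'=_{\fr}P=_{\fr}Q=_{\fr}Q',
\]
so $P'=_{\fr}Q'$.

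**Next I would** apply Lemma~\ref{lem:above} to the basic forms $P'$ and $Q'$: from $P'=_{\fr}Q'$ we conclude $P'\equiv Q'$, i.e. $P'$ and $Q'$ are literally the same term. Consequently $\CP\vdash P'=Q'$ trivially (by reflexivity). Chaining the derivations gives
\[
\CP\vdash P=P'\equiv Q'=Q,
\]
that is, $\CP\vdash P=Q$, which is exactly the claim.

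**The only point requiring mild care** is justifying that $\CP$-provable equality entails $=_{\fr}$-congruence rather than merely $\equiv_{\fr}$-equivalence; this is where the second conjunct of soundness ($\dd P(H)=\dd Q(H)$) does the work, as already noted in the remark following Theorem~\ref{thm:sound} and in Proposition~\ref{prop:3}. Everything else is a routine transitivity argument, so I do not anticipate any genuine obstacle — the substance of the completeness result lives entirely in Lemma~\ref{lem:above}, which has already been proved.
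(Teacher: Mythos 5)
Your proof is correct and takes essentially the same route as the paper's: reduce $P$ and $Q$ to basic forms $P'$, $Q'$ via Lemma~\ref{proplem}, transfer the hypothesis to $P'=_{\fr}Q'$ by soundness, and conclude $P'\equiv Q'$ by Lemma~\ref{lem:above} (the paper cites Proposition~\ref{propo}, whose content is the same), whence $\CP\vdash P=Q$. Your additional care in noting that $\CP$-provability yields full $=_{\fr}$-congruence (not merely $\equiv_{\fr}$-equivalence) via the derivative clause of soundness and Proposition~\ref{prop:3} is a point the paper leaves implicit, but it does not change the argument.
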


\begin{proof}
Assume $P=_{\fr}Q$.
By Lemma~\ref{proplem} there are basic forms $P'$ and
$Q'$ with $\CP\vdash P=P'$ and $\CP\vdash Q=Q'$.
By soundness, $P'=_{\fr}Q'$ and by
Proposition~\ref{propo} $P'\equiv Q'$. Hence,
$\CP\vdash P=Q$.
\end{proof}

We proceed by discussing completeness results for the
other valuation varieties introduced in the previous section.

\begin{theorem}
Repetition-proof valuation congruence $=_{\rp}$ is 
axiomatized by the axioms in \CP\ (see Table~\ref{CP})
and these axiom schemes 
($a\in A$):
\begin{align*}
(\axname{CPrp1})\qquad
(x\lef a\rig y)\lef a\rig z&=(x\lef a\rig x)\lef a\rig z,\\
(\axname{CPrp2})\qquad
x\lef a\rig (y\lef a\rig z)&=x\lef a\rig (z\lef a \rig z).
\end{align*}
\end{theorem}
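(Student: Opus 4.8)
The statement has a soundness half and a completeness half, and I would route the completeness argument through a suitable normal form, exactly as Lemma~\ref{lem:above} does for $=_{\fr}$. For soundness, note first that $\equiv_{\fr}\subseteq\equiv_{\rp}$, so the largest congruence inside the former sits inside the largest congruence inside the latter, i.e.\ $=_{\fr}\,\subseteq\,=_{\rp}$; hence Theorem~\ref{thm:sound} already gives soundness of \CP\ for $=_{\rp}$, and only \axname{CPrp1} and \axname{CPrp2} remain. For an arbitrary repetition-proof \RVA\ and element $H$ I would evaluate both sides of each axiom using the clauses defining $P/H$ and $\dd P(H)$: at the inner occurrence of $a$ the defining identity $y_a(x)=y_a(\dd a(x))$ forces the inner reply to coincide with the outer one, so the discarded branch ($y$ in both axioms) is never entered and both $P/H=Q/H$ and $\dd P(H)=\dd Q(H)$ hold. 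Proposition~\ref{prop:3} then upgrades $\equiv_{\rp}$ to the congruence $=_{\rp}$.

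For completeness I would call a basic form an \rp-basic form if in every subterm $P_1\lef a\rig P_2$ the following holds: whenever $P_1$ itself has the shape $P_1'\lef a\rig P_1''$ with the \emph{same} atom $a$, then $P_1'\equiv P_1''$, and symmetrically for $P_2$. The first step is to show that every propositional statement is provably equal, using \CP, \axname{CPrp1} and \axname{CPrp2}, to an \rp-basic form: starting from a basic form (Lemma~\ref{proplem}), \axname{CPrp1} rewrites a left child of shape $P_1'\lef a\rig P_1''$ under an $a$-node into the stutter $P_1'\lef a\rig P_1'$, and \axname{CPrp2} does the mirror job on right children. To sidestep termination worries (each rewrite copies a sibling and may enlarge the term) I would not argue about the rewrite relation directly but instead define a recursive normalisation map on basic forms and prove, by structural induction, both that its value is an \rp-basic form and that it is provably equal to its argument.

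The core is the rigidity lemma, the \rp-analogue of Lemma~\ref{lem:above}: for \rp-basic forms $P,Q$, $P=_{\rp}Q$ implies $P\equiv Q$. I would prove this by structural induction on $P$ and $Q$, building the distinguishing valuations as functions $H_f:A^+\to\BB$ subject to the single constraint $H_f(\sigma aa)=H_f(\sigma a)$; this constraint is exactly what makes the associated element repetition-proof and it is preserved by $\dd a$ (since $(\dd a(H))_f(\sigma)=H_f(a\sigma)$), so every such $H_f$ realises an element of an \rp-\RVA. As in Lemma~\ref{lem:above}, when the root atoms differ or a contested subterm appears I would replace $P,Q$ by $P\andthen c,\,Q\andthen c$ for a fresh probe atom $c$ (legitimate because $=_{\rp}$ is a congruence) and choose $H_f$ so as to drive the evaluation down the path on which $P$ and $Q$ disagree.

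The main obstacle is precisely producing these witnesses \emph{inside} the \rp\ variety: the reply sequence along the driven path must respect $H_f(\sigma aa)=H_f(\sigma a)$, so one cannot flip the reply to an atom that was just queried, and the \rp-basic form condition is what guarantees the point of disagreement is reachable by a reply sequence compatible with this constraint—in a non-normalised basic form the discrepancy could hide in a branch that no repetition-proof valuation ever enters. The delicate case is when the two roots coincide, say $P=P_1\lef a\rig P_2$ and $Q=Q_1\lef a\rig Q_2$ with $P_1$ or $Q_1$ again beginning with $a$: the stutter normalisation then inserts an extra $\dd a$ step before the children are compared, and I would exploit this to extract $P_1=_{\rp}Q_1$ and $P_2=_{\rp}Q_2$ and close by the induction hypothesis. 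Once rigidity is available, completeness follows in the usual way: given $P=_{\rp}Q$, normalise both to \rp-basic forms $P',Q'$, use soundness to get $P'=_{\rp}Q'$, rigidity to get $P'\equiv Q'$, and conclude provable equality.
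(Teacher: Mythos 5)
Your proposal is correct and follows essentially the same route as the paper: soundness by evaluating both sides of \axname{CPrp1}/\axname{CPrp2} (value and derivative) and invoking Proposition~\ref{prop:3}, and completeness via exactly the paper's \rp-basic forms (your ``stutter'' condition is the paper's ``$P_i$ is of the form $a\andthen P'$'' clause), provable normalization, a rigidity lemma in the style of Lemma~\ref{lem:above}, and the standard assembly. If anything, you are more explicit than the paper where it is terse: the paper merely asserts that normalization ``follows by structural induction'' and that rigidity ``follows in the same way as in the proof of Lemma~\ref{lem:above}'', whereas you correctly isolate the genuine subtlety there, namely that the distinguishing valuations must themselves satisfy the repetition-proof constraint $H_f(\sigma aa)=H_f(\sigma a)$, which is precisely what the stutter condition in \rp-basic forms is designed to accommodate.
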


\begin{proof}
Let \BA\ be a \RVA\ in the variety $\rp$ of repretition-proof
valuations, thus
\[y_a(x)=y_a(\dd a(x)).\]
Concerning soundness, we only
check axiom scheme \axname{CPrp2} (the proof for
\axname{CPrp1} is very similar): let $H\in\BA$, then
\begin{align*}
(P\lef a\rig (Q\lef a\rig R))/H
&=\begin{cases}
P/\dd a(H)&\text{if $y_a(H)=\tr$},\\
(Q\lef a\rig R)/\dd a(H)&\text{if $y_a(H)=\fa$},
\end{cases}\\
&=\begin{cases}
P/\dd a(H)&\text{if $y_a(H)=\tr$},\\
R/\dd a(\dd a H))&\text{if $y_a(H)=\fa=y_a(\dd a (H)$},
\end{cases}
\\
&=(P\lef a\rig (R\lef a\rig R))/H,
\end{align*}
and
\begin{align*}
\dd {(P\lef a\rig (Q\lef a\rig R))}(H)
&=\begin{cases}
\dd P(\dd a(H))&\text{if $y_a(H)=\tr$},\\
\dd{(Q\lef a\rig R)}(\dd a(H))&\text{if $y_a(H)=\fa$},
\end{cases}\\
&=\begin{cases}
\dd P(\dd a(H))&\text{if $y_a(H)=\tr$},\\
\dd R(\dd a(\dd a H)))&\text{if $y_a(H)=\fa=y_a(\dd a (H)$},
\end{cases}
\\
&=\dd{(P\lef a\rig (R\lef a\rig R))}(H).
\end{align*}

In order to prove completeness we use a variant 
of basic forms, which we call
\rp-basic forms, that `minimizes' on
repetition-proof valuation congruence: 
\begin{itemize}
\item $\tr$ and $\fa$ are \rp-basic
forms, and
\item $P_1\lef a\rig P_2$ is an \rp-basic form if
$P_1$ and $P_2$ are \rp-basic forms, and if $P_i$ is not 
equal to $\tr$ or $\fa$, then either
the central
condition in $P_i$ is different from $a$, or 
$P_i$ is of the form $a\andthen P'$ with $P'$ an 
\rp-basic form. 
\end{itemize}
Each propositional statement can in $\CP_{\rp}$
be proved equal to an \rp-basic form (this follows
by structural induction). 
For $P$ and $Q$ \rp-basic forms, 
$P =_{\rp} Q$ implies $P\equiv Q$.
This follows in the same way as in the proof of 
Lemma~\ref{lem:above}. 

Assume $P=_{\rp}Q$, so there exist
\rp-basic forms $P'$ and
$Q'$ with $\CP_{\rp}\vdash P=P'$ and $\CP_{\rp}\vdash Q=Q'$.
By soundness, $P'=_{\fr}Q'$ and as argued above,
$P'\equiv Q'$. Hence, $\CP_{\rp}\vdash P=Q$.
\end{proof}

\begin{theorem}
Contractive valuation congruence $=_{\con}$
is axiomatized by the axioms in \CP\
(see Table~\ref{CP}) and these axiom schemes 
($a\in A$):
\begin{align*}
(\axname{CPcr1})\qquad
(x\lef a\rig y)\lef a\rig z&=x\lef a\rig z,\\
(\axname{CPcr2})\qquad
x\lef a\rig (y\lef a\rig z)&=x\lef a\rig z.
\end{align*}
\end{theorem}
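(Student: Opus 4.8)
The plan is to follow the two-stage template of the preceding ($=_{\rp}$) theorem: first verify soundness of \axname{CPcr1} and \axname{CPcr2} over the contractive variety, then prove completeness by reducing to a tailored class of normal forms. For soundness I would fix a contractive \RVA\ \BA\ and $H\in\BA$ and repeat the case analysis on $y_a(H)$, now using both defining equations of the variety. In the only interesting case of \axname{CPcr1}, namely $y_a(H)=\tr$, evaluating $(P\lef a\rig Q)\lef a\rig R$ first passes to $\dd a(H)$; repetition-proofness gives $y_a(\dd a(H))=y_a(H)=\tr$ so the inner $a$ again selects its left branch, and contractivity $\dd a(\dd a(H))=\dd a(H)$ collapses the state to yield $P/\dd a(H)=(P\lef a\rig R)/H$. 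Axiom \axname{CPcr2} is the symmetric case $y_a(H)=\fa$, and the matching identities for the generalized derivatives $\dd{(\cdot)}(H)$ follow from the same two equations; I would state these as routine.

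For completeness I would introduce \con-basic forms: $\tr$ and $\fa$ are \con-basic forms, and $P_1\lef a\rig P_2$ is a \con-basic form when $P_1,P_2$ are \con-basic forms and no $P_i$ differing from $\tr,\fa$ has central condition $a$ (so immediate repetition of an atom is forbidden outright, rather than merely restricted to the shape $a\andthen P'$ as for \rp). Starting from an ordinary basic form supplied by Lemma~\ref{proplem}, a structural induction then shows every propositional statement is $\CP_{\con}$-provably equal to a \con-basic form: for $P_1\lef a\rig P_2$ with $P_1,P_2$ already \con-basic, if the true branch has central condition $a$, say $P_1\equiv P_1'\lef a\rig P_1''$, then \axname{CPcr1} rewrites $(P_1'\lef a\rig P_1'')\lef a\rig P_2$ to $P_1'\lef a\rig P_2$, and symmetrically \axname{CPcr2} strips a leading $a$ from the false branch. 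Because $P_1,P_2$ were \con-basic the exposed subforms $P_1',P_1''$ already have central condition different from $a$, so one rewrite on each side suffices.

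The heart of the matter, and the step I expect to be the main obstacle, is the analogue of Lemma~\ref{lem:above}: for \con-basic forms $P,Q$, $P=_{\con}Q$ implies $P\equiv Q$. I would mimic the structural induction of Lemma~\ref{lem:above}, but now every separating valuation must itself be contractive, and a literal copy of that proof breaks down. Its witness for two compound forms with distinct root atoms sets $y_a(H)=\tr$ yet demands $y_a(\dd a(H))=\fa$, which contradicts the repetition-proof equation $y_a(x)=y_a(\dd a(x))$ that holds throughout the contractive variety. The cases that transfer unchanged are those whose witness flips a yield only across a \emph{different} atom's derivative: separating $\tr$ or $\fa$ from a compound form by appending $\andthen b$ with $b\neq a$ (available since $|A|>1$), and the equal-root-atom case, which the induction hypothesis reduces to the two branches.

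The genuinely new work lies in (i) the distinct-root-atom case and (ii) mounting a subform-separator after an initial $a$-step, and both are unlocked by the same structural fact: a \con-basic form never queries the same atom twice in immediate succession, so along every evaluation path the string of queried atoms is stutter-free, and on stutter-free strings a contractive valuation is completely unconstrained (the equations only ever collapse immediate repetitions). For (i) this lets me record which atom is queried first --- an attribute preserved under the contractive collapse --- by a valuation with $y_a$ constant (say $\fa$) on all states whose history begins with $a$ and $\tr$ on those beginning with $b$; appending $\andthen a$ then reads off the recorded bit, and repetition-proofness is respected precisely because $y_a$ now varies across the initial $b$-step rather than across an $a$-step. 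For (ii) the same freedom guarantees that any contractive separator of \con-basic subforms (whose first query is already different from $a$) can be realised as an $a$-contracted contractive valuation sitting after an $a$-step, so the induction closes. Granting this lemma, completeness finishes exactly as for $=_{\rp}$: from $P=_{\con}Q$ I obtain \con-basic forms $P',Q'$ with $\CP_{\con}\vdash P=P'$ and $\CP_{\con}\vdash Q=Q'$, infer $P'=_{\con}Q'$ by soundness, hence $P'\equiv Q'$ by the lemma, and conclude $\CP_{\con}\vdash P=Q$.
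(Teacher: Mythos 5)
Your proposal follows the paper's own route exactly: soundness by the same case analysis over contractive (hence repetition-proof) valuations, completeness via \con-basic forms, a separation lemma stating that $=_{\con}$-congruent \con-basic forms are syntactically identical, and the standard assembly of these pieces. Where the paper merely asserts that the separation lemma ``follows in the same way as'' Lemma~\ref{lem:above}, you correctly observe that the free-case witnesses there violate $y_a(x)=y_a(\dd a(x))$ and must be replaced by genuinely contractive ones (e.g.\ your witness recording the first queried atom, exploiting that \con-basic forms query stutter-free atom sequences), which is a sound and necessary refinement of what the paper leaves implicit.
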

These schemes contract for each $a\in A$
respectively the \tr-case and the \fa-case,
and immediately imply \axname{CPrp1}
and \axname{CPrp2}.

\begin{proof}
Let \BA\ be a \RVA\ in the variety $\con$ of contractive 
valuations, i.e., for all $a\in A$,
\[\dd a(\dd a(x))
=\dd a(x)
\qquad\text{and}\qquad y_a(x)=y_a(\dd a(x)).
\]
Concerning soundness we only check axiom scheme 
\axname{CPcr1}: let $H\in \BA$, then
\begin{align*}
((P\lef a\rig Q)\lef a\rig R)/H
&=\begin{cases}
(P\lef a\rig Q)/\dd a(H)&\text{if $y_a(H)=\tr$},\\
R/\dd a(H)&\text{if $y_a(H)=\fa$},
\end{cases}\\
&=\begin{cases}
P/\dd a(H)&\text{if $y_a(H)=\tr=y_a(\dd a (H))$},\\
R/\dd a(H)&\text{if $y_a(H)=\fa$},\\
\end{cases}\\
&=(P\lef a\rig R)/H,
\end{align*}
and
\begin{align*}
\dd{((P\lef a\rig Q)\lef a\rig R)}(H)
&=\begin{cases}
\dd{(P\lef a\rig Q)}(\dd a(H))&\text{if $y_a(H)=\tr$},\\
\dd R(\dd a(H))&\text{if $y_a(H)=\fa$},
\end{cases}\\
&=\begin{cases}
\dd P(\dd a(\dd a(H)))&\text{if $y_a(H)=\tr=y_a(\dd a (\dd a(H)))$},\\
\dd R(\dd a(H))&\text{if $y_a(H)=\fa$},\\
\end{cases}\\
&=\begin{cases}
\dd P(\dd a(H))&\text{if $y_a(H)=\tr=y_a(\dd a (H))$},\\
\dd R(\dd a(H))&\text{if $y_a(H)=\fa$},\\
\end{cases}\\
&=\dd {(P\lef a\rig R)}(H).
\end{align*}

In order to prove completeness we again use a variant 
of basic forms, which we call
\con-basic forms, that `minimizes' on
contractive valuation congruence: 
\begin{itemize}
\item $\tr$ and $\fa$ are \con-basic
forms, and
\item $P_1\lef a\rig P_2$ is a \con-basic form if
$P_1$ and $P_2$ are \con-basic forms, and if $P_i$ is not 
equal to $\tr$ or $\fa$, the central
condition in $P_i$ is different from $a$. 
\end{itemize}
Each propositional statement can in $\CP_{\con}$
be proved equal to a \con-basic form (this follows
by structural induction). 
For $P$ and $Q$ \con-basic forms, 
$P =_{\con} Q$ implies $P\equiv Q$.
This follows in the same way as in the proof of 
Lemma~\ref{lem:above}. 

Assume $P=_{\con}Q$, so there exist
\con-basic forms $P'$ and
$Q'$ with $\CP_{\con}\vdash P=P'$ and $\CP_{\con}\vdash Q=Q'$.
By soundness, $P'=_{\con}Q'$ and as argued above,
$P'\equiv Q'$. Hence, $\CP_{\con}\vdash P=Q$.
\end{proof}

\section{Completeness for $=_{\wm}$}
\label{sec:wm}
In this section we give a complete axiomatization of 
weakly memorizing valuation congruence.

\begin{theorem}
\label{thm:wm}
Weakly memorizing valuation congruence $=_{\wm}$
is axiomatized by the axioms in $\CP_{\con}$ 
and these axiom schemes ($a,b\in A$):
\begin{align*}
(\axname{CPwm1})\qquad
((x\lef a\rig y)\lef b\rig z)\lef a\rig v
&=(x\lef b\rig z)\lef a\rig v,\\
(\axname{CPwm2})\qquad
x\lef a\rig(y\lef b\rig(z\lef a\rig v))
&=x\lef a\rig(y\lef b\rig v)
\end{align*}
We write $\CP_{\wm}$ for this set of axioms.
\end{theorem}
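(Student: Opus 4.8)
The plan is to follow the same two-part pattern used for $=_{\rp}$ and $=_{\con}$. First I would prove soundness of \axname{CPwm1} and \axname{CPwm2} over the variety $\wm$, and then completeness by introducing a class of \emph{\wm-basic forms} serving as canonical representatives, establishing that (i) every propositional statement is provably equal in $\CP_{\wm}$ to a \wm-basic form, and (ii) for \wm-basic forms $P,Q$, $P=_{\wm}Q$ implies $P\equiv Q$. Granting these, the argument closes as before: if $P=_{\wm}Q$, reduce both to \wm-basic forms $P',Q'$ with $\CP_{\wm}\vdash P=P'$ and $\CP_{\wm}\vdash Q=Q'$; by Theorem~\ref{thm:sound} $P'=_{\wm}Q'$; by (ii) $P'\equiv Q'$; hence $\CP_{\wm}\vdash P=Q$.

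For soundness I would fix an \RVA\ \BA\ in $\wm$ and $H\in\BA$ and verify, by case analysis on the replies $y_a(H)$ and $y_b(\dd a(H))$, that each instance of the two schemes satisfies $P/H=Q/H$ and $\dd P(H)=\dd Q(H)$, exactly as for \axname{CPcr1}/\axname{CPcr2}. The only non-immediate branch is the one in which the outer condition $a$ and the intermediate condition $b$ return the \emph{same} value; there the hypothesis $y_b(\dd a(H))=y_a(H)$ of the weak-memorizing axiom holds, giving $\dd a(\dd b(\dd a(H)))=\dd b(\dd a(H))$ and $y_a(\dd b(\dd a(H)))=y_a(H)$, which is precisely what forces the inner $a$-test and allows its contraction. \axname{CPwm1} is the case where this common value is \tr\ (contraction in the left branch) and \axname{CPwm2} the case where it is \fa\ (contraction in the right branch).

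For completeness, a \wm-basic form is a basic form in which no atom is tested while its value is already fixed by memorization: reading along a path, once $a$ has been tested with reply $r$ and every subsequent test up to a later occurrence of $a$ also replies $r$ (that is, the path follows $r$-branches throughout), that later $a$ is forced and is disallowed. The crux of step (i) is that such forced tests really can be removed in $\CP_{\wm}$, even though \axname{CPwm1} and \axname{CPwm2} mention only a \emph{single} intermediate atom. I would obtain this by deriving, for every $n\ge 1$, the $n$-intermediate memorizing identity
\[
x\lef a\rig\phi[z\lef a\rig v]=x\lef a\rig\phi[v],
\]
where $\phi[\cdot]=y_1\lef b_1\rig(\cdots(y_n\lef b_n\rig[\cdot])\cdots)$ is a chain of right branches, by induction on $n$: the base case is \axname{CPwm2}, and for the step one first applies \axname{CPwm2} \emph{backwards} to insert a fresh $a$-test directly below $b_1$, then invokes the induction hypothesis on the shorter chain sitting under this inserted $a$, and finally applies \axname{CPwm2} \emph{forwards} to delete the inserted $a$ again. (The \tr-memorizing variant is derived the same way from \axname{CPwm1}.) Since each removal of a forced test is provable and strictly decreases the number of nodes, iterating these identities terminates in a \wm-basic form provably equal to the original.

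Step (ii) is where the real work lies, and I expect it to be the main obstacle. As in Lemma~\ref{lem:above} I would proceed by structural induction on two \wm-basic forms $P$ and $Q$, and in each case in which they differ exhibit, possibly after appending a context of the form $\andthen b$ as in Lemma~\ref{lem:above}, a valuation $H$ for which the two sides evaluate differently. The difficulty, absent in the free, repetition-proof and contractive cases, is that $H$ must be genuinely \emph{weakly memorizing}: passing to the string description $H_f:A^+\to\BB$, it must be invariant under collapsing consecutive equal atoms and must satisfy the closure ``$H_f(\tau a)=H_f(\tau ab)$ implies $H_f(\tau aba\rho)=H_f(\tau ab\rho)$''. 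The key observation is that the defining property of \wm-basic forms---that no test is ever forced---is exactly what ensures the finitely many string values needed to separate $P$ and $Q$ can be prescribed without triggering this closure, so that they extend to a consistent weakly memorizing $H_f$ and hence to an \RVA\ in $\wm$. Verifying, case by case, that these prescriptions never clash with the closure rule is the technical heart of the proof; once it is in place, $P=_{\wm}Q$ forces $P\equiv Q$ and completeness follows as above.
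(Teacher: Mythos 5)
Your proposal is correct and takes essentially the same approach as the paper: soundness by the same reply-case analysis, completeness via \wm-basic forms (your path-based definition is equivalent to the paper's condition $a\notin pos(P)\cup neg(Q)$), and your insert/apply-IH/delete derivation of the $n$-intermediate identities is the same right-to-left use of \axname{CPwm1}/\axname{CPwm2} that the paper packages as a saturation step (replacing each intermediate atom $b$ by $a\lef b\rig\fa$, then contracting with \axname{CPwm1} and \axname{CPcr1}). As for the step you flag as the main obstacle --- that $P=_{\wm}Q$ implies $P\equiv Q$ for \wm-basic forms --- this is Proposition~\ref{prop:wm2}, which the paper states \emph{without proof}, so your sketch there is not a gap relative to the paper's own treatment.
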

Before giving a proof we
discuss some characteristics of $\CP_{\wm}$. 
We define a special type of basic forms. 

Let $P$ be a
basic form. Define $pos(P)$ as
the set of
atoms that occur at left-hand (positive) positions in 
 $P$:
$pos(\tr)=pos(\fa)=\emptyset$ and $pos(P\lef a\rig Q)=\{a\}\cup pos(P)$, 
and define $neg(P)$ as the set of atoms 
that occur at right-hand (negative) positions in $P$:
$neg(\tr)=neg(\fa)=\emptyset$ and $neg(P\lef a\rig Q)=\{a\}\cup neg(Q)$. 

Now \wm-basic forms are defined as follows:
\begin{itemize}
\item $\tr$ and $\fa$ are \wm-basic
forms, and
\item $P\lef a\rig Q$ is a \wm-basic form if
$P$ and $Q$ are \wm-basic forms and $a\not \in pos(P)\cup
neg(Q)$.
\end{itemize}
The idea is that in a \wm-basic form, as long as the evaluation 
of consecutive atoms keeps yielding the same reply, no atom 
is evaluated twice. Clearly, each \wm-basic form also is a 
\con-basic form, but not vice versa, e.g., 
$\tr\lef a\rig(\tr\lef b\rig(\tr\lef a\rig\fa))$ is not a 
\wm-basic form because $a\in neg(\tr\lef b\rig(\tr\lef a\rig\fa))$. 
A more intricate example is one in which $a$ and
$b$ ``alternate":
\[(\tr\lef b\rig[(\fa\lef b\rig(\tr\lef a\rig\fa))\lef a\rig\tr])\lef a\rig\fa\]
is a \wm-basic form because 
$pos(\tr\lef b\rig[(\fa\lef b\rig(\tr\lef a\rig\fa))\lef a\rig\tr])
=\{b\}\not\ni a$ and
$\tr\lef b\rig[(\fa\lef b\rig(\tr\lef a\rig\fa))\lef a\rig\tr]$
is a \wm-basic form, where the latter statement follows
because 
$neg((\fa\lef b\rig(\tr\lef a\rig\fa))\lef a\rig\tr)=\{a\}\not\ni b$ 
and because
$\fa\lef b\rig(\tr\lef a\rig\fa)$ is
clearly a \wm-basic form.

\begin{proposition}
\label{prop:wm}
For each propositional statement $P$ there is a \wm-basic form
$P'$ with $\CP_{\wm}\vdash P=P'$.
\end{proposition}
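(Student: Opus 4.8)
The plan is to prove this by structural induction on $P$, exactly along the lines of Lemma~\ref{proplem}. The cases $P\equiv\tr$, $P\equiv\fa$ and $P\equiv a$ are immediate (recall $a=\tr\lef a\rig\fa$, and $\tr,\fa$ are \wm-basic forms). For $P\equiv P_1\lef P_2\rig P_3$ the induction hypothesis yields \wm-basic forms $Q_i$ with $\CP_{\wm}\vdash P_i=Q_i$, and a secondary structural induction on $Q_2$ (using CP1, CP2 and CP4, as in Lemma~\ref{proplem}) reduces $Q_1\lef Q_2\rig Q_3$ to compositions of the shape $S_1\lef a\rig S_2$ in which $a\in A$ and $S_1,S_2$ are already \wm-basic forms. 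So everything comes down to the following core step: given \wm-basic forms $S_1,S_2$ and an atom $a$, produce a \wm-basic form provably equal to $S_1\lef a\rig S_2$. Since $S_1\lef a\rig S_2$ fails to be \wm-basic only because $a$ may occur in $pos(S_1)$ or in $neg(S_2)$, the task splits into two symmetric halves: eliminate $a$ from the positive spine of the left argument, and from the negative spine of the right argument. I would treat the positive side; the negative side is identical after replacing CPcr1/CPwm1 by CPcr2/CPwm2.

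For the positive side I would clean the left spine top-down. If the root of $S_1$ is $a$, i.e. $S_1\equiv U\lef a\rig V$, then CPcr1 gives $(U\lef a\rig V)\lef a\rig S_2=U\lef a\rig S_2$, deleting that spine node. If $a$ sits two levels down, i.e. $S_1\equiv(U\lef a\rig V)\lef b\rig W$ with $b\ne a$, then CPwm1 gives $((U\lef a\rig V)\lef b\rig W)\lef a\rig S_2=(U\lef b\rig W)\lef a\rig S_2$, again deleting the offending node. The crucial bookkeeping observation is that each such rewrite only shrinks the positive-spine atom set of the left argument (the deleted node disappears and nothing is added), so every ancestor side-condition of the form ``$e\notin pos(\cdot)\cup neg(\cdot)$'' that held before still holds, and the number of occurrences of $a$ on the positive spine strictly decreases. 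Hence this measure simultaneously guarantees that \wm-basic-ness is preserved throughout and that the procedure terminates, leaving a \wm-basic $S_1'$ with $a\notin pos(S_1')$.

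The one gap in the scheme above, and the technical heart of the proof, is that CPcr1 and CPwm1 only remove an $a$ occurring at spine-depth one or two; an $a$ hidden behind two or more non-$a$ atoms — as in the valid identity $P\lef a\rig(Q\lef b\rig(R\lef c\rig(S\lef a\rig V)))=P\lef a\rig(Q\lef b\rig(R\lef c\rig V))$ — is not reachable by a single axiom. I would close this gap by first establishing, by induction on the number $k$ of intermediate atoms, a generalized contraction rule: an $a$ separated from the outer $a$ by atoms $c_1,\dots,c_k\ne a$ on the spine may be deleted. The base case $k=1$ is exactly CPwm1. For the inductive step the trick is to read CPwm1 \emph{from right to left}: this inserts a fresh copy of $a$ at spine-depth two (bridging over $c_1$), which brings the deep $a$ to within $k-1$ intermediate atoms of a new outer $a$; the induction hypothesis then removes the deep $a$, and a forward application of CPwm1 removes the temporary copy again, introducing no net new spine occurrence of $a$. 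The negative analogue is derived in the same way from CPwm2, matching the fact that a weakly-memorizing valuation propagates a remembered reply across several matching intermediate atoms by iterating the basic memorization property. With these generalized rules in hand the top-down cleaning of the previous paragraph applies verbatim to spine occurrences at any depth, the positive and negative halves are independent (they touch the disjoint leftmost and rightmost paths), and combining them turns $S_1\lef a\rig S_2$ into a \wm-basic form, completing the induction.
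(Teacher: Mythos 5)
Your proof is correct, and its skeleton is the same as the paper's: reduce to basic forms, induct structurally so that the problem becomes turning $S_1\lef a\rig S_2$ (with $S_1,S_2$ already \wm-basic) into a \wm-basic form, and handle the positive spine of $S_1$ and the negative spine of $S_2$ symmetrically with \axname{CPcr1}/\axname{CPwm1} and \axname{CPcr2}/\axname{CPwm2}. Where you diverge is in the mechanism for reaching an $a$ buried more than two levels down the spine. The paper does this by a one-shot \emph{saturation}: every spine atom $b\ne a$ of $S_1$ is replaced by $a\lef b\rig\fa$ (which, after unfolding with \axname{CP4}, is exactly a right-to-left application of \axname{CPwm1}), after which every $a$ on the spine sits within reach of \axname{CPcr1}/\axname{CPwm1} and all of them are retracted; the paper only illustrates this by an example. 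You instead derive a generalized contraction rule by induction on the number $k$ of intermediate atoms, where each inductive step is one backwards \axname{CPwm1} (inserting a bridging $a$), one appeal to the induction hypothesis, and one forward \axname{CPwm1} (removing the bridge). The two devices rest on the identical trick — using \axname{CPwm1} from right to left — so this is a reorganization rather than a new idea, but your version buys two things the paper leaves implicit: a precise statement of the derived rule that closes the depth gap, and the explicit bookkeeping that each deletion only shrinks $pos(\cdot)$/$neg(\cdot)$ sets, so \wm-basicness of the surrounding term is preserved and the procedure terminates. (Note also that since all atoms on the positive spine of a \wm-basic form are pairwise distinct, your measure argument is needed only once per side.)
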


\begin{proof}
By Proposition~\ref{propo} we may assume that $P$ is a 
basic form and we proceed by structural induction on $P$. 
If $P=\tr$ or $P=\fa$ we are done. Otherwise,
$P=P_1\lef a\rig P_2$ and we may assume that $P_i$ are
\wm-basic forms (if not, they can proved equal to \wm-basic forms). 
We first consider the positive side of $P$. If $a\not\in
pos(P_1)$ we are done, otherwise we saturate
$P_1$ by replacing each atom $b\ne a$ that occurs in a positive
position with $(a\lef b\rig\fa)$ 
using axiom \axname{CPwm1}.
After this way we can retract 
each $a$ that is in $pos(P_1)$
(also using \axname{CPcr1}) and end up with $P_1'$ that does 
not contain $a$ on positive positions. For example,
\begin{align*}
&(((\tr\lef a\rig R)\lef b\rig S)\lef c\rig V)\lef a \rig P_2\\
&~=(((\tr\lef a\rig R)\lef (a\lef b\rig\fa)\rig S)\lef (a\lef c\rig\fa)\rig V)\lef a \rig P_2\\
&~=(((((\tr\lef a\rig R)\lef a\rig S)\lef b\rig S)\lef a\rig V)\lef c\rig V)\lef a\rig P_2\\
&~=(((\tr\lef b\rig S)\lef a\rig V)\lef c\rig V)\lef a\rig P_2\\
&~=((\tr\lef b\rig S)\lef c\rig V)\lef a \rig P_2.
\end{align*}

Following the same procedure
for the negative side of $P$ (saturation with 
($\tr\lef b\rig a)$ 
for all $b\ne a$ etc.) yields a \wm-basic form
$P_1'\lef a\rig P_2'$ with 
$\CP_{\wm}\vdash P=P_1'\lef a\rig P_2'$.
\end{proof}

We state without proof:
\begin{proposition}
\label{prop:wm2}
For all \wm-basic forms $P$ and $Q$, $P=_{\wm} Q$
implies $P\equiv Q$.
\end{proposition}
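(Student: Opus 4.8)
The plan is to prove the contrapositive: for \wm-basic forms $P$ and $Q$, if $P\not\equiv Q$ then $P\ne_{\wm}Q$, by exhibiting a one-step context together with a weakly memorizing valuation that separates them. The argument runs by structural induction on $P$ and $Q$, in the style of Lemma~\ref{lem:above}. The essential difference with the free case is that a \wm-valuation is \emph{not} free to assign arbitrary replies: once an atom $a$ has been evaluated, its reply stays \emph{memorized} as long as the intervening atoms keep yielding that same reply. I exploit two observations that are available under the congruence $=_{\wm}$ (which is strictly finer than $\equiv_{\wm}$): the \emph{value} $P/H$, and---since $=_{\wm}$ is a congruence---the \emph{derivative} $\dd P(H)$, which is probed by placing $P$ as a central condition. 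Concretely, from $P=_{\wm}Q$ one obtains $(P\andthen c)=_{\wm}(Q\andthen c)$ for every atom $c$, hence $y_c(\dd P(H))=y_c(\dd Q(H))$ for all \wm-valuations $H$; a separator of either kind (a value or such a probe) therefore suffices to conclude $P\ne_{\wm}Q$.

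The first ingredient is a \textbf{realizability lemma}: along any single root-to-leaf path of a \wm-basic form the replies may be prescribed freely and the result extended to a genuine \wm-valuation, and an atom appended as a probe at the leaf may likewise be set freely \emph{unless} the memorization forces its value. This is exactly where the side conditions defining \wm-basic forms are used. The node atoms on the all-$\tr$ evaluation path of a basic form are precisely $pos(\cdot)$, and those on the all-$\fa$ path are $neg(\cdot)$, so the clause $a\notin pos(P_1)\cup neg(P_2)$ says that, after reading $a$ and continuing with constant replies, one never re-encounters $a$. Hence along any path of a \wm-basic form no atom is re-evaluated while still memorized, and the prescribed replies are always mutually consistent. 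I also need the iterated form of the weakly memorizing axiom---if $a$ yields $r$ and then several atoms all yield $r$, a subsequent $a$ is forced to $r$ and is state-preserving---which follows from \axname{CPwm1} and \axname{CPwm2} by a straightforward induction on the number of intermediate atoms; this should be recorded first.

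With these tools the non-inductive cases are direct: distinct leaves are separated by $\tr\ne\fa$, and a leaf against a node is separated by routing down a constant path to a leaf and, if needed, probing. For the inductive step with \emph{distinct} top atoms, say $P\equiv P_1\lef a\rig P_2$ and $Q\equiv Q_1\lef b\rig Q_2$ with $a\ne b$, I take $H$ with $y_a(H)=\tr$ and $y_b(H)=\fa$, route $P$ down the all-$\tr$ path of $P_1$ (whose atoms avoid $a$, as $a\notin pos(P_1)$) so that the appended probe $a$ is \emph{forced} to $\tr$ by iterated memorization, and route $Q$ down the all-$\fa$ path of $Q_2$ so that the same probe $a$ evaluates to $\fa$. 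Then $(P\andthen a)/H=\tr\ne\fa=(Q\andthen a)/H$, whence $P\ne_{\wm}Q$.

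For the inductive step with a \emph{common} top atom $a$, assume without loss of generality $P_1\not\equiv Q_1$. By induction there is a separating \wm-valuation $G$ (possibly with a probe) for $P_1$ and $Q_1$, and the crux is to \emph{lift} it by prepending an $a$-step with reply $\tr$: one needs an $a$-true-memorized valuation $G'=\dd a(H)$, with $y_a(H)=\tr$, agreeing with $G$ on the evaluation and derivative-probe of $P_1$ and $Q_1$. This is the principal obstacle, and it is precisely here that $a\notin pos(P_1)\cup pos(Q_1)$ is indispensable: since $a$ does not lie on the all-$\tr$ paths of $P_1,Q_1$, any evaluation of $P_1$ or $Q_1$ re-reads $a$ only \emph{after} a reply has already differed from $\tr$, at which point the memorization of $a$ is broken and its reply is again free to match $G$. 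Hence $G$ can be adjusted to an $a$-true-memorized $G'$ without disturbing the separation, and prepending $a=\tr$ turns it into a separator for $P$ and $Q$; the case $P_2\not\equiv Q_2$ is symmetric, using $y_a(H)=\fa$ and $a\notin neg(P_2)\cup neg(Q_2)$. Making this lifting precise---tracking which atoms are currently memorized and verifying, via the realizability lemma, that the inner separator can always be reconciled with that memorization context (notably when the inner separation is itself a derivative-probe whose probe atom is $a$)---is the main technical burden; the remaining bookkeeping is routine but must be carried out with the same path analysis.
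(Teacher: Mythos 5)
The paper states Proposition~\ref{prop:wm2} \emph{without proof}, so your attempt has to be judged on its own merits rather than against a reference argument. Your overall strategy is reasonable: the contrapositive, the restriction to separators that are either a value difference or a single derivative-probe $\_\andthen c$ (sound, since $P=_{\wm}Q$ forces $(P\andthen c)\equiv_{\wm}(Q\andthen c)$), the iterated form of the weakly memorizing condition (which does follow by the induction you indicate), and the key combinatorial observation that along any evaluation path of a \wm-basic form an atom is never re-read while its memorization is intact --- this is exactly what $a\notin pos(P_1)\cup neg(P_2)$ buys, and your distinct-root-atom case goes through on that basis.

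However, the central claim of your lifting step is false as stated, and this is a genuine gap rather than deferred bookkeeping. You assert that the inductively given separator $G$ for $P_1\not\equiv Q_1$ ``can be adjusted to an $a$-true-memorized $G'$ without disturbing the separation'' because $a$ is re-read inside $P_1,Q_1$ only after memorization is broken; but that justification covers occurrences of $a$ \emph{inside} the evaluation paths, not the probe at the end. Take $A=\{a,b\}$, $P_1\equiv\tr$, $Q_1\equiv\tr\lef b\rig\tr$, so that $P\equiv\tr\lef a\rig\fa$ and $Q\equiv(\tr\lef b\rig\tr)\lef a\rig\fa$ are \wm-basic and your induction hits the pair $(P_1,Q_1)$. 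Their values agree under every valuation, and probing with $b$ is useless since $y_b(\dd b(H))=y_b(H)$ is forced; the only single-probe separator uses atom $a$, for instance $G$ with $y_a(G)=\fa$ and $y_a(\dd b(G))=\tr$. Now \emph{every} $a$-true-memorized valuation $G'=\dd a(H)$ has $y_a(G')=\tr$ by repetition-proofness, hence $y_a(\dd{P_1}(G'))=\tr$ is forced: no $G'$ whatsoever agrees with $G$ on the derivative-probe of $P_1$, so ``its reply is again free to match $G$'' is exactly wrong here. The conclusion survives only because a \emph{different} separator exists in the memorized context (take $y_b(G')=\fa$, breaking the memorization of $a$, and then $y_a(\dd b(G'))$ is free to be $\fa$, giving probe values $(\tr,\fa)$ instead of $(\fa,\tr)$) --- but nothing in your argument produces it, and since the induction hypothesis may hand you the incompatible separator, the induction as set up does not close. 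The repair needs a genuinely stronger statement carried through the induction: non-identical \wm-basic forms are separable under every \emph{compatible initial memorization window} (a set of atoms memorized with a common value, disjoint from the relevant $pos$ or $neg$ set), with the forced-probe case resolved by exhibiting a swapped-value separator. Two smaller gaps compound this: your realizability lemma is only asserted (it needs an explicit \RVA\ in the \wm\ variety, e.g.\ states given by a memorized set with common value together with a history-indexed oracle), and as stated it licenses prescriptions along one path, whereas every separation you construct prescribes replies along the two evaluation paths of $P$ and of $Q$ under one and the same valuation $H$.
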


\begin{proof}[Proof of Theorem~\ref{thm:wm}.]
Let \BA\ be a \RVA\ in the variety $\wm$ of weakly 
memorizing valuations, thus for all $a,b\in A$,
\[y_b(\dd a (x))=y_a(x)\rightarrow[\dd a(\dd b(\dd a(x)))
=\dd b(\dd a(x))
~\wedge~
y_a(\dd b(\dd a(x)))=y_a(x)],\]
and all equations for contractive valuations hold in \BA.
The soundness of axiom \axname{CPwm1} and
\axname{CPwm2} follows immediately. We prove the latter one:
let $H\in\BA$, then
\begin{align*}
(P\lef a\rig(&Q\lef b\rig(Z\lef a\rig V)))/H\\
&=\begin{cases}
P/\dd a(H)&\text{if $y_a(H)=\tr$},\\
Q/\dd b(\dd a(H))&\text{if $y_a(H)=\fa$ and $y_b(\dd a(H))=\tr$},\\
(Z\lef a\rig V)/\dd b(\dd a(H))&\text{if $y_a(H)=\fa$ and $y_b(\dd a(H))=\fa$},
\end{cases}\\
&=\begin{cases}
P/\dd a(H)&\text{if $y_a(H)=\tr$},\\
Q/\dd b(\dd a(H))&\text{if $y_a(H)=\fa$ and $y_b(\dd a(H))=\tr$},\\
V/\dd b(\dd a(H))&\text{if $y_a(H)=\fa$ and $y_b(\dd a(H))=\fa$},
\end{cases}\\
&=(P\lef a\rig (Q\lef b\rig V))/H,
\end{align*}
and \quad$\dd{(P\lef a\rig(Q\lef b\rig(Z\lef a\rig V)))}(H)$
\begin{align*}
\hspace{15mm}&=\begin{cases}
\dd P(\dd a(H))&\text{if $y_a(H)=\tr$},\\
\dd Q(\dd b(\dd a(H)))&\text{if $y_a(H)=\fa$ and $y_b(\dd a(H))=\tr$},\\
\dd {(Z\lef a\rig V)}(\dd b(\dd a(H)))&\text{if $y_a(H)=\fa$ and $y_b(\dd a(H))=\fa$},
\end{cases}\\
&=\begin{cases}
\dd P(\dd a(H))&\text{if $y_a(H)=\tr$},\\
\dd Q(\dd b(\dd a(H)))&\text{if $y_a(H)=\fa$ and $y_b(\dd a(H))=\tr$},\\
\dd V(\dd b(\dd a(H)))&\text{if $y_a(H)=\fa$ and $y_b(\dd a(H))=\fa$},
\end{cases}\\
&=\dd {(P\lef a\rig (Q\lef b\rig V))}(H).
\end{align*}

In order to prove completeness assume $P=_{\wm} Q$. 
By Proposition~\ref{prop:wm} there are
\wm-basic forms $P'$ and $Q'$ with $\CP_{\wm}\vdash P=P',~Q=Q'$. 
By soundness $P'=_{\wm}Q'$ and by Proposition~\ref{prop:wm2},
$P'\equiv Q'$, 
and thus $\CP_{\wm}\vdash P=Q$.
\end{proof}

\section{Completeness for $=_{\mem}$}
\label{sec:mem}

In this section we provide a complete axiomatization of 
memorizing valuation congruence.

\begin{theorem}
\label{thm:mem}
Memorizing valuation congruence $=_{\mem}$
is axiomatized by the axioms in $\CP$ (see Table~\ref{CP}) 
and this axiom:
\begin{align*}
(\axname{CPmem}) \qquad
x\lef y\rig(z\lef u\rig(v\lef y\rig w))&= x\lef y\rig(z\lef u\rig w).
\end{align*}
We write $\CP_{\mem}$ for this set of axioms.
\end{theorem}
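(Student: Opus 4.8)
The plan follows the template already established for the repetition-proof, contractive, and weakly memorizing cases, so I would structure the proof around the same three pillars: soundness of the new axiom, a normal-form proposition, and a ``no-identification'' lemma on those normal forms.

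First I would prove soundness of \axname{CPmem}. Fix a memorizing \RVA\ \BA\ (so for all $a,b$ we have $\dd a(\dd b(\dd a(x)))=\dd b(\dd a(x))$ and $y_a(\dd b(\dd a(x)))=y_a(x)$, together with all contractive equations) and a valuation $H\in\BA$. Unlike the earlier axioms, here the central condition $y$ is an arbitrary propositional statement rather than an atom, so the case analysis cannot be driven directly by $y_a(H)$; instead I would first reduce to the atomic case. The key observation is that it suffices to verify the axiom when $y$ is an atom $a$: a general $y$ can be brought into basic form by Lemma~\ref{proplem}, and then using CP4 one pushes the outer context through the tree of $y$ so that each branch has an \emph{atomic} central condition. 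Having reduced to $y=a$, I would compute both $P/H$ and $\dd P(H)$ for $P=x\lef a\rig(z\lef u\rig(v\lef a\rig w))$ by the evaluation clauses, split on $y_a(H)$, and in the $\fa$-branch split further on $u$; the crucial branch is $y_a(H)=\fa$ followed by a sub-evaluation that re-encounters $a$ with $y_a(\dd u(\dd a(H)))$, where the memorizing axioms $y_a(\dd u(\dd a(H)))=y_a(\dd a(H))=\fa$ and $\dd a(\dd u(\dd a(H)))=\dd u(\dd a(H))$ collapse the inner $v\lef a\rig w$ to $w$, matching the right-hand side. The contractive identities handle the remaining repeated-atom situations exactly as in the proof of Theorem for $=_{\con}$.

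Next I would establish a normal form. I expect \mem-basic forms to be defined so that no atom reappears as a central condition anywhere below an earlier occurrence of the same atom (a stronger ``no atom is evaluated twice on any root-to-leaf path'' condition than the \wm\ case, reflecting that memorizing remembers unconditionally rather than only when intermediate atoms agree). I would then prove the analogue of Proposition~\ref{prop:wm}: every propositional statement is provably equal in $\CP_{\mem}$ to a \mem-basic form. The derivation would again proceed by structural induction on a basic form $P=P_1\lef a\rig P_2$, using a saturation step that rewrites each non-$a$ central condition $b$ appearing below into $(a\lef b\rig\cdots)$-shaped contexts so that \axname{CPmem} (in both its left- and right-oriented incarnations, the left one derivable via the negation symmetry~\eqref{ss}) can delete the subsequent occurrence of $a$, after which \axname{CPcr1}/\axname{CPcr2} contract the freshly created duplicates.

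Finally I would prove the ``no-identification'' lemma: for \mem-basic forms $P$ and $Q$, $P=_{\mem}Q$ implies $P\equiv Q$. This follows the pattern of Lemma~\ref{lem:above}, distinguishing values of a leading atom against a constructed valuation $H$ with prescribed string behaviour $H_f$, and using Proposition~\ref{prop:3} to pass from $\equiv_{\mem}$ to $=_{\mem}$ where needed. With these three pieces, completeness is immediate in the usual way: given $P=_{\mem}Q$, reduce both to \mem-basic forms $P',Q'$, invoke soundness to get $P'=_{\mem}Q'$, apply the lemma to conclude $P'\equiv Q'$, and chain the provable equalities to obtain $\CP_{\mem}\vdash P=Q$. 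The main obstacle I anticipate is the normal-form proposition: because the memorizing axiom has a \emph{composite} central condition, the saturation argument is more delicate than in the contractive or weakly-memorizing cases, and care is needed to ensure the rewriting terminates and genuinely removes every repeated atom along every path without reintroducing others.
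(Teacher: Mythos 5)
Your overall architecture (soundness of the new axiom, a normal-form proposition, a no-identification lemma, then the standard assembly) matches the paper's, and your \mem-basic forms are essentially the paper's: basic forms in which no atom recurs along any root-to-leaf path (though the paper proves the normal-form proposition not by saturation but by deriving $P_1\lef a\rig P_2=P_1[\tr/a]\lef a\rig P_2[\fa/a]$ by induction, using derived symmetric variants of \axname{CPmem}). The genuine gap is in your soundness argument, and it sits exactly where the real work of the paper's proof lies. First, the reduction ``it suffices to verify the axiom when $y$ is an atom'' does not work as described: \axname{CP4} decomposes only the \emph{outer} occurrence of $y$ in $x\lef y\rig(z\lef u\rig(v\lef y\rig w))$, while the inner occurrence $v\lef y\rig w$ stays intact. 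After one \axname{CP4} step on $y=y_1\lef a\rig y_2$ you face $(x\lef y_1\rig T)\lef a\rig(x\lef y_2\rig T)$ with $T=z\lef u\rig(v\lef y\rig w)$ still containing the full composite $y$, so you never actually reach instances of the axiom with an atomic repeated condition, and no induction hypothesis on a smaller central condition applies. Second, even in the atomic case $y=a$, your computation invokes $y_a(\dd u(\dd a(H)))=y_a(\dd a(H))$ and $\dd a(\dd u(\dd a(H)))=\dd u(\dd a(H))$ for an \emph{arbitrary propositional statement} $u$; but the defining equations of the memorizing variety are stated only for atoms $a,b\in A$. For composite $u$, whose derivative is a case-dependent composition of many atomic derivatives, these identities are not axioms of the variety and must themselves be proved.

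What is missing is precisely the paper's key lemma, equation \eqref{eq:mem}: for all propositional statements $P,Q$ and all valuations $H$ in a memorizing \RVA,
\[
\dd Q(\dd P(\dd Q(H)))=\dd P(\dd Q(H))\qquad\text{and}\qquad Q/\dd P(\dd Q(H))=Q/H,
\]
proved by structural induction on $Q$, with a nested induction on $P$ in the case $Q=a$, and with the $\andthen$-trick (e.g.\ rewriting intermediate evaluations as a single derivative $\dd{(P\andthen Q_2)}(\dd{Q_1}(\dd{Q_2}(H)))$) in the composite case. Once this generalization is available, soundness of \axname{CPmem} is the easy case distinction you describe; without it, both halves of your soundness sketch rest on unproved claims. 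The remainder of your plan (normal forms, the no-identification proposition --- which the paper also states without proof --- and the completeness assembly) is fine and matches the paper.
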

Before proving this theorem, we
discuss some characteristics of $\CP_{\mem}$. 
Axiom \axname{CPmem}
defines how the central condition $y$ may recur in an 
expression. This axiom yields in combination 
with \CP\ some interesting consequences. 
First, \axname{CPmem} has three symmetric variants, which
all follow easily with 
$x\lef y\rig z=z\lef(\fa\lef y\rig\tr)\rig x$ ($=z\lef\neg y\rig x$):
\begin{align}
\label{mini}
x\lef y\rig ((z\lef y\rig u)\lef v\rig w)&=x\lef y\rig (u\lef v\rig w),\\
\label{mini1}
(x\lef y\rig (z\lef u\rig v)) \lef u\rig w&=(x\lef y\rig z)\lef u\rig w,\\
\label{mini2}
((x\lef y\rig z)\lef u\rig v) \lef y\rig w&=(x\lef u\rig v)\lef y\rig w.
\end{align} 
The axioms of $\CP_{\mem}$ imply various laws for contraction:
\begin{align}
\label{mini3}
x\lef y\rig(v\lef y\rig w)&=x\lef y\rig w
&&\text{(take $u=\fa$ in \axname{CPmem})},
\\
\label{mini4}
x\lef y\rig(\tr\lef u\rig y)&=x\lef y\rig u
&&\text{(take $z=v=\tr$ and $w=\fa$ in \axname{CPmem})},
\\
\label{mini5}
(x\lef y\rig z)\lef y\rig u&= x \lef y\rig u,
\\
\label{qquatre}
(x\lef y\rig \fa)\lef x\rig z&= y \lef x\rig z,
\end{align} 
and thus (take $v=\tr$ and $w=\fa$ in \eqref{mini3},
respectively $x=\tr$ and $z=\fa$ in \eqref{mini5}),
\[
x\lef y\rig y=x\lef y\rig \fa\quad\text{and}\quad
y\lef y\rig u=\tr\lef y\rig u.
\]
The latter two equations
immediately imply the following very simple contraction laws: 
\[
x\lef x\rig x=x\lef x\rig\fa=\tr\lef x\rig x=\tr\lef x\rig 
\fa=x.\]

\bigskip

Let $A'$ be a finite subset of $A$.
We employ a special type of basic forms
based on $A'$: {\mem-basic forms over $A'$}
are defined by
\begin{align*}
\tr,\fa,\\
P\lef a \rig Q&\quad \text{if $a\in A'$ and $P$ and $Q$ are 
\mem-basic forms over $A'\setminus\{a\}$.}
\end{align*}
E.g., for $A'=\{a\}$ the set of all \mem-basic forms is 
$\{bv,~bv\lef a\rig bv'\mid 
bv,bv'\in\{\tr,\fa\}\}$, and for $A'=\{a,b\}$ it is
\begin{align*}
\{bv,t_1\lef a\rig t_2,t_3\lef b\rig t_4\mid 
bv\in\{\tr,\fa\},~&t_1,t_2 \text{ \mem-basic forms over $\{b\}$,}\\
&\text{$t_3,t_4$  \mem-basic forms over $\{a\}$}\}.
\end{align*}

\begin{proposition}
\label{prop:mem}
For each propositional statement $P$ there is a \mem-basic form
$P'$ with $\CP_{\mem}\vdash P=P'$.
\end{proposition}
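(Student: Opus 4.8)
The plan is to proceed by structural induction on $P$, where by Proposition~\ref{propo} (via Lemma~\ref{proplem}) I may assume $P$ is already a basic form, and I will prove the stronger statement that every basic form built from atoms in a finite set $A'$ can be rewritten in $\CP_{\mem}$ to a \mem-basic form over $A'$. The base cases $P=\tr$ and $P=\fa$ are immediate. For the inductive step $P=P_1\lef a\rig P_2$, let $A'$ be the (finite) set of atoms occurring in $P$; by the induction hypothesis I may assume $P_1$ and $P_2$ are \mem-basic forms over $A'$. The remaining work is to eliminate every recurrence of the central atom $a$ from $P_1$ and $P_2$, so that $P_1$ and $P_2$ become \mem-basic forms over $A'\setminus\{a\}$.

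First I would handle $P_1$, which sits in the \emph{positive} (then-) branch of $a$. The key observation is that within $P_1$, after the outer test $a$ has yielded \tr, any further test on $a$ is redundant in the memorizing setting. The tool for this is the family of derived contraction and reordering laws collected above, in particular the variants \eqref{mini}--\eqref{mini5} of \axname{CPmem} together with the basic laws such as $x\lef y\rig z=z\lef\neg y\rig x$ from~\eqref{ss}. Concretely, I would push the outer $a$ inward using the distribution axiom \axname{CP4} and the symmetric variants so that any occurrence of $a$ inside $P_1$ becomes adjacent to the governing $a$, and then contract it away by the appropriate instance (e.g.\ \eqref{mini5} or \eqref{mini}). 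Symmetrically, $P_2$ sits in the negative (else-) branch, where the dual laws---obtained from \axname{CPmem} by the negation identity, i.e.\ the forms \eqref{mini1}, \eqref{mini2}, \eqref{mini4}---remove occurrences of $a$. After eliminating $a$ from both branches, the resulting subterms use only atoms in $A'\setminus\{a\}$ and, after re-applying the induction hypothesis, are \mem-basic forms over $A'\setminus\{a\}$, whence $P_1'\lef a\rig P_2'$ is a \mem-basic form over $A'$.

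I expect the main obstacle to be the elimination step itself: unlike the weakly-memorizing case of Proposition~\ref{prop:wm}, where a single saturate-then-contract pass along one branch suffices, here an occurrence of $a$ may be buried arbitrarily deep inside a nest of conditional compositions on \emph{other} atoms, and moving it up to the controlling $a$ requires repeated use of \axname{CP4} together with the reorderings \eqref{mini1} and \eqref{mini2}. The delicate point is to argue that this rewriting terminates and genuinely decreases a suitable measure---for instance, the number of occurrences of $a$ in $P$, or a lexicographic measure on (depth of the deepest $a$-occurrence, size of $P$). I would therefore set up an inner induction, nested inside the structural induction on $P$, on the number of occurrences of the central atom $a$ in the current branch: each application of the contraction laws \eqref{mini5}/\eqref{mini} (and their negative duals) strictly removes one such occurrence, while the reordering steps merely bring an occurrence into contractible position without increasing the count. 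Once the count reaches zero the branch is free of $a$, and the induction closes.
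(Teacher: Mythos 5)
Your outer plan (structural induction on a basic form, with an inner step that eliminates the central atom $a$ from both branches) matches the paper's, but the mechanism you propose for that inner step has a genuine gap, located exactly at the point you flag as delicate. First, \axname{CP4} cannot ``push the outer $a$ inward'': in a basic form every central condition is atomic, so the left-hand side of \axname{CP4} never matches any subterm, and its right-to-left use requires the two branches to share both outer arguments, which does not hold in general. The only tool in $\CP_{\mem}$ for making a deep occurrence of $a$ adjacent to the governing $a$ is saturation, i.e.\ reading \eqref{mini1} and \eqref{mini2} from right to left so as to insert wrappers $\cdot\lef a\rig\tr$ along the path down to that occurrence. But each such step \emph{adds} an occurrence of $a$, so your claim that the reordering steps ``merely bring an occurrence into contractible position without increasing the count'' is false; neither the number of $a$-occurrences nor your lexicographic pair (depth of deepest occurrence, size) decreases monotonically along the rewriting, and the inner induction does not close as stated.

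The gap is repairable, and the repair is essentially the paper's proof: instead of eliminating occurrences one at a time, prove the single identity $\CP_{\mem}\vdash P_1\lef a\rig P_2=P_1[\tr/a]\lef a\rig P_2$ (and dually $P_1\lef a\rig P_2=P_1\lef a\rig P_2[\fa/a]$, using \eqref{mini3}, \axname{CPmem} and \eqref{mini}) by structural induction on $P_1$. In the case $P_1=Q\lef b\rig R$ with $b\neq a$, saturate only the two children, $(Q\lef b\rig R)\lef a\rig P_2=((Q\lef a\rig\tr)\lef b\rig(R\lef a\rig\tr))\lef a\rig P_2$, apply the induction hypothesis to $Q\lef a\rig\tr$ and $R\lef a\rig\tr$, and immediately unwrap with \eqref{mini1} and \eqref{mini2}; in the case $P_1=Q\lef a\rig R$, contract with \eqref{mini5}. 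Here termination is by structure, so no global measure is needed, and the temporary growth caused by saturation is confined to one step. Afterwards \axname{CP1} and \axname{CP2} turn $P_1[\tr/a]$ and $P_2[\fa/a]$ into basic forms in which $a$ does not occur, to which the outer induction applies. (Alternatively, your count-based induction can be saved by packaging ``saturate the path, contract the target, unsaturate'' into a single derived equation per occurrence, whose net effect removes exactly one $a$; but the substitution lemma achieves this more cleanly.)
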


\begin{proof}
By Proposition~\ref{propo} we may assume that $P$ is a 
basic form and we proceed by structural induction on $P$. 
If $P=\tr$ or $P=\fa$ we are done. Otherwise,
$P=P_1\lef a\rig P_2$. 

We first show that
\[\CP_{\mem}\vdash P_1\lef a\rig P_2=P_1[\tr/a]\lef a\rig P_2\]
by induction on $P_1$: if $P_1$ equals $\tr$ of $\fa$ this is clear.
If $P_1=Q\lef a\rig R$ then $\CP\vdash P_1[\tr/a]=Q[\tr/a]$ and we
derive
\begin{eqnarray*}
P_1\lef a\rig P_2&=&(Q\lef a\rig R)\lef a\rig P_2\\
&
\stackrel{IH}=&(Q[\tr/a]\lef a\rig R)
\lef a\rig P_2\\
&
\stackrel{\eqref{mini5}}=&Q[\tr/a]\lef a\rig P_2\\
&=&P_1[\tr/a]\lef a\rig P_2,
\end{eqnarray*}
and if $P_1=Q\lef b\rig R$ with $b\neq a$ then $\CP\vdash
P_1[\tr/a]=Q[\tr/a]\lef b\rig R[\tr/a]$ and we derive
\begin{eqnarray*}
P_1\lef a\rig P_2&=&(Q\lef b\rig R)\lef a\rig P_2\\
&\stackrel{\eqref{mini1}\eqref{mini2}}=&((Q\lef a\rig \tr)
\lef b\rig(R\lef a\rig\tr))
\lef a\rig P_2\\
&
\stackrel{IH}=&((Q[\tr/a]\lef a\rig \tr)
\lef b\rig(R[\tr/a]\lef a\rig\tr))
\lef a\rig P_2\\
&
\stackrel{\eqref{mini1}\eqref{mini2}}=&(Q[\tr/a]\lef b
\rig R[\tr/a])
\lef a\rig P_2\\
&=&P_1[\tr/a]\lef a\rig P_2.
\end{eqnarray*}
In a similar way, but now using \eqref{mini3}, axiom \axname{CPmem} and \eqref{mini}
instead, we find 
$\CP_{\mem}\vdash P_1\lef a\rig P_2=P_1\lef a\rig P_2[\fa/a]$, and thus
\[\CP_{\mem}\vdash P_1\lef a\rig P_2=P_1[\tr/a]\lef a\rig P_2[\fa/a].\]
Finally, with axioms \axname{CP1} and \axname{CP2} we find
basic forms $Q_i$ in which $a$ does not occur with 
$\CP_{\mem}\vdash Q_1=P_1[\tr/a]$ and $\CP_{\mem}\vdash 
Q_2=P_2[\fa/a]$. 
\end{proof}

We state without proof:
\begin{proposition}
\label{prop:mem2}
For all \mem-basic forms $P$ and $Q$, $P=_{\mem} Q$
implies $P\equiv Q$.
\end{proposition}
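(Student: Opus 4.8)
The plan is to mirror the proof of Lemma~\ref{lem:above}: argue by structural induction on the pair of \mem-basic forms $P$ and $Q$, and show that whenever $P\not\equiv Q$ there is a memorizing \RVA\ and a valuation $H$ separating $P$ and $Q$, possibly after placing them in the context $\andthen b$. Since $=_{\mem}$ is by definition a congruence contained in $\equiv_{\mem}$, exhibiting such a separating valuation for a suitable context proves $P\neq_{\mem}Q$, the contrapositive of the statement. A context is needed for the same reason as in Lemma~\ref{lem:above}: distinct \mem-basic forms may already agree as functions, e.g.\ $\tr\lef a\rig\tr\equiv_{\mem}\tr$ and $\fa\lef a\rig\fa\equiv_{\mem}\fa\equiv_{\mem}\fa\lef b\rig\fa$, so only a probe on the derivative tells them apart. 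Recalling that $X\andthen b$ abbreviates $b\lef X\rig b$, one has $(X\andthen b)/H=y_b(\dd X(H))$, so the probe $\andthen b$ detects differences between $\dd P(H)$ and $\dd Q(H)$ — exactly the extra information beyond $P/H=Q/H$ that Proposition~\ref{prop:3} identifies as the gap between $\equiv_{\mem}$ and $=_{\mem}$.

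The structural fact that makes the free-valuation constructions transfer is that in a \mem-basic form over a finite $A'$ every root-to-leaf path tests pairwise distinct atoms, since each descent passes from $A'$ to $A'\setminus\{a\}$. Hence, while evaluating a \mem-basic form, no atom is queried twice, and the memorizing identities $\dd a(\dd b(\dd a(x)))=\dd b(\dd a(x))$ and $y_a(\dd b(\dd a(x)))=y_a(x)$ are never actually invoked. I would first record this as a construction lemma: for any finite sequence of \emph{distinct} atoms with prescribed Boolean answers there is a memorizing \RVA\ with a valuation realizing those answers along that path, and the answer to one further, off-path atom may be prescribed freely. This is the memorizing analogue of the unconstrained ``for all $R$'' valuations used in Lemma~\ref{lem:above}, and it is legitimate precisely because along a non-repeating history the memorizing axioms impose no constraint.

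With this in hand the induction runs as in Lemma~\ref{lem:above}. The cases where one of $P,Q$ is $\tr$ or $\fa$ are handled either by $\tr\neq\fa$ or by driving the non-constant form down a path to a leaf and then applying the $\andthen b$ probe with $b$ chosen off that path. For the main case $P\equiv P_1\lef a\rig P_2$ and $Q\equiv Q_1\lef b\rig Q_2$ there are two subcases. If $a=b$, then $P\not\equiv Q$ forces $P_i\not\equiv Q_i$ for some $i$; since $P_i,Q_i$ are \mem-basic forms over $A'\setminus\{a\}$ in which $a$ does not occur, the induction hypothesis supplies a separating memorizing valuation for $P_i,Q_i$, and I prepend a single query to $a$ answered so as to select branch $i$ — no re-query of $a$ occurs, so the valuation stays memorizing. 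If $a\neq b$, the two forms query different atoms first, so the evaluation histories of $P$ and $Q$ are distinct sequences; I choose paths through $P$ and $Q$ and use as probe the atom $a$ itself, which on $P$'s side is forced by memorization to $y_a(H)$ while on $Q$'s side (where it is off the initial branch) may be set to the opposite value.

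The main obstacle is precisely this $a\neq b$ subcase: unlike the free case, one cannot reuse an on-path atom as a freely flippable probe, because memorization forces its value. The argument must therefore choose both the probe atom and the answers along the two paths so that no forced re-query collides — in particular, when the chosen atom also occurs inside the other form one must either steer that form's path around it or exploit its already-memorized value. Verifying in each case that the constructed valuation satisfies the memorizing axioms (using the at-most-once property, and the derived contraction laws \eqref{mini3}--\eqref{mini5} to normalise where convenient) is the delicate, bookkeeping-heavy part; everything else is a routine transcription of Lemma~\ref{lem:above}.
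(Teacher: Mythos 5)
The paper gives you nothing to compare against here: Proposition~\ref{prop:mem2} is stated without proof, so your argument must stand on its own. Much of it does. Your construction lemma is correct (model a memorizing valuation as a query history plus a memory: fresh atoms may be answered arbitrarily as a function of the history, memorized atoms keep their value and leave the state unchanged), and with it your $a\neq b$ case goes through even more easily than you fear: probing both sides with $a$, memorization pins $y_a(\dd P(H))$ to $y_a(H)$, while on the $Q$-side the atom $a$ is first met at a nonempty history (inside $Q$ or only at the probe), so it can be answered opposite to $y_a(H)$; no steering is needed. The genuine gap is the $a=b$ case. The induction hypothesis you invoke gives only the \emph{existence} of some separating context and valuation for $P_i,Q_i$, and that data does not survive ``prepending a single query to $a$''. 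Concretely, for $P_1=\tr\lef b\rig\tr$ and $Q_1=\tr$ the probe $b$ is useless, since $y_b(\dd{P_1}(H))=y_b(\dd b(H))=y_b(H)=y_b(\dd{Q_1}(H))$ by the repetition-proof law; every separating probe must be an atom other than $b$, and nothing in your induction prevents it from being $a$ itself. If it is $a$, then after prepending the $a$-query the probe re-queries a memorized atom and returns the same value on both sides, so the separation evaporates; your remark that ``no re-query of $a$ occurs'' overlooks that the probe is part of the evaluation. The repair is to strengthen the induction hypothesis to something like ``for every finite set $F$ of atoms there is a separating valuation whose probe avoids $F$,'' and to prove that statement instead.

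That strengthening is not cosmetic, because it is exactly where the size of $A$ enters, and the proposition is in fact \emph{false} if $A$ is finite (the paper's standing assumption is only $|A|>1$). Take $A=\{a,b\}$, $P=(\tr\lef b\rig\tr)\lef a\rig\fa$ and $Q=\tr\lef a\rig\fa$: these are distinct \mem-basic forms with $P/H=Q/H$ for every memorizing valuation $H$, and for the derivative states $s=\dd b(\dd a(H))$ and $t=\dd a(H)$ the contractive and memorizing laws give $y_a(s)=y_a(H)=y_a(t)$, $y_b(s)=y_b(t)$, $\dd a(s)=s$, $\dd a(t)=t$, and $\dd b(s)=s=\dd b(t)$. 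Hence the pair $(s,t)$, together with the diagonal, is closed under all yields and all derivatives over $\{a,b\}$, so no context over this alphabet can separate $P$ from $Q$: $P=_{\mem}Q$ yet $P\not\equiv Q$. (Exhausting all atoms of $A$ along one path gives a counterexample for every finite $A$.) So any correct proof must explicitly appeal to an unbounded supply of fresh atoms; your ``chosen off that path'' and ``steer that form's path around it'' steps assume this silently, and it has to appear both in the statement of your construction lemma and in the strengthened induction hypothesis.
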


\begin{proof}[Proof of Theorem~\ref{thm:mem}.]
Let \BA\ be a \RVA\ in the variety $\mem$ of memorizing 
valuations, i.e., for all $a,b\in A$,
\[\dd a(\dd b(\dd a(x)))
=\dd b(\dd a(x))
~\wedge~
y_a(\dd b(\dd a(x)))=y_a(x),\]
and all equations for contractive 
valuations also hold in \BA.
In order to prove soundness we use
the following generalization of these equations, which can 
easily be proved by structural induction on $Q$:
for all propositional statements $P,Q,R,S$ and valuations $H\in\BA$,
\begin{equation}
\label{eq:mem}
\dd Q(\dd P(\dd Q(H)))=\dd P(\dd Q(H))\quad\text{and}\quad
Q/\dd P(\dd Q(H))=Q/H.\end{equation}
If $Q=a$, then apply induction on $P$: the cases $P=\tr$ and 
$P=\fa$ are trivial, and if $P=b$ then \eqref{eq:mem}
follows by definition. If
$P=P_1\lef P_2\rig P_3$ then if $P_2/\dd a(H)=\tr$,
\begin{align*}
\dd a(\dd{(P_1\lef P_2\rig P_3)}(\dd a(H)))
&=\dd a(\dd{P_1}(\dd{P_2}(\dd a(H))))\\
&=\dd a(\dd{P_1}(\dd a(\dd{P_2}(\dd a(H)))))\\
&=\dd{P_1}(\dd a(\dd{P_2}(\dd a(H))))\\
&=\dd{P_1}(\dd{P_2}(\dd a(H)))\\
&=\dd{(P_1\lef P_2\rig P_3)}(\dd a(H)), 
\end{align*}
and
\begin{align*}
a/\dd{(P_1\lef P_2\rig P_3)}(\dd a(H))
&=a/\dd{P_1}(\dd{P_2}(\dd a(H)))\\
&=a/\dd{P_1}(\dd a(\dd{P_2}(\dd a(H))))\\
&=a/\dd{P_2}(\dd a(H))\\
&=a/H,
\end{align*}
and if $P_2/\dd a(H)=\fa$ a similar argument applies.

If $Q=Q_1\lef Q_2\rig Q_3$ then first assume $Q_2/H=\tr$,
so 
\begin{align*}
\dd Q(\dd P(\dd Q(H)))&=
\dd{Q_1}(\dd{(P\andthen Q_2)}(\dd{Q_1}(\dd{Q_2}(H))))\\
&=\dd{(P\andthen Q_2)}(\dd{Q_1}(\dd{Q_2}(H)))\\
&=\dd{Q_2}(\dd{(Q_1\andthen P)}(\dd{Q_2}(H)))\\
&=
\dd{P}(\dd{Q_1}(\dd{Q_2}(H)))=\dd P(\dd Q(H)), 
\end{align*}
and
$(Q_1\lef Q_2\rig Q_3)/\dd P(\dd Q(H))=
Q_1/\dd{P\andthen Q_2}(\dd{Q_1}(\dd{Q_2}(H))
=Q_1/\dd{Q_2}(H)=
Q/H$.
Finally, 
if $Q_2/H=\fa$ a similar argument applies.
\bigskip

We find by \eqref{eq:mem}
for $Q/H=\fa$ and propositional
statements $V,W,S$ that
\[
(V\lef Q\rig W)/\dd S(\dd Q(H))=W/\dd S(\dd Q(H))
~~\text{and}~~
\dd{(V\lef Q\rig W)}(\dd S(\dd Q(H)))=\dd W(\dd S(\dd Q(H))).
\]
Now the soundness of axiom \axname{CPmem} follows immediately:
\begin{align*}
(P\lef Q\rig(&R\lef S\rig(V\lef Q\rig W)))/H\\
&=\begin{cases}
P/\dd Q(H)&\text{if $Q/H=\tr$},\\
R/\dd S(\dd Q(H))&\text{if $Q/H=\fa$ and $S/\dd Q(H)=\tr$},\\
(V\lef Q\rig W)/\dd S(\dd Q(H))&\text{if $Q/H=\fa$ and $S/\dd Q(H)=\fa$},
\end{cases}\\
&=\begin{cases}
P/\dd Q(H)&\text{if $Q/H=\tr$},\\
R/\dd S(\dd Q(H))&\text{if $Q/H=\fa$ and $S/\dd Q(H)=\tr$},\\
W/\dd S(\dd Q(H))&\text{if $Q/H=\fa$ and $S/\dd Q(H)=\fa$},
\end{cases}\\
&=(P\lef Q\rig (R\lef S\rig W))/H,
\end{align*}
and \quad$\dd{(P\lef Q\rig(R\lef S\rig(V\lef Q\rig W)))}(H)$
\begin{align*}
\hspace{15mm}&=\begin{cases}
\dd P(\dd Q(H))&\text{if $Q/H=\tr$},\\
\dd R(\dd S(\dd Q(H)))&\text{if $Q/H=\fa$ and $S/\dd Q(H)=\tr$},\\
\dd {(V\lef Q\rig W)}(\dd S(\dd Q(H)))&\text{if $Q/H=\fa$ and $S/\dd Q(H)=\fa$},
\end{cases}\\
&=\begin{cases}
\dd P(\dd Q(H))&\text{if $Q/H=\tr$},\\
\dd R(\dd S(\dd Q(H)))&\text{if $Q/H=\fa$ and $S/\dd Q(H)=\tr$},\\
\dd W(\dd S(\dd Q(H)))&\text{if $Q/H=\fa$ and $S/\dd Q(H)=\fa$},
\end{cases}\\
&=\dd {(P\lef Q\rig (R\lef S\rig W))}(H).
\end{align*}

In order to prove completeness assume $P=_{\mem} Q$. 
By Proposition~\ref{prop:mem} there are
\mem-basic forms $P'$ and $Q'$ with $\CP_{\mem}\vdash 
P=P',~Q=Q'$. By soundness, $P'=_{\mem} Q'$, and by
Proposition~\ref{prop:mem2}, $P'\equiv Q'$, 
and thus $\CP_{\mem}\vdash P=Q$.
\end{proof}

\section{Completeness for $=_{\st}$}
\label{sec:stat}

In this section we provide a complete axiomatization of 
static valuation congruence.

\begin{theorem}[Hoare~\cite{Hoa85}]
\label{Hoare}
Static valuation congruence $=_{\st}$
is axiomatized by the axioms in $\CP$ (see Table~\ref{CP}) 
and these axioms:
\begin{align*}
(\axname{CPstat}) \qquad
(x\lef y\rig z)\lef u\rig v&=
(x\lef u\rig v)\lef y\rig(z\lef u\rig v),\\
(\axname{CPcontr}) \qquad
(x\lef y\rig z)\lef y\rig u&= x \lef y\rig u.
\end{align*}
We write $\CP_{\st}$ for this set of axioms.
\end{theorem}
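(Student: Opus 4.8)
The plan is to follow the same route as the previous completeness theorems, establishing soundness first and then reducing to a canonical form, but with one preliminary observation that drives everything. For a static \RVA, i.e.\ one satisfying $y_a(\dd b(x))=y_a(x)$, a trivial induction on $Q$ gives $y_a(\dd Q(H))=y_a(H)$ for \emph{every} propositional statement $Q$; hence every yield met while computing $P/H$ equals $y_{\cdot}(H)$, so $P/H$ depends only on the assignment $a\mapsto y_a(H)$. Consequently $\equiv_{\st}$ is already a congruence and $=_{\st}\ =\ \equiv_{\st}$. Granting this, soundness of \axname{CPstat} and \axname{CPcontr} reduces to checking $P/H=Q/H$ for the two axioms, which is a routine case analysis on the (now order- and repetition-insensitive) values of the central conditions; crucially, no separate verification of the $\dd{(\cdot)}$ clause is needed — unlike the earlier cases it would in fact fail as an equality of algebra elements — precisely because yields are derivative-invariant.

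For completeness I would fix a linear order on $A$ and introduce \st-basic forms as reduced ordered decision trees: $\tr,\fa$ are \st-basic, and $P\lef a\rig Q$ is \st-basic if $P,Q$ are \st-basic, all their conditions are ${>}\,a$, and $P\not\equiv Q$. The ordering forces each atom to occur at most once per branch in a fixed testing order, while $P\not\equiv Q$ removes redundant tests; together these make \st-basic forms unique representatives of Boolean functions. The argument then splits as usual: every propositional statement is provably equal in $\CP_{\st}$ to an \st-basic form (an analogue of Proposition~\ref{prop:mem}); and for \st-basic forms $P,Q$ one has $P=_{\st}Q\Rightarrow P\equiv Q$ (an analogue of Lemma~\ref{lem:above}). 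This second half is routine, since syntactically distinct reduced ordered trees denote distinct Boolean functions and are therefore separated by some assignment, hence by some static valuation. Combining both halves with soundness and Lemma~\ref{proplem}/Proposition~\ref{propo} yields $\CP_{\st}\vdash P=Q$ from $P=_{\st}Q$ exactly as in Theorem~\ref{thm:mem}.

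The real work, and the step I expect to be hardest, is the reduction to \st-basic form. Starting from a basic form $P_1\lef a\rig P_2$ with $P_1,P_2$ already \st-basic, I would (a) use \axname{CPstat} to bubble the least condition occurring in the term up to the root — each such move distributes the outer test into both branches, as in the apply-operation for ordered decision diagrams — and (b) use \axname{CPcontr} together with the contraction laws \eqref{mini1}--\eqref{mini5} (all available since static is coarser than memorizing, so every $\CP_{\mem}$-theorem is derivable in $\CP_{\st}$) to merge the repeated tests of that least condition thereby produced, before recursing on the two subtrees.

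The delicate points are the termination of this sort-and-contract procedure and, above all, step (c): the elimination of redundant tests, i.e.\ the genuinely static identity $R\lef a\rig R=R$ for closed $R$. This identity is \emph{not} inherited from the memorizing layer — recall $\tr\lef a\rig\tr\ne_{\mem}\tr$ — so it must be squeezed out of \axname{CPstat} and \axname{CPcontr} directly, and I expect this to be exactly where the standing assumption $|A|>1$ is used: a second atom $b$ serves to re-expose the redundant $a$-test to a contraction after an application of \axname{CPstat}, the degenerate single-atom situation being precisely the case deferred to \cite{Chris}. Once $R\lef a\rig R=R$ is secured for closed $R$, the procedure delivers a unique \st-basic form and the proof closes along the familiar lines.
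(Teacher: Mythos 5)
Your overall architecture is viable and, on the completeness side, genuinely different from the paper's. The paper verifies soundness exactly as you do --- for static valuations $P/H=P/\dd Q(H)$, so only evaluations, not derivatives, need checking --- and then essentially cites Hoare for completeness, sketching canonical forms that are \emph{full} ordered binary trees: level $i$ consists of $2^{i-1}$ occurrences of $a_i$ and the $2^n$ leaves form a truth table, so uniqueness of the canonical form is immediate. You instead use reduced ordered decision trees, which makes uniqueness an inductive claim but keeps normal forms small; the trade-off is that the paper must \emph{insert} redundant tests to fill out the full tree, while you must \emph{remove} them. Both routes hinge on the same identity (see below). Your preliminary observation that $\equiv_{\st}$ is itself a congruence, hence $=_{\st}\,=\,\equiv_{\st}$, is a clean and correct way to justify skipping the derivative check, and matches what the paper does implicitly.

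Two points need repair, however. First, the step you flag as the hardest --- the redundancy law $R\lef a\rig R=R$ --- is in fact a two-line consequence of \axname{CPstat}:
$x=(x\lef y\rig z)\lef\fa\rig x=(x\lef\fa\rig x)\lef y\rig(z\lef\fa\rig x)=x\lef y\rig x$,
using \axname{CP2} and \axname{CPstat} with $u=\fa$, $v=x$. The paper displays exactly this derivation just before its proof. It is an open identity, not restricted to closed $R$, and it uses no second atom; your speculation that this is where $|A|>1$ enters is wrong --- that assumption plays no role in this theorem at all (it only concerns the collapse of the hierarchy of congruences when $|A|=1$, deferred to~\cite{Chris}). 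Second, your justification for importing the contraction laws \eqref{mini1}--\eqref{mini5} --- ``static is coarser than memorizing, so every $\CP_{\mem}$-theorem is derivable in $\CP_{\st}$'' --- is circular as stated: semantic coarseness yields derivability only via a completeness theorem, which is precisely what is being proved (and even then only for closed equations, whereas \eqref{mini1}--\eqref{mini5} are open identities). The import is nevertheless legitimate, because \axname{CPmem} has a direct derivation in $\CP_{\st}$: with the symmetric variants $\axname{CPstat}'$ and $\axname{CPcontr}'$ (obtainable from identity~\eqref{ss}), one gets
$x\lef y\rig(z\lef u\rig(v\lef y\rig w))=(x\lef y\rig z)\lef u\rig(x\lef y\rig(v\lef y\rig w))=(x\lef y\rig z)\lef u\rig(x\lef y\rig w)=x\lef y\rig(z\lef u\rig w)$.
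With these two repairs your sort--contract--reduce procedure goes through and your completeness argument closes along the lines you describe.
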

Observe that axiom \axname{CPcontr} equals the
derivable identity \eqref{mini5} which holds in $\CP_{\mem}$.
Also note that the symmetric variants of 
the axioms~\axname{CPstat} and \axname{CPcontr}, say
\begin{align*}
(\axname{CPstat}') \qquad
x\lef y\rig (z\lef u\rig v)&=
(x\lef y\rig z)\lef u\rig(x\lef y\rig v),\\
(\axname{CPcontr}') \qquad
x\lef y\rig(z\lef y\rig u) &= x \lef y\rig u,
\end{align*}
easily follow with identity~\eqref{ss}, i.e.,  
$y\lef x\rig z=z\lef (\fa\lef x\rig\tr)\rig y$,
which is even valid in  
free valuation congruence, and that 
$\axname{CPcontr'}=\eqref{mini3}$. Thus, the axiomatization
of static valuation congruence is obtained from \CP\
by adding the axiom~\axname{CPstat}
that prescribes for a nested conditional composition how
the order of the first and a second central condition can be changed,
and a generalization of the axioms \axname{CPcr1} and \axname{CPcr2} 
that prescribes contraction for terms (instead of atoms).
Moreover, in $\CP_{\st}$ it can be derived that
\begin{align*}
x
&=(x\lef y\rig z)\lef \fa\rig x\\
&=(x\lef\fa\rig x)\lef y\rig (z\lef \fa\rig x)\\
&=x\lef y\rig x\\
&=y\andthen x,
\end{align*}
thus any `and-then' prefix can be added to (or left out from) 
a propositional statement while preserving static valuation
congruence, in particular
\(x\lef x\rig x=x\andthen x=x.\)

\begin{proof}[Proof of Theorem~\ref{Hoare}]
Soundness follows from the definition of static valuations:
let $\BA$ be a \RVA\ that satisfies for all $a,b\in A$,
\[y_a(\dd b (x))=y_a(x).\]
These equations imply that for all $P,Q$ and
$H\in\BA$,
\[P/H=P/\dd Q(H).\]
As a consequence, the validity of axioms~\axname{CPstat} 
and \axname{CPcontr} follows
from simple case distinctions.
Furthermore, Hoare showed in~\cite{Hoa85} that 
$\CP_{\st}$ is complete for static valuation congruence.

For an idea of a direct proof, assume $P=_{\st}Q$ and
assume that the atoms 
occurring in $P$ and $Q$ are ordered as
$a_1,...,a_n$. 
Then under static valuation congruence
each propositional statement containing no other atoms than
$a_1,...,a_n$ can be rewritten into the following
special type of basic form: 
consider the full binary tree with at level 
$i$ only occurrences of atom $a_i$ (there are $2^{i-1}$ such
occurrences), 
and at level $n+1$ only leaves 
that are either $\tr$ or $\fa$
(there are $2^n$ such leaves). 
Then each series of leaves represents 
one of the possible propositional statements in which
these atoms may occur, and the axioms in
$\CP_{\mem}$
are sufficient to rewrite both $P$ and $Q$ into
exactly one such basic form. For these basic forms,
static valuation congruence implies syntactic
equivalence.
Hence, completeness for closed
equations follows.
\end{proof}

As an aside, we note that the axioms~\axname{CPcontr} 
and $\axname{CPcontr}'$
immediately imply \axname{CPcr1} and
\axname{CPcr2}, and conversely, that for $y$ ranging
over closed terms, these axioms are derivable from 
$\CP +\axname{CPstat} +\axname{CPcr1}+\axname{CPcr2}$
(by induction on basic forms), which proves
completeness for closed equations of this particular group 
of axioms. 

\section{Adding negation and definable connectives}
\label{sec:add}
In this section we formally add negation and various definable
connectives to \CP.
As stated earlier (see identity~\eqref{ss}), negation $\neg x$ 
can be defined as follows:
\begin{equation}
\label{nega}
\neg x = \fa\lef x\rig\tr
\end{equation}

\begin{table}[tbp]
\centering
\hrule
\begin{align}
\label{c1}
	\neg \tr &= \fa\\
	\label{c2}
	\neg \fa&=\tr\\
	\label{c3}
	\neg \neg x &= x\\
	\label{c4}
	\neg (x \lef y \rig z) &= \neg x\lef y \rig \neg z\\
	\label{neg}
	x\lef\neg y\rig z&=z\lef y\rig x
\end{align}
\hrule
\caption{Some immediate consequences of the set of axioms \CP\ and
equation~\eqref{nega}}
\label{CPd}
\end{table}

The derivable identities in Table~\ref{CPd} play a role
in the derivable connectives that we discuss below. 
They can be derived as follows:
\begin{enumerate}
\item[\eqref{c1}]
   follows from 
   $\neg \tr=\fa\lef\tr\rig\tr=\fa$,
\item[\eqref{c2}]
   follows in a similar way,
\item[\eqref{c3}]
	follows from
	 $\neg \neg x=\fa\lef(\fa\lef x\rig\tr)
	\rig\tr=(\fa\lef\fa\rig\tr)\lef x\rig(\fa\lef\tr\rig\tr)
    =\tr\lef x\rig\fa=x$,
\item[\eqref{c4}]
    follows in a similar way,
    \item[\eqref{neg}] follows from $x\lef\neg y\rig z=
    (z\lef\fa\rig x)\lef \neg y\rig
    (z\lef\tr\rig x)=z\lef(\fa\lef\neg y\rig\tr)\rig x=
    z\lef y\rig x$.
\end{enumerate}

A definable (binary) connective already introduced 
is the \emph{and then}
operator $\andthen$ with defining equation 
$x\andthen y=y\lef x\rig y$.
Furthermore, following~\cite{BBR95} we write 
\[\leftand\]
for \emph{left-sequential} conjunction, i.e., a 
conjunction that first evaluates its lefthand argument 
and only after that is found $\tr$ carries on with 
evaluating its second argument (the small circle indicates
which argument is evaluated first). Similar notations are used 
for other sequential connectives. 
We provide defining equations for a number of 
derived connectives in Table~\ref{DC}.

\begin{table}[t]
\centering
\hrule
\begin{align*}
	x \leftand y &= y \lef x \rig \fa
	&x \leftimp y &= y\lef x\rig\tr\\\
	x \rightand y &= x \lef y \rig \fa
	&x \rightimp y &= \tr\lef y\rig \neg x\\
	x \leftor y &= \tr \lef x \rig y
	&x \leftbiimp y &= y \lef x \rig \neg y\\
	x \rightor y &= \tr \lef y \rig x
	&x \rightbiimp y &= x \lef y \rig \neg x
\end{align*}
\hrule
\caption{Defining equations for derived connectives}
\label{DC}
\end{table}

The operators $\leftand$ and left-sequential disjunction
$\leftor$ are associative and 
the dual of each other, and so are their right-sequential
counterparts. For $\leftand$ a proof of this is as follows:
\begin{align*}
(x \leftand y)\leftand z
&=z\lef(y \lef x \rig \fa)\rig\fa\\
&=(z\lef y \rig\fa)\lef x \rig (z\lef \fa\rig \fa)\\
&=(y \leftand z)\lef x \rig \fa\\
&=x\leftand(y \leftand z),
\end{align*}
and (a sequential version of De Morgan's laws)
\begin{align*}
\neg(x \leftand y)
&=\fa\lef(y \lef x \rig \fa)\rig\tr\\
&=(\fa\lef y \rig\tr)\lef x \rig (\fa\lef \fa\rig \tr)\\
&=\neg y \lef x \rig \tr\\
&=\tr\lef\neg x\rig\neg y\\
&=\neg x\leftor \neg y.
\end{align*}
Furthermore, note that $\tr\leftand x=x$ and $x\leftand\tr=x$,
and $\fa\leftor x=x$ and $x\leftor\fa=x$.

Of course, \emph{distributivity}, as in
$(x \leftand y)\leftor z=(x\leftor z)\leftand(y\leftor z)$
is not valid in free
valuation congruence: 
it changes the order of evaluation and 
in the right-hand expression $z$ can be evaluated twice.
It is also obvious that both sequential versions of 
\emph{absorption}, one of which reads
\[x = x \rightand(x\rightor y),\]
are not valid. 
Furthermore, it is not difficult to prove in \CP\ that
$\leftbiimp$ and $\rightbiimp$ (i.e., the two sequential 
versions of bi-implication defined in Table~\ref{DC})
are also associative, and that $\leftimp$ and $\rightimp$
are not associative, but
satisfy the sequential versions of the common definition of 
implication:
\[x\leftimp y=\neg x\leftor y\quad\text{and}\quad
x\rightimp y=\neg x\rightor y.\]

From now on we extend 
$\Sigma_\CP(A)$ with the ``and then" operator $\andthen$, 
negation and all derived connectives introduced in this 
section, and we adopt their defining equations. 
Of course, it remains the case that each propositional
statement
has a unique basic form (cf. Lemma~\ref{proplem}).

Concerning the example of the propositional statement
sketched in Example~\eqref{pedes} in Section~\ref{sec:motiv}:
\[\textit{look-left-and-check}\leftand 
\textit{look-right-and-check}\leftand 
\textit{look-left-and-check}
\]
indeed precisely
models part of the processing of a
pedestrian planning to
cross a road with two-way traffic driving on the right. 

\bigskip

We end this section with a brief comment on these connectives
in the setting of other valuation congruences. 
\begin{itemize}
\item In memorizing valuation congruence,
which we call
\emph{Memorizing logic}, the sequential connective
$\leftand$ has the following properties:
\begin{enumerate}
\item 
\label{trois}
The associativity of $\leftand$ is valid,
\item 
\label{quatre}
The identity $x\leftand y\leftand x=x\leftand y$ is 
valid (by equation~\eqref{qquatre}),
\item
The connective $\leftand$ is not commutative.
\end{enumerate}

\item
In static valuation congruence, all of $\leftand$,
$\leftor$, $\rightand$ and $\rightor$ are commutative and idempotent.
For example, we derive with axiom \axname{CPstat} that
\[x\leftand y=y\lef x\rig\fa=(\tr\lef y\rig\fa)\lef x\rig\fa=
(\tr\lef x\rig\fa)\lef y\rig(\fa\lef x\rig\fa)=
x\lef y\rig\fa=y\leftand x,
\]
and with axiom \axname{CPcontr} and its symmetric counterpart
$\axname{CPcontr}'$,
\begin{align*}
x\leftor x&=\tr\lef x\rig x=\tr\lef x\rig(\tr\lef x\rig\fa)=
\tr\lef x\rig\fa=x,\\
x\leftand x&=x\lef x\rig\fa=(\tr\lef x\rig x)\lef x\rig\fa=
\tr\lef x\rig\fa=x.
\end{align*}
As a consequence, the sequential notation of these connectives
is not meaningful in this case, and 
distributivity and absorption hold in static valuation congruence.
\end{itemize}

\section{Satisfiability and complexity}
\label{sec:sat}
In this section we briefly consider some complexity issues.
Given a variety $K$ of valuation algebras,
a propositional statement is \emph{satisfiable} with respect to
$K$ if for some non-trivial $\BA\in K$
($\trVal$ and $\faVal$ are different) 
there exists a valuation $H\in\BA$ such that 
\[P/H=\tr.\]
We write 
\[\SAT_K(P)\]
if $P$ is satisfiable. We say that
$P$ is \emph{falsifiable} with respect to
$K$, notation
\[\FAS_K(P),\]
if and only if a valuation $H\in\BA\in K$ exists with 
$P/H=\fa$. This is the case if and
only if $\SAT_K(\fa\lef P\rig\tr)$.

It is a well-known fact that $\SAT_\st$ is an NP-complete problem. 
We now argue that $\SAT_\fr$
is in P. This is the case because for $=_\fr$, both
$\SAT_\fr$ and $\FAS_\fr$ can be simultaneously defined 
in an inductive manner: let $a\in A$ and
write $\neg\SAT_\fr(P)$ to express that $\SAT_\fr(P)$ does not hold, 
and similar for $\FAS_\fr$, then
\begin{align*}
&\SAT_\fr(\tr),&&\neg\FAS_\fr(\tr),
\\
&\neg\SAT_\fr(\fa),&&\FAS_\fr(\fa),
\\
&\SAT_\fr(a),&&\FAS_\fr(a),
\\
\end{align*}
and
\begin{align*}
&\SAT_\fr(P\lef Q\rig R)\quad\text{ if }\quad
\begin{cases}
\SAT_\fr(Q)\text{ and }\SAT_\fr(P),\\
\text{or}\\
\FAS_\fr(Q)\text{ and }\SAT_\fr(R),
\end{cases}
\\\\
&\FAS_\fr(P\lef Q\rig R)\quad\text{ if }\quad
\begin{cases}
\SAT_\fr(Q)\text{ and }\FAS_\fr(P),\\
\text{or}\\
\FAS_\fr(Q)\text{ and }\FAS_\fr(R).
\end{cases}
\end{align*}
Hence, with respect to free valuation congruence both
$\SAT_\fr(P)$ and $\FAS_\fr(P)$ are computable in polynomial time.
In a similar way one can show that both $\SAT_{\rp}$ and $\SAT_{\con}$ are in P.

\bigskip

Of course, many more models of \CP\ exist than those discussed in the previous
sections. For example, call a valuation \emph{positively memorizing}
($\Pmem$) if the reply \tr\ to an atom is preserved after all subsequent replies:
\[x\lef a\rig y = [T/a]x\lef a\rig y\]
for all atomic propositions $a$.
In a similar way one can define negatively memorizing
valuations ($\Nmem$): 
\[x\lef a\rig y = x\lef a\rig[F/a]y.\]
Contractive or weakly memorizing valuations that satisfy $\Pmem$ (\Nmem) give
rise to new models in which more propositional statements are identified. 

\begin{theorem}
For $K\in\{\Pmem, \con{+}\Pmem, \wm{+}\Pmem,\Nmem, \con{+}\Nmem, \wm{+}\Nmem\}$ 
it holds that $\SAT_K$ is NP-complete.
\end{theorem}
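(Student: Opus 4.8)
The plan is to prove NP-completeness in two directions: membership in NP and NP-hardness. Both parts should be attacked uniformly across all six congruences $K\in\{\Pmem,\con{+}\Pmem,\wm{+}\Pmem,\Nmem,\con{+}\Nmem,\wm{+}\Nmem\}$, exploiting the fact that each such $K$ identifies more propositional statements than free valuation congruence but that the evaluation $P/H$ is still computed by the same deterministic case distinction on the central conditions.

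For membership in NP, I would argue that a satisfying valuation can be witnessed compactly. Given a basic form $P$ (available by Lemma~\ref{proplem}), a run of the evaluation $P/H=\tr$ traverses a single root-to-leaf path in the binary tree of $P$, and along this path each atomic query $a$ is answered by a Boolean value $y_a(\dd\sigma(H))$ determined by the history $\sigma$ of atoms already evaluated. The witness is thus an assignment of a Boolean reply to each node on one accepting path. The verifier must check two things in polynomial time: that the chosen replies drive the evaluation to a $\tr$-leaf, and that the reply pattern is \emph{consistent} with the constraints imposed by $K$ (the $\Pmem$ or $\Nmem$ constraint together with, where present, the contractive or weakly memorizing constraints). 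The key point is that for each of these varieties the constraints are local conditions on the string of atoms queried so far, so consistency of a candidate reply function $A^+\rightarrow\BB$ restricted to the prefixes arising along the path is polynomial-time checkable. One then shows that a consistent path-labeling extends to an actual valuation algebra $\BA\in K$ with some $H\in\BA$, so that existence of the witness is equivalent to $\SAT_K(P)$.

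For NP-hardness, the plan is to reduce $\SAT_\st$ (known to be NP-complete) to $\SAT_K$. The natural idea is to take an ordinary propositional formula $\varphi$ and encode it as a propositional statement $P_\varphi$ whose evaluation in $K$ faithfully simulates a static (classical) truth assignment. The obstacle is precisely that $K$ is \emph{not} static: a free-style re-query of an atom $a$ need not return its earlier value, so a naive encoding of $\varphi$ using $\leftand$, $\leftor$, $\neg$ does not force the consistency that classical satisfiability requires. The role of the $\Pmem$ (respectively $\Nmem$) axiom is to break this asymmetry: $\Pmem$ guarantees that once an atom has yielded $\tr$ it continues to do so. My plan is to design the encoding so that the first occurrence of each variable's atom fixes a reply and all subsequent occurrences are forced to agree, thereby simulating a static assignment inside a merely positively (or negatively) memorizing valuation. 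Concretely, I would prefix the formula with a block that queries each variable once, and structure the gadget so that $\Pmem$ (or $\Nmem$) propagates the initial reply throughout the rest of the statement; this recovers exactly the classical semantics of $\varphi$.

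The main obstacle I expect is this hardness construction, and specifically engineering the gadget so that it behaves correctly under \emph{all} of the listed varieties simultaneously, including those that also impose contractive or weakly memorizing identities. Under $\con$ or $\wm$ additional contractions occur, so the encoding must be robust: repeated adjacent queries may be collapsed, and I must ensure that such collapses never destroy the forced consistency nor spuriously satisfy an unsatisfiable $\varphi$. I would handle the $\Nmem$ cases by the evident dualization (exchanging the roles of $\tr$ and $\fa$, i.e.\ applying the construction to $\neg\varphi$ or using identity~\eqref{neg}), so that a single gadget plus duality covers the whole list. Correctness of the reduction then amounts to verifying that $P_\varphi/H=\tr$ for some $H\in\BA\in K$ if and only if $\varphi$ is classically satisfiable, which I would establish by relating the path taken in the evaluation tree of $P_\varphi$ to the satisfying classical assignment, using the memorizing axiom to identify the reply to each atom with a fixed truth value.
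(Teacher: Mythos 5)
Your NP-hardness gadget has a genuine flaw, and it sits exactly at the point you flag as the main obstacle. You propose to query each variable once in a prefix block and let $\Pmem$ ``propagate the initial reply throughout the rest of the statement'', so that later occurrences are forced to agree and classical semantics is recovered. But $\Pmem$ is one-directional: it only makes \tr\ replies sticky, while an atom whose reply so far has been \fa\ remains free to flip to \tr\ at any later query. Consequently no prefix can force consistency. Concretely, $\neg a\leftand a$ is classically unsatisfiable but $\Pmem$-satisfiable (replies \fa, then \tr, to the two queries of $a$), and prefixing an extra query of $a$ changes nothing, since a reply of \fa\ in the prefix binds nothing; with an intervening fresh atom, as in $\neg a\leftand b\leftand a$, the same failure persists even under $\con{+}\Pmem$ and $\wm{+}\Pmem$ (no consecutive repetition is contracted, and the weakly memorizing premise is not met). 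So the direction ``$P_\varphi$ is $K$-satisfiable $\Rightarrow$ $\varphi$ is classically satisfiable'' fails, and no refinement of a reply-freezing gadget can repair it: under $\Pmem$ there is simply no mechanism that makes a \fa\ reply persistent (dually, under $\Nmem$, none that makes a \tr\ reply persistent).

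The paper's proof inverts the idea: instead of forcing consistency, it detects a consistent sub-evaluation by pigeonhole. Map $\phi$, with $n$ atoms, to the left-sequential conjunction $\phi\leftand\cdots\leftand\phi$ of $n+1$ copies. Under any $\Pmem$ valuation each atom can change its reply at most once, and only from \fa\ to \tr; hence at most $n$ flips occur in the whole evaluation, and if all $n+1$ copies yield \tr\ then at least one copy is evaluated with no flip at all, i.e.\ under an effectively static assignment, so $\SAT_\st(\phi)$ holds. Conversely, a classical satisfying assignment, realized by a valuation whose derivatives are the identity, lies in every listed variety and satisfies every copy. This argument is automatically robust for the $\con$ and $\wm$ refinements (extra identifications only remove flips), which disposes of the robustness concern you raise, and the $\Nmem$ cases follow by the duality you describe. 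Your NP-membership sketch (guess the reply string along the evaluation path, check the $K$-constraints locally, realize it by a valuation in $K$) is reasonable and indeed goes beyond what the paper records, but the hardness half, as proposed, does not go through.
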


\begin{proof}
We only consider the case for \Pmem. Then
\[ \SAT_\st (\phi) = \SAT_\mem(\phi) = 
\SAT_{P\mem}(\phi \leftand...\leftand \phi),\] 
where $\phi$ is repeated $n+1$ times
with $n$ the number of atoms occurring in $\phi$.
Each time a $\phi$ evaluates to \tr\ while it 
would not do so in $=_\mem$, this is due to some atom
that changes the reply.
So, this must be a change from \fa\ to \tr, because
\tr\ remains \tr\ by $=_{P\mem}$.
Per atom this can happen at most once, and if each $\phi$ 
yields \tr, then at least once without
an atom in between flips. But then $\phi$ is also satifiable
in $=_\mem$. 
\end{proof}

For $K\in\{\con{+}P\mem, \wm{+}P\mem, \con{+}N\mem, \wm{+}N\mem\}$,
each closed term can be written with $\tr,\fa$, $\neg$, 
$\leftand$ and $\leftor$
only. For example in $\con{+}\Pmem$: 
\[x\lef a\rig y = (a \leftand x) \leftor(\neg a\leftand y) 
\]
because after a positive reply to $a$ and whatever 
happens in $x$, the next $a$ is again positive, so $y$ is not evaluated,
and after a negative reply to $a$, the subsequent $a$ gets a negative reply 
because of $\con$, so then $y$ is tested.
So here we see models that identify less than $=_\mem$ and in which
each closed term can be written without conditional composition.
At first sight, this cannot be done in a uniform way
(using variables only), and it also yields a combinatoric
explosion because first a rewriting to basic form is needed.
For these models $K$, $\SAT_K$ is known to be NP-complete.

\section{Expressiveness}
\label{sec:exp}
In this section we first show that
the ternary conditional operator cannot be replaced by 
$\neg$ and $\leftand$ and one of $\tr$ and $\fa$
(which together define $\leftor$)
modulo free valuation congruence.
Then we show that this is the case for any collection of
unary and binary operators each of which is
definable in $\Sigma_{\CP}(A)$ with free 
valuation congruence, and in a next theorem
we lift this result to
contractive valuation congruence. Finally we observe
that the conditional operator is definable
with $\neg$ and $\leftand$ and one of $\tr$ and $\fa$
modulo memorizing valuation congruence. It remains
an open question whether this is the case
in weakly memorizing valuation congruence.

An occurrence of an atom $a$ in a propositional statement
over $A$, $\neg$, 
$\leftand$ and $\leftor$ is \emph{redundant} or 
\emph{inaccessible} if it is not used along any of the possible
arbitrary valuations, as in for example $\fa\leftand a$.

The opposite of redundancy is accessibility, which is defined thus
($acc(\phi)\subseteq A$): 
\begin{align*}
acc(a)&=\{a\},\\
acc(\tr)&=\emptyset,\\
acc(\neg x)&=acc(x),\\
acc(x\leftand y)&=
\begin{cases}
acc(x)&\text{if }\neg \SAT_\fr(x),\\
acc(x)\cup acc(y)&\text{if } \SAT\fr(x),
\end{cases}\\
acc(x\leftor y)&=
\begin{cases}
acc(x)&\text{if }\neg \SAT_\fr(\neg x),\\
acc(x)\cup acc(y)&\text{if } \SAT_\fr(\neg x).
\end{cases}
\end{align*}

\begin{proposition}
\label{prop:fr}
Let $a\in A$, then 
the propositional statement
$a\lef a\rig\neg a$ cannot be expressed 
in $\Sigma_{\CP}(A)$ with free valuation
congruence using
$\leftand$, $\leftor$, $\neg$, \tr\ and \fa\ only.
\end{proposition}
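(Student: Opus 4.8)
The plan is to turn the statement into a purely combinatorial question about a canonical representative and then exclude it. First I would use canonicity of basic forms: by Lemma~\ref{proplem} every term has a basic form, and by Proposition~\ref{propo} (via Lemma~\ref{lem:above}) free valuation congruent basic forms are syntactically equal. Hence $a\lef a\rig\neg a$ is expressible in the claimed fragment if and only if its basic form
\[B^\ast\;=\;(\tr\lef a\rig\fa)\lef a\rig(\fa\lef a\rig\tr)\]
is the basic form of some term over $\tr,\fa$, atoms, $\neg$, $\leftand$, $\leftor$. Next I would pass to negation normal form: using $\neg\neg x=x$, $\neg\tr=\fa$, $\neg\fa=\tr$ (Table~\ref{CPd}) and the sequential De Morgan laws $\neg(x\leftand y)=\neg x\leftor\neg y$ and its dual (derived in \CP\ just after Table~\ref{DC}), every such term is $=_{\fr}$ to one built from the literals $a,\neg a$ and the constants $\tr,\fa$ using $\leftand$ and $\leftor$ only. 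So it suffices to show $B^\ast$ is not generated from $\{\tr,\fa,a,\neg a\}$ by these two connectives.

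The key observation is how $\leftand$ and $\leftor$ act on basic forms. By CP4 together with CP1 and CP2 one gets $(P_1\lef c\rig P_2)\leftand Q=(P_1\leftand Q)\lef c\rig(P_2\leftand Q)$, so $\leftand$ grafts a copy of its right operand onto every $\tr$-leaf of its left operand, and dually $\leftor$ grafts onto every $\fa$-leaf. I would isolate one leaf-criterion: for generated basic forms $X,Y$, the graft $X\leftand Y$ is a single leaf only when $X=\fa$, or $X=\tr$ and $Y$ is a leaf (dually for $\leftor$). In particular grafting a non-leaf operand never yields a leaf, and a graft leaves the root atom unchanged unless its left operand is a constant.

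Now assume $B^\ast$ is generated and take a shortest generation; dropping the trivial grafts $\,\cdot\leftand\tr$ and $\,\cdot\leftor\fa$, its last step is a proper $\leftand$- or $\leftor$-graft. In the case $B^\ast=X\leftand Y$, since $B^\ast$ has root $a$ the operand $X$ must have root $a$ (a constant $X$ gives a trivial or impossible outcome), say $X=X_1\lef a\rig X_2$; matching children yields $X_1\leftand Y=\tr\lef a\rig\fa$ and $X_2\leftand Y=\fa\lef a\rig\tr$. The first equation, analysed with the leaf-criterion, forces $X_1=\tr$ and $Y=a$ (any other branch makes the graft trivial). The second then reads $X_2\leftand a=\fa\lef a\rig\tr$, forcing $X_2=Z_1\lef a\rig Z_2$ with $Z_2\leftand a=\tr$; but $Z_2\leftand a$ is a graft with non-leaf right operand $a$, hence by the leaf-criterion is a leaf only when $Z_2=\fa$, giving $\fa\neq\tr$ — a contradiction. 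The case $B^\ast=X\leftor Y$ is dual: the generated class is closed under $\neg$ (De Morgan pushes $\neg$ back onto literals, $a\leftrightarrow\neg a$), and $\neg$ is an involution exchanging $\leftand$ and $\leftor$ and sending $B^\ast$ to the XOR-tree $(\fa\lef a\rig\tr)\lef a\rig(\tr\lef a\rig\fa)$, so this case reduces to a $\leftand$-last generation of $\neg B^\ast$, excluded by the identical computation. I expect the main obstacle to be exactly this case analysis: because a graft acts \emph{globally}, replacing all same-valued leaves at once, shaping the $\tr$-child of $B^\ast$ into the literal $a$ and its $\fa$-child into $\neg a$ cannot be done independently, and keeping track of these interfering global effects together with the degenerate constant operands is the delicate part; the leaf-criterion is what forces every sub-case down to a constant operand, and stating it sharply enough to close all branches is the crux.
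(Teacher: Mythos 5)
Your proposal is correct, but it takes a genuinely different route from the paper's proof. The paper argues semantically: it takes a minimal expression $\psi$ of $a\lef a\rig\neg a$, splits on whether the outermost connective is $\leftor$ or $\leftand$, and reasons about how often $a$ can be evaluated along a valuation and which replies can still influence the final value, reaching a contradiction with the fact that for $a\lef a\rig\neg a$ the second reply to $a$ always matters. You instead make the question purely syntactic: by Lemma~\ref{proplem} and Proposition~\ref{propo} basic forms are canonical representatives of $=_{\fr}$-classes, so expressibility amounts to the tree $B^\ast=(\tr\lef a\rig\fa)\lef a\rig(\fa\lef a\rig\tr)$ arising from literals and constants under $\leftand,\leftor$ (after the De Morgan/NNF reduction), and you close a finite tree-matching case analysis using the grafting characterization of $\leftand,\leftor$ on basic forms, a leaf criterion, and minimality of the generation, dispatching the $\leftor$ case by the negation duality. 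What your route buys is rigor and reusability: every step is a concrete syntactic check against a unique normal form, with no informal counting of ``uses'' of $a$, and the grafting lemma is a nice structural fact in its own right; the paper's semantic style, by contrast, is the one that scales to its later and harder expressiveness results (Theorems~\ref{thm:fr} and~\ref{THM:CR}), where canonical-form bookkeeping becomes unwieldy. Two small repairs to your write-up: after the NNF step the generators must be literals over \emph{all} atoms, not just $\{a,\neg a\}$, since an expressing term may mention other atoms (inaccessibly) --- this costs nothing, because your case analysis only ever uses root-preservation and the leaf criterion, never which atoms the literals are over; and the slogan ``grafting a non-leaf operand never yields a leaf'' is not literally true (e.g.\ $\fa\leftand Y=\fa$ for non-leaf $Y$), though the precise leaf criterion you state, and actually apply, is correct.
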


\begin{proof}
Let $\psi$ be a minimal expression of $a\lef a\rig\neg a$.

Assume $\psi=\psi_0\leftor\psi_1$. We notice:
\begin{itemize}
\item 
Both $\psi_0$ and $\psi_1$ must contain $a$: if $\psi_0$ 
contains no $a$, it is either $\tr$ and then 
$\psi$ always yields \tr\ which is wrong,
or \fa\ and then $\psi$
can be simplified; if $\psi_1$ contains no $a$ it is either \tr\
and then $\psi$ always yields \tr\ which is wrong, or \fa\ and 
then $\leftor\fa$ can be removed so $\psi$ was not minimal.

\item
$\psi_0$ can yield \fa\ otherwise $\psi$ is not minimal.
It will do so after using exactly one test $a$ 
(yielding \fa\ without a use
of $a$ simply means that $a\not\in acc(\psi_0)$), 
yielding \fa\ after two uses of $a$ implies that evaluation
of $\psi$ has at least three uses of $a$ (which is wrong).
\item 
$\psi_0$ has at most two uses of $a$ if $\psi_0$ yields \tr,
and at most one use of $a$ (so exactly 1) if $\psi_0$ yields \fa.
\end{itemize}
Thus,
$\psi_0=\fa\lef a\rig(\overline a\leftor\tr)$ or 
$\psi_0=\fa\lef \neg a\rig(\overline a\leftor\tr)$, where the 
$\overline a$ in the righthand sides equals either $a$ 
or $\neg a$, and these sides 
take their particular form by minimality 
(other forms are $\tr\leftor \overline a=\tr$, etc.). 
But both are impossible
as both imply that after a first use of $a$ the final value of 
$\psi$ can be independent of the second value returned for $a$
which is not true for $a\lef a\rig\neg a$.

For the case $\psi=\psi_0\leftand\psi_1$ a similar type of 
reasoning applies.
\end{proof}

Below we prove two more general results. We first 
introduce some auxiliary definitions and notations because
we have to be precise about definability by unary
and binary operators.
For $X$ a countable set of variables, we define
\begin{description}
\item[$T_C(X):$] 
the set of terms over 
$X,\tr,\fa,\_\lef\_\rig\_\;$.
\item[$T_{TND}(X):$] 
the set of terms over $X,\tr,\neg,\leftor$.
\item[$T_C^{1,2}(X):$] the smallest set of terms $V$ such that
\begin{itemize}
\item $\tr,\fa\in V$,
\item if $x\in X$ and $t\in T_C(\{x\})$ then $t\in V$,
\item if $x,y\in X$ and $t\in T_C(\{x,y\})$ then $t\in V$,
\item $V$ is closed under substitution.
\end{itemize}
\end{description}
Thus $T_C^{1,2}(X)$ contains the terms that can be made
from unary and binary operators definable in $T_C(X)$.
For $t\in T_C^{1,2}(\{x\})$ 
we sometimes write $t(x)$ instead of $t$, and
if $s\in T_C^{1,2}(X)$ 
we write $t(s)$ for the term
obtained by substituting $s$ for all $x$ in $t$.
Similarly, if $u\in T_C^{1,2}(\{x,y\})$
we may write $u(x,y)$ for $u$ and if $s,s'\in T_C^{1,2}(X)$ 
we write $u(s,s')$ for the term obtained
by substituting $s$ for $x$ and $s'$ for $y$ in $u$.
Finally, we define
$\#_{2p}(t)$ as the number of 2-place terms used in 
the definition of $t$, i.e.,
\begin{align*}
\#_{2p}(x)&=\#_{2p}(\tr)=\#_{2p}(\fa)=0,\\
\#_{2p}(t(s))&=\#_{2p}(t)+\#_{2p}(s),\\
\#_{2p}(u(s,s'))&=\#_{2p}(u)+\#_{2p}(s)+\#_{2p}(s').
\end{align*}

Notice $T_{TND}(X)\subseteq_K T_C^{1,2}(X)\subseteq_K
T_C(X)$, where $M\subseteq_K N$ if for each term $t\in M$
there is a term $r\in N$ with $r=_K t$ for 
$K\in\{\fr,\rp,\con,\wm,\mem,\st\}$. We write
$\in_{K}$ for the membership relation
associated with $\subseteq_K$.

The sets $T_{TND}(A)$, $T_C^{1,2}(A)$ and $T_C(A)$ contain the 
closed substitution instances of the respective term sets
when constants from $A$ are substituted for the variables.
The set $T_C^{1,2}(A,X)$
contains the terms constructed from $T_C^{1,2}(A)$ 
and $T_C^{1,2}(X)$.
For given terms $r(x)\in T_C^{1,2}(A,\{x\})$ and
$t\in T_C^{1,2}(A)$ we write $r(t)$ for the term obtained 
by substituting $t$ for all $x$ in $r$. 
(Another common notation 
for $r(t)$ is $[t/x]r(x)$.)
We extend the definition of $\#_{2p}(t)$ to $T_C^{1,2}(A,X)$
in the expected way by defining $\#_{2p}(a)=0$ for all $a\in A$.

Clearly for all $K$,
\[T_{TND}(A)\subseteq_K T_C^{1,2}(A)\subseteq_K
T_C(A).\]
From Proposition~\ref{prop:fr} we find that 
$a\lef a\rig\neg a\not\in_{\fr}T_{TND}(A)$, thus
\[T_{TND}(A)\varsubsetneq_{\fr} T_C^{1,2}(A).\]
Theorem~\ref{thm:fr} below
establishes that $T_C^{1,2}(A)
\varsubsetneq_{\fr}T_C(A)$ as $a\lef b\rig c\not\in_{\fr}
T_C^{1,2}(A)$. This result transfers to $\rp$-congruence
without modification. However, in $\wm$-congruence we find
\[a\lef b\rig c =_{\wm}(\tr\lef b\rig c)\lef
(a\lef b\rig\tr)\rig \fa=(\neg b\leftor a)\leftand(b\leftor c),\]
thus $a\lef b\rig c\in_{\wm}
T_C^{1,2}(A)$.

\begin{theorem}
\label{thm:fr}
If $|A|>2$ then the conditional operator cannot be expressed 
modulo free valuation 
congruence in $T_C^{1,2}(X)$. 
\end{theorem}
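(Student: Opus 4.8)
The plan is to show that no term in $T_C^{1,2}(X)$ — that is, no propositional statement built from nullary constants and from unary and binary operators each definable by conditional composition — can express the ternary conditional $a\lef b\rig c$ modulo free valuation congruence, when $A$ has at least three distinct atoms. The key semantic fact I would exploit is that free valuations are observation based: a valuation is completely described by the function $H_f:A^+\rightarrow\BB$ giving the reply to each finite string of queries, and $P/H$ depends only on the sequence of atoms actually queried along the evaluation path and the replies received. In particular, the \emph{first} atom queried during the evaluation of any propositional statement is determined by the statement alone (not by $H$), and then evaluation branches on the reply. So I would associate to each term its behaviour as a decision tree / query strategy over $A$, and argue about which atoms can be queried first and how evaluation can depend on later replies.

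The core of the argument is a structural/inductive claim about $T_C^{1,2}(X)$ that bounds the ``interaction'' such terms can express. The defining feature of $a\lef b\rig c$ with three \emph{distinct} atoms $a,b,c$ is that after the first query (to $b$) the continuation genuinely depends on $b$'s reply: one branch queries $a$ and the other queries $c$, and the final value depends nontrivially on a \emph{second, different} atom chosen according to $b$'s answer. I would formalise a notion — call it the set of atoms that can be queried after an initial query to a given atom along the two reply-branches — and prove by induction on the construction of a term $t\in T_C^{1,2}(X)$ (using $\#_{2p}(t)$ as the induction measure, as the paper's notation for counting two-place subterms suggests) that the family of query strategies realisable by terms over unary and binary operators cannot realise the specific three-atom branching pattern of $a\lef b\rig c$. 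The inductive step for a binary operator $u(s,s')$ reduces, via the substitution closure and $\#_{2p}$ additivity established just before the theorem, to the behaviour of $u$ on at most two variables together with the already-constrained behaviours of the substituted subterms; the unary case $t(s)$ is handled similarly and is easier. Proposition~\ref{prop:fr} is the base instance of exactly this phenomenon for $\leftand,\leftor,\neg$, and I would use it (or its proof technique) as the template, lifting its case analysis to an arbitrary definable unary/binary signature.

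The main obstacle I expect is the bookkeeping needed to control an \emph{arbitrary} collection of definable unary and binary connectives simultaneously, rather than the fixed set $\{\neg,\leftand,\leftor\}$ of Proposition~\ref{prop:fr}. A binary operator definable in $T_C(\{x,y\})$ can itself be a complicated decision tree that queries $x$ and $y$ in some order, possibly multiply, and the substitution of complex subterms for $x$ and $y$ means the ``first queried atom'' and the branching structure of the whole term are entangled. The right invariant must be robust under both substitution and the formation of new binary operators, and this is precisely where the hypothesis $|A|>2$ enters: with three distinct atoms available one can always diagonalise against whichever finite query pattern the candidate term commits to, producing a free valuation on which the term and $a\lef b\rig c$ disagree. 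So the plan is to identify the smallest such substitution-closed invariant that separates $a\lef b\rig c$ from all of $T_C^{1,2}(X)$, verify it holds of the constants and is preserved by unary and binary definable operators, and then exhibit, using the freedom afforded by $|A|>2$, a distinguishing valuation. The delicate point — and where I would spend the most care — is making the invariant strong enough to push through the binary induction step while still being visibly false for $a\lef b\rig c$.
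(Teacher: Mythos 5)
Your proposal has the right target and even the right induction measure ($\#_{2p}$, minimality, closure under substitution), but it stops short of the actual mathematical content: the invariant that is supposed to separate $a\lef b\rig c$ from every term of $T_C^{1,2}(A)$ is never stated, and the claim that one can ``always diagonalise against whichever finite query pattern the candidate term commits to'' is precisely the assertion to be proved, not a tool available for proving it. Worse, the particular invariant you gesture at --- the set of atoms that can be queried after an initial query, along the two reply branches --- cannot work. The term $(\neg b\leftor a)\leftand(b\leftor c)$ realises exactly the query pattern you describe for $a\lef b\rig c$: it queries $b$ first, queries $a$ when the reply is \tr\ and (eventually) $c$ when the reply is \fa, and the final value is determined accordingly; indeed the paper notes that $a\lef b\rig c =_{\wm}(\neg b\leftor a)\leftand(b\leftor c)$, so the two are even congruent in the weakly memorizing variety. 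What separates them under \emph{free} valuations is not the pattern of first and second queries but the fact that the binary-connective term is forced to query $b$ a \emph{second} time, and a free valuation may answer that second query differently. Any invariant strong enough to push through your induction must therefore track repeated queries to the same atom while remaining stable under substitution and under arbitrary definable binary operators; identifying such an invariant is the whole difficulty, and your proposal explicitly defers it (``the delicate point''), so no proof is present.

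For comparison, the paper does not propagate an invariant through all terms at all. It takes a counterexample $t=_{\fr}a\lef b\rig c$ with $\#_{2p}(t)$ minimal, shows $t$ must have the shape $r(f(b,t'))$ with $b$ central in $f(b,t')$ and $x$ central in $r(x)$, writes $f(x,y)=_{\fr}g(x,y)\lef x\rig h(x,y)$, and then performs a finite case analysis on the shapes of $g(b,t')$ and $h(b,t')$: either both are proper conditional compositions with $t'$ central, which is impossible because the central atom of $t'$ would then have to be both $a$ and $c$; or one of them is \tr\ or \fa\ up to $=_{\fr}$, in which case either some occurrence of $c$ is reachable after both replies to $b$, or the reply to $a$ is never used --- each case contradicting $t=_{\fr}a\lef b\rig c$ or the minimality of $\#_{2p}(t)$. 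If you want to salvage your plan, that case analysis is in effect the ``invariant'' you are looking for, but it is applied once to a minimal counterexample rather than verified inductively for every term, and reconstructing it is where your proposal would have to do all of its work.
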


\begin{proof}
It is sufficient to prove that 
$a\lef b\rig c\not\in_{\fr} T^{1,2}_C(A)$.

Towards a contradiction,
assume $t\in T_C^{1,2}(A)$ is a term such that 
$t=_{\fr}a\lef b\rig c$ and $\#_{2p}(t)$ is minimal
(i.e., if $u\in T_C^{1,2}(A)$ and $u=_{\fr}t$ then
$\#_{2p}(u)\geq \#_{2p}(t)$).

We first argue that $t\not\equiv f(b,t')$ for some
binary function $f$ and term $t'$. Suppose 
otherwise, then $b$ must be 
the central condition in $f(b,t')$, so 
$f(b,t')=_{\fr}g(b,t')\lef b\rig h(b,t')$ 
for certain binary functions $g$ and $h$ in 
$T_C^{1,2}(X)$. Because it is neither the case
that $b$ can occur as a central condition
in both $g(b,t')$ and 
in $h(b,t')$, nor that each of these can be modulo 
$\fr$ in $\{\tr,\fa\}$, we find
\[t=_{\fr}(P\lef t'\rig Q)\lef b\rig
(P'\lef t'\rig Q')\]
for certain $P,P',Q,Q'$. The
only possibilities 
left are that the central atom of $t'$ is either $a$ or $c$,
and both choices contradict $f(b,t')=_{\fr}a\lef b\rig c$.

So it must be the case that
\[t\equiv r(f(b,t'))\]
for some term $r(x)\in T_C^{1,2}(\{x\})$
such that $b$ is central in $f(b,t')$ and
$x$ is central in $r(x)$. 
If no such term $r(x)$ exists,
then $t\equiv f'(a')$ with $f'(x)$ a unary operator
definable in $T_C^{1,2}(\{x\})$ and $a'\in A$, 
which cannot hold because
$t$ needs to contain $a$, $b$ and $c$.

Also there cannot be a unary
function $f'\in T_C^{1,2}(\{x\})$ with 
$r(f'(b))=_{\fr}r(f(b,t'))$,
otherwise $r(f'(b))\in T^{1,2}_C(A)$ while
\[\#_{2p}(r)=\#_{2p}(r(f'(b)))
<\#_{2p}(r(f(b,t')))=\#_{2p}(r)+\#_{2p}(t')+1,\] 
which contradicts the minimality of $\#_{2p}(t)$.

As $x$ is central in $f(x,y)$ we may write
\[f(x,y)=_{\fr}g(x,y)\lef x\rig h(x,y)\]
for certain binary functions $g$ and $h$ in 
$T_C^{1,2}(X)$. Because
$b$ is central in $t$ we find
\[t=_{\fr}
r(g(b,t')\lef b\rig h(b,t')).
\]

We proceed with a case distinction on the form that 
$g(b,t')$ and $h(b,t')$
may take. At least one of these is modulo $\fr$
not equal to $\tr$ or $\fa$ (otherwise $f(b,t')$ 
could be replaced by $f'(b)$ for some unary function $f'$ 
and this was excluded above).

\begin{enumerate}

\item
Suppose $g(b,t')\not\in_{\fr}\{\tr,\fa\}$ and 
$h(b,t')\not\in_{\fr}\{\tr,\fa\}$. First notice
that $b$ cannot occur as a central condition
in both $g(b,t')$ and 
in $h(b,t')$. So, both $g(x,y)$ and $h(x,y)$ can
be written as a conditional composition 
with $y$ as the central variable, and we find
\[t=_{\fr}r((P\lef t'\rig Q)\lef b\rig
(P'\lef t'\rig Q'))\]
for certain closed terms $P,Q,P',Q'$. 
By supposition $t'\not\in_{\fr}
\{\tr,\fa\}$, and the only possibilities 
left are that its central atom equals both $a$ and $c$,
which clearly is impossible.

\item We are left with four cases:
either $a$ is central in $g(b,t')$ and 
$h(b,t')\in_{\fr}\{\tr,\fa\}$, or
$c$ is central in $h(b,t')$ and 
$g(b,t')\in_{\fr}\{\tr,\fa\}$.
These cases are symmetric and it suffices to consider
only the first one, the others can be dealt with similarly. 

So assume $a$ is central in $g(b,t')$ and 
$h(b,t')=_{\fr}\tr$, hence
\[g(b,t')=_{\fr}
P\lef a\rig Q\quad\text{ for some $P,Q\in\{\tr,\fa\}$}.
\]
We find
\[t=_{\fr}r((P\lef a\rig Q)\lef b\rig\tr),\]
and we distinguish two cases:

$i$. $P\equiv \tr$ or $Q\equiv \tr$. 
Now a central $c$ can be 
reached after a negative reply to $b$.
But this central $c$ can also be reached
after a positive reply to $b$ and the appropriate reply 
to $a$, which contradicts free congruence with
$a\lef b\rig c$.

$ii$. $P\equiv Q\equiv\fa$. Then the reply to $a$ 
in $r((\fa\lef a\rig \fa)\lef b\rig\tr)$ is not used,
which also contradicts free congruence with
$a\lef b\rig c$.
\end{enumerate}
This concludes our proof.
\end{proof}

We will now argue that $a\lef b\rig c\not\in_{\con}
T_C^{1,2}(A)$. We will make use of additional
operators $\tr_a$ and $\fa_a$ for each atom $a\in A$, 
defined for all $b\in A$ and terms $t,r\in T_C^{1,2}(A)$ by
\begin{align*}
\tr_a(\tr)&=\tr,&\fa_a(\tr)&=\fa,\\
\tr_a(\fa)&=\tr,&\fa_a(\fa)&=\fa,\\
\tr_a(t\lef b\rig r)&=t\lef b\rig r\text{~~if }a\neq b,
&\fa_a(t\lef b\rig r)&=t\lef b\rig r\text{~~if }a\neq b,\\
\tr_a(t\lef a\rig r)&=\tr_a(t),&\fa_a(t\lef a\rig r)&=\fa_a(r).
\end{align*}
Observe that $\tr_a$ ($\fa_a$) simplifies a term $t$
as if it is a subterm of $a\andthen t$ with the additional
knowledge that the reply on $a$ has been \tr.
We notice that 
\[t\lef a \rig r=_{\con}\tr_a(t)\lef a \rig \fa_a(r).\]

We define a term $P$ to have the property 
$\phi_{a,b,c}$ if
\begin{itemize}
\item
the central atom of $\tr_b(P)$ equals $a$,
$\,\tr_a(\tr_b(P))\in_{\con}\{\tr,\fa\}$ and
$\fa_a(\tr_b(P))\in_{\con}\{\tr,\fa\}$, and
$\tr_a(\tr_b(P))\ne_{\con}
\fa_a(\tr_b(P))$,
\item
the central atom of $\fa_b(P)$ equals $c$,
$\,\tr_c(\tr_b(P))\in_{\con}\{\tr,\fa\}$ and
$\fa_c(\tr_b(P))\in_{\con}\{\tr,\fa\}$, and
$\tr_c(\tr_b(P))\ne_{\con}
\fa_c(\tr_b(P))$.
\end{itemize}
Typically, $a\lef b\rig c$ has property $\phi_{a,b,c}$.

\begin{theorem}
\label{THM:CR}
If $|A|>2$ then the conditional operator cannot be 
expressed modulo contractive valuation 
congruence in $T_C^{1,2}(X)$.
\end{theorem}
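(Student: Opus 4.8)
The plan is to mirror the minimal-counterexample argument of Theorem~\ref{thm:fr}, using the operators $\tr_a,\fa_a$ and the property $\phi_{a,b,c}$ to absorb exactly those contractions that $=_{\con}$ licenses. As there, it suffices to prove $a\lef b\rig c\not\in_{\con}T_C^{1,2}(A)$, so I would assume towards a contradiction that some $t\in T_C^{1,2}(A)$ satisfies $t=_{\con}a\lef b\rig c$ with $\#_{2p}(t)$ minimal. Because $a\lef b\rig c$ has property $\phi_{a,b,c}$ and $\phi_{a,b,c}$ is stated purely through $\tr_a,\fa_a$ and the relations $=_{\con},\ne_{\con},\in_{\con}$ (all of which respect contractive congruence), the statement $t$ inherits $\phi_{a,b,c}$. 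Unfolding it, together with $t=_{\con}a\lef b\rig c$ to fix the signs, records the two facts I shall exploit: $\tr_b(t)=_{\con}a$ and $\fa_b(t)=_{\con}c$. In words, once the first reply to $b$ is fixed the statement collapses to one that returns precisely the reply to $a$ (respectively $c$) and consults nothing else.

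I would then analyse the outermost operator of $t$ as in Theorem~\ref{thm:fr}. Since $\tr_b(t)$ and $\fa_b(t)$ have different central atoms, $b$ is tested first, and one obtains $t\equiv r(f(b,t'))$ with $f$ a binary operator in which $b$ is central in $f(b,t')$ and $r(x)$ a context in which $x$ is central; the alternative $t\equiv f'(a')$ is excluded because $t$ must mention all of $a,b,c$. Minimality of $\#_{2p}(t)$ forbids $f(b,t')=_{\con}f'(b)$ for any unary $f'$, since such a replacement strictly lowers $\#_{2p}$. Writing $f(x,y)=_{\con}g(x,y)\lef x\rig h(x,y)$ (legitimate as $x$ is central in $f$) yields $t=_{\con}r(g(b,t')\lef b\rig h(b,t'))$.

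The decisive step is to push this decomposition through $\tr_b$ and $\fa_b$. By their defining clauses these operators strip a leading run of $b$-tests, taking the positive (respectively negative) branch and halting at the first central atom different from $b$, so $\tr_b(t)=_{\con}a$ and $\fa_b(t)=_{\con}c$ translate into the demands that the positive branch $g(b,t')$ compute exactly $a$ and the negative branch $h(b,t')$ compute exactly $c$. I would then re-run the case distinction of Theorem~\ref{thm:fr} on whether $g(b,t'),h(b,t')\in_{\con}\{\tr,\fa\}$. If both are non-trivial, $b$ cannot be central in both, forcing the central atom of $t'$ to be simultaneously $a$ and $c$, which is impossible as $|A|>2$. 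In the remaining cases one branch is a constant while the other requires $t'$ to have central atom $a$ (or $c$) with that constant in a spurious position; inspecting the resulting term, now invoking contraction through $\tr_a$/$\fa_a$, shows that the reply to $a$ (respectively $c$) is either ignored or is reachable along a branch on which $a\lef b\rig c$ never consults it, contradicting $\phi_{a,b,c}$.

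The main obstacle I anticipate is the bookkeeping of this last step, and it is exactly where the contractive case departs from the free one. Contraction identifies only \emph{consecutive} equal tests, so a deeper recurrence of $b$ inside $t'$ or inside $r$ is a genuine fresh query and behaves freely, whereas a leading run of $b$'s is merged; the operators $\tr_b,\fa_b$ are designed to account for precisely the latter. I must therefore verify that $\tr_b$ and $\fa_b$ commute with the context $r$ and with the decomposition $g\lef b\rig h$ as claimed, establishing identities such as $\tr_b(r(s\lef b\rig s'))=_{\con}\tr_b(s)$ and their $\fa_b$-counterparts, so that the only deviation from free behaviour is the top-level contraction already captured and no hidden re-evaluation smuggles in the extra branching that $a\lef b\rig c$ demands. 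That this extra power really does appear one variety coarser is a useful sanity check: in $=_{\wm}$ one has $a\lef b\rig c=_{\wm}(\neg b\leftor a)\leftand(b\leftor c)$, so the argument must break down exactly when the weakly-memorizing laws are added.
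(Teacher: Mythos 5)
Your proposal has a genuine gap, and it sits exactly where the contractive case differs from the free one. The first problem is that your induction invariant is too weak: you take $\#_{2p}$-minimality over the set $\{t\in T_C^{1,2}(A): t=_{\con}a\lef b\rig c\}$, whereas the paper takes minimality over the set of terms having property $\phi_{a,b,c}$. This is not cosmetic. The property $\phi_{a,b,c}$ amounts to being congruent to $\alpha\lef b\rig\gamma$ for some $\alpha\in\{a,\neg a\}$ and $\gamma\in\{c,\neg c\}$, and the smaller terms produced in the descent (the subterm $t'$ in the paper's case 1, and the term $r(b)$ in cases 1($ii$) and 2($ii$), where even the roles of $a$ and $c$ may be interchanged) satisfy $\phi_{a,b,c}$ but need not be congruent to $a\lef b\rig c$ itself. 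Such terms contradict the paper's minimality but not yours, and they cannot be converted into members of your class: flipping the sign of the $a$-branch would require applying negation inside one branch of the $b$-test, which no unary pre- or post-composition in $T_C^{1,2}(X)$ achieves. So at the exact moment your argument needs a contradiction, it produces none.

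The second problem, which is really the same phenomenon seen from the case analysis, is that your main case imports free-congruence reasoning that is false modulo $=_{\con}$. You claim that if $g(b,t')$ and $h(b,t')$ are both non-trivial then the central atom of $t'$ must be ``simultaneously $a$ and $c$''. Under contraction this is not forced, because $t'$ may itself have central atom $b$, and its leading $b$-tests contract against the outer one by \axname{CPcr1}/\axname{CPcr2}, so that $\tr_b(t')$ has central atom $a$ while $\fa_b(t')$ has central atom $c$. Concretely, with $f(x,y)=\neg y\lef x\rig y$ and any $t'=_{\con}\neg a\lef b\rig c$ one gets
$f(b,t')=_{\con}(a\lef b\rig\neg c)\lef b\rig(\neg a\lef b\rig c)=_{\con}a\lef b\rig c$:
both branches are non-trivial, $b$ is central in both, there is no clash of central atoms, and the inner term $t'$ is a sign-variant lying outside your minimality class. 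This configuration is precisely what the paper's Appendix~\ref{app:A} is built to handle: there, one shows that in this case $\tr_b(t')=_{\con}P\lef a\rig Q$ and $\fa_b(t')=_{\con}P'\lef c\rig Q'$ with $P,Q,P',Q'\in\{\tr,\fa\}$, and then either $t'$ itself satisfies $\phi_{a,b,c}$ (contradicting minimality over $\phi$, with $\#_{2p}(t')<\#_{2p}(t)$), or one degenerates into the constant-branch cases, where again the contradiction is a smaller term $r(b)$ satisfying the property. The missing idea in your proposal is therefore the paper's central device: run the minimal-counterexample argument on the sign- and swap-robust property $\phi_{a,b,c}$ rather than on the congruence class of $a\lef b\rig c$. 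Your preliminary observation that $\phi_{a,b,c}$ is $=_{\con}$-invariant is reasonable (it needs the completeness theorem for $=_{\con}$ to make ``central atom'' well defined), but it does not substitute for choosing the right class over which to minimize.
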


\begin{proof}
Let $a,b,c\in A$. It is sufficient to show that no term
in $T^{1,2}_C(A)$ has property $\phi_{a,b,c}$. 
A detailed proof of this fact is 
included in Appendix~\ref{app:A}.
\end{proof}

Finally, we observe that $x\lef y\rig z$
is expressible in $\CP_{\mem}$ using $\leftand$ and $\neg$
only: first $\leftor$ is expressible, and
\begin{align*}
\CP_{\mem}\vdash
(y\leftand x)\leftor(\neg y\leftand z)
&=\tr\lef(x\lef y\rig\fa)\rig(z\lef(\fa\lef y\rig\tr)\rig\fa)\\
&=\tr\lef(x\lef y\rig\fa)\rig(\fa\lef y\rig z)\\
&=(\tr\lef x\rig(\fa\lef y\rig z))\lef y\rig(\fa\lef y\rig z)\\
&\stackrel{\eqref{mini1}}=(\tr\lef x\rig\fa)\lef y\rig(\fa\lef y\rig z)\\
&\stackrel{\eqref{mini3}}=x\lef y\rig z.
\end{align*}
Thus, for $x,y,z\in X$ it holds that
$(x\lef y\rig z)\in_{\mem}T^{1,2}_C(X)$. We leave
it as an open question whether 
$(x\lef y\rig z)\in_{\wm}T^{1,2}_C(X)$.

\section{Projections and the projective limit model}
\label{sec:proj}
In this section we introduce the 
\emph{projective limit model} $\PLM$ for defining
potentially infinite propositional statements.

Let $\Prop$ be the domain of the initial algebra of \CP,
so each element in \Prop\ can be represented by a basic
form. 
Let $\Nplus$ denote $\Nat\setminus \{0\}$.
We first define a so-called \emph{projection} operator
\[\pi:\Nplus\times \Prop\rightarrow \Prop,\]
which will be used to finitely approximate every propositional
statement in $\Prop$. 
We further write 
\[\pi_n(P)
\]
instead of $\pi(n,P)$.
The defining equations for the $\pi_n$-operators are these
($n\in\Nplus$):
\begin{align}
\pi_{n}(\tr)&=\tr,\\
\pi_{n}(\fa)&=\fa,\\
\pi_1(x\lef a\rig y)&=a,\\
\pi_{n+1}(x\lef a\rig y)&=\pi_n(x)\lef a \rig\pi_n(y),
\end{align}
for all $a\in A$. We write $\PR$ for this set 
of equations.

We state without proof that 
$\CP+\PR$ is a conservative extension of \CP\ and 
mention the following derivable identities in $\CP+\PR$
for $a\in A$ and $n\in\Nplus$:
\begin{align*}
\pi_{n}(a)&=\pi_{n}(\tr\lef a\rig\fa)=a,
\\
\pi_{n+1}(a\andthen x)&=a\andthen \pi_n(x).
\end{align*}
Below we prove that for each propositional statement $P$ there exists $n
\in\Nplus$ such that for all $j\in\Nat$,
\[\pi_{n+j}(P)=P.\]
 
The following lemma 
establishes how the arguments of the 
projection of a conditional composition can be 
restricted to certain projections, in particular 
\begin{equation}
\label{projc}
\pi_n(P\lef Q\rig R)=
\pi_n(\pi_{n}(P)\lef \pi_{n}(Q)\rig \pi_{n}(R)),
\end{equation}
which is a property that we will use in the definition
of our projective limit model.

\begin{lemma} 
\label{proj}
For all $P,Q,R\in\Prop$ and all $n\in\Nplus$, 
$k,\ell,m\in\Nat$,
\[
\pi_n(P\lef Q\rig R)=
\pi_n(\pi_{n+k}(P)\lef \pi_{n+\ell}(Q)\rig \pi_{n+m}(R)).
\]
\end{lemma}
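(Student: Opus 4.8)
The plan is to prove the identity in two stages. First I would isolate the auxiliary \emph{projection-composition} (idempotency) identity
\[
\pi_n(\pi_{n+k}(P)) = \pi_n(P)
\qquad\text{for all }P\in\Prop,\ n\in\Nplus,\ k\in\Nat,
\]
and then reduce the lemma to it. Throughout I may assume, using Lemma~\ref{proplem} and the fact that $\CP+\PR$ is a conservative extension of \CP, that $P$, $Q$ and $R$ are basic forms, so that every application of a defining equation of $\pi$ meets its side condition (an atomic central condition).

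For the auxiliary identity I would use structural induction on the basic form $P$. The cases $P=\tr$ and $P=\fa$ are immediate. For $P=P_1\lef a\rig P_2$ I would distinguish $n=1$, where both sides reduce to $a$ (since $\pi_{1+k}(P_1\lef a\rig P_2)$ again has central atom $a$, so applying $\pi_1$ yields $a=\pi_1(P)$), and $n=n'+1\geq 2$, where $n+k=(n'+k)+1$ and two applications of the last defining equation of $\pi$ give
\[
\pi_{n'+1}(\pi_{(n'+k)+1}(P_1\lef a\rig P_2))
=\pi_{n'}(\pi_{n'+k}(P_1))\lef a\rig\pi_{n'}(\pi_{n'+k}(P_2)),
\]
after which the induction hypothesis applied to $P_1$ and $P_2$ collapses each factor to $\pi_{n'}(P_i)$, matching $\pi_{n'+1}(P)$.

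With the auxiliary identity in hand I would prove the lemma by structural induction on $Q$, with the statement universally quantified over $P,R,n,k,\ell,m$. For $Q=\tr$ axiom \axname{CP1} reduces the left-hand side to $\pi_n(P)$ and the right-hand side to $\pi_n(\pi_{n+k}(P))$, which agree by the auxiliary identity; the case $Q=\fa$ is symmetric via \axname{CP2}. For $Q=Q_1\lef a\rig Q_2$ I would first apply \axname{CP4} to both $P\lef Q\rig R$ and $\pi_{n+k}(P)\lef\pi_{n+\ell}(Q)\rig\pi_{n+m}(R)$ to rewrite them as conditional compositions whose central condition is the atom $a$ (observing that $\pi_{n+\ell}(Q)$ again has central atom $a$). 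If $n=1$ both sides collapse to $a$. If $n=n'+1\geq 2$, the defining equation of $\pi$ turns each side into an $a$-composition of two $\pi_{n'}$-projections, and I would close the argument by applying the induction hypothesis to the subforms $Q_1$ and $Q_2$ at level $n'$, using the offsets $k+1,\ell,m+1$ together with the bookkeeping identities $\pi_{n+k}(P)=\pi_{n'+(k+1)}(P)$ and $\pi_{n+m}(R)=\pi_{n'+(m+1)}(R)$, so that the projected arguments produced on the right exactly match those supplied by the induction hypothesis.

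I expect the only real obstacle to be the index bookkeeping in this last step: one has to recognise that the extra projection levels $n+k$, $n+\ell$, $n+m$ on the outer operator reappear, after one application of the defining equations, as admissible offsets $k+1$, $\ell$, $m+1$ over the reduced level $n'$, so that the universally quantified induction hypothesis applies verbatim to $Q_1$ and $Q_2$. Everything else is a routine case analysis driven by \axname{CP4} and the defining equations of $\pi$, with the base level $n=1$ handled separately because there $\pi_1$ discards both arguments of a conditional composition.
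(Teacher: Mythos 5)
Your proposal is correct and follows essentially the same route as the paper's proof: structural induction on the basic form $Q$, with the projection-idempotency identity $\pi_n(\pi_{n+k}(P))=\pi_n(P)$ established exactly as in the paper's $Q=\tr$ case, and the composite case $Q=Q_1\lef a\rig Q_2$ handled via \axname{CP4}, the defining equations of $\pi$, and the universally quantified induction hypothesis with the shifted offsets $k+1,\ell,m+1$. The only differences are presentational: you isolate the auxiliary identity as a standalone claim and unfold both sides to a common form, whereas the paper proves that identity inline (as the $Q=\tr$ case) and reaches the right-hand side by a one-directional chain that invokes the induction hypothesis with both the old and the incremented offsets.
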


\begin{proof}
We may assume that $Q$ is a
basic form. We apply structural induction on $Q$.
\\[2mm]
If 
$Q=\tr$ then we have to prove that 
for all $n\in\Nplus$ and
$k\in\Nat$,
\[
\pi_n(P)=
\pi_n(\pi_{n+k}(P)).
\]
We may assume that $P$ is a basic form and we apply
structural induction on $P$. If $P\in\{\tr,\fa\}$ we
are done. If $P= P_1\lef a\rig P_2$ then we proceed by
induction on $n$. The case $n=1$ is trivial,
and 
\begin{align*}
\pi_{n+1}(P)&
=\pi_{n+1}(P_1\lef a \rig P_2)\\
&=\pi_{n}(P_1)\lef a \rig \pi_n(P_2)\\
&\stackrel{IH}=\pi_{n}(\pi_{n+k}(P_1))\lef a\rig 
  \pi_n(\pi_{n+k}(P_2))\\
&=\pi_{n+1}(\pi_{n+k}(P_1)\lef  
  a \rig \pi_{n+k}(P_2))\\
  &=\pi_{n+1}(\pi_{n+k+1}(P)).
\end{align*}
\\[2mm]
If $Q=\fa$: similar. 
\\[2mm]
If $Q=Q_1\lef a \rig Q_2$
then we proceed by induction on $n$. The case $n=1$ is trivial,
and 
\\\\
$
\begin{array}{l}
\pi_{n+1}(P\lef Q\rig R)\\
{}=\pi_{n+1}(P\lef (Q_1\lef a \rig Q_2)\rig R)\\
{}=\pi_{n+1}((P\lef Q_1\rig R)\lef a \rig (P\lef Q_2\rig R))\\
{}=\pi_{n}(P\lef Q_1\rig R)\lef a \rig \pi_n(P\lef Q_2\rig R)\\
\stackrel{IH}=\pi_{n}(\pi_{n+k}(P)\lef \pi_{n+\ell}(Q_1)\rig 
  \pi_{n+m}(R))\lef a \rig \pi_n(\pi_{n+k}(P)\lef
  \pi_{n+\ell}(Q_2)\rig \pi_{n+m}(R))\\
{}=\pi_{n}(\pi_{n+k+1}(P)\lef \pi_{n+\ell}(Q_1)\rig 
  \pi_{n+m+1}(R))\lef a\rig \pi_n(\pi_{n+k+1}(P)\lef
  \pi_{n+\ell}(Q_2)\rig \pi_{n+m+1}(R))\\
{}=\\\pi_{n+1}((\pi_{n+k+1}(P)\lef \pi_{n+\ell}(Q_1)\rig
  \pi_{n+m+1}(R))\lef a\rig(\pi_{n+k+1}(P)\lef
  \pi_{n+\ell}(Q_2)\rig \pi_{n+m+1}(R)))\\
{}=\pi_{n+1}(\pi_{n+k+1}(P)\lef (\pi_{n+\ell}(Q_1)\lef
  a \rig \pi_{n+\ell}(Q_2))\rig \pi_{n+m+1}(R))\\
{}=\pi_{n+1}(\pi_{n+1+k}(P)\lef \pi_{n+1+\ell}(Q)\rig \pi_{n+1+m}(R)).
\end{array}
\\
$
\end{proof}

The \emph{projective limit model} $\PLM$ is
defined as follows:

\begin{itemize}
\item The domain of $\PLM$ is the
set of \emph{projective sequences}
$(P_n)_{n\in\Nplus}$: these are all sequences with the
property that all $P_n$ are in $\Prop$ and
satisfy
\[\pi_{n}(P_{n+1})=P_n,\] 
so that they can be seen as successive projections of
the same infinite propositional statement 
(observe that $\pi_n(P_n)=P_n$). 
We write $\Propinfty$ for this domain, and we further write 
$(P_n)_n$ instead of $(P_n)_{n\in\Nplus}$.

\item Equivalence of projective sequences in $\PLM$ 
is defined component-wise, thus 
$(P_n)_{n}=(Q_n)_{n}$ if for 
all $n$, $P_n=Q_n$.
\item
The constants \tr\ and \fa\ are interpreted in $\PLM$
as the projective sequences that consist solely of these 
respective constants.
\item
An atomic proposition $a$ is interpreted in $\PLM$ 
as the projective sequence $(a,~a,~a,...)$.
\item Projection in $\PLM$ is defined component-wise, thus 
$\pi_k((P_n)_{n})=(\pi_k(P_n))_{n}$.
\item
Conditional composition in $\PLM$ is defined using projections:
\[(P_n)_{n}\lef(Q_n)_{n}\rig(R_n)_{n}=
(\pi_n(P_n\lef Q_n\rig R_n))_{n}.\]
The projections are needed if the depth of
a component $P_n\lef Q_n\rig R_n$ exceeds $n$. 
equation~\eqref{projc} implies that this definition
indeed yields a projective sequence:
\begin{align*}
\pi_n(\pi_{n+1}(P_{n+1}\lef Q_{n+1}\rig R_{n+1}))
&=\pi_n(P_{n+1}\lef Q_{n+1}\rig R_{n+1})\\
&=\pi_n(\pi_n(P_{n+1})\lef \pi_{n}(Q_{n+1})\rig \pi_n(R_{n+1}))\\
&=\pi_n(P_n\lef Q_n\rig R_n).
\end{align*}

\end{itemize}

The following result can be proved straightforwardly:
\begin{theorem}
$\PLM\models\CP + \PR$.
\end{theorem}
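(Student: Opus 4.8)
The plan is to exploit that equality in $\PLM$ is component-wise and that every operation of $\PLM$ is computed component-wise in $\Prop$, possibly wrapped in one extra projection. Consequently, verifying an axiom of $\CP+\PR$ in $\PLM$ will reduce, component-wise, to the same axiom in the initial algebra $\Prop$ (which satisfies $\CP$ by construction and $\PR$ by the defining equations of the projection operator on $\Prop$), once the auxiliary projections introduced by the definition of conditional composition are absorbed. Before starting I would record the elementary projection laws that make this absorption work: that $\pi_n(P_n)=P_n$ for every projective sequence (noted in the definition of $\PLM$), that each $\pi_k$ fixes every basic form of depth at most $k$, and that $\pi_n\circ\pi_m=\pi_{\min(n,m)}$ on $\Prop$. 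The latter facts follow from (the $Q=\tr$ case in the proof of) Lemma~\ref{proj}, which already gives $\pi_n(\pi_{n+k}(P))=\pi_n(P)$; the same facts are in fact what makes component-wise projection in $\PLM$ well defined.

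For the axioms CP1--CP3 the verification is immediate. Writing $x=(X_n)_n$, $y=(Y_n)_n$ and so on, the $n$-th component of each left-hand side is $\pi_n$ applied to a $\Prop$-term to which the corresponding $\CP$-axiom applies in $\Prop$; this collapses the component to $\pi_n(P_n)=P_n$. For instance, for CP3 the $n$-th component of $\tr\lef(Q_n)_n\rig\fa$ is $\pi_n(\tr\lef Q_n\rig\fa)=\pi_n(Q_n)=Q_n$, so the sequence equals $(Q_n)_n$.

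The first genuinely substantial case is CP4, and here Lemma~\ref{proj} does the work. Expanding both sides component-wise, the $n$-th component of the left-hand side is $\pi_n(X_n\lef \pi_n(Y_n\lef Z_n\rig U_n)\rig V_n)$ and that of the right-hand side is $\pi_n(\pi_n(X_n\lef Y_n\rig V_n)\lef Z_n\rig \pi_n(X_n\lef U_n\rig V_n))$. Using Lemma~\ref{proj} to add or delete inner projections on the arguments of an outer $\pi_n(\,\cdot\lef\cdot\rig\cdot\,)$ (equation~\eqref{projc} is the instance I would cite), both components reduce to $\pi_n$ of the respective unprojected $\Prop$-terms, namely $\pi_n(X_n\lef(Y_n\lef Z_n\rig U_n)\rig V_n)$ and $\pi_n((X_n\lef Y_n\rig V_n)\lef Z_n\rig(X_n\lef U_n\rig V_n))$. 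These two are equal because CP4 holds in $\Prop$, so the components agree and CP4 holds in $\PLM$.

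Finally, for $\PR$: the equations $\pi_k(\tr)=\tr$ and $\pi_k(\fa)=\fa$ hold component-wise at once. For $\pi_1(x\lef a\rig y)=a$ and $\pi_{n+1}(x\lef a\rig y)=\pi_n(x)\lef a\rig\pi_n(y)$ I would unfold conditional composition --- recalling that the central atom is the constant sequence $(a)_m$ --- so that the $m$-th component of the left-hand side becomes $\pi_k(\pi_m(X_m\lef a\rig Y_m))$ for the appropriate $k\in\{1,n+1\}$, and then apply the projection-composition law $\pi_k\circ\pi_m=\pi_{\min(k,m)}$ together with $\PR$ in $\Prop$. I expect the main obstacle to sit exactly here, in PR4: unlike the $\CP$ axioms it is not resolved by a single application of Lemma~\ref{proj} but needs a short case split on whether $m\le n$ or $m\ge n+1$, using in the first case that $\pi_n$ fixes the depth-$\le m$ basic forms $X_m,Y_m$ and in the second case that $\pi_m$ fixes the depth-$\le n+1$ term $\pi_n(X_m)\lef a\rig\pi_n(Y_m)$, before PR4 in $\Prop$ finishes the argument. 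Keeping the two layers of projection --- the one built into conditional composition and the one being applied --- straight through this bookkeeping is the only delicate point; everything else is routine once the projection laws above are in hand.
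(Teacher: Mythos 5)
Your proposal is correct and is exactly the argument the paper intends but omits: the theorem is stated with the remark that it ``can be proved straightforwardly'', and the paper develops Lemma~\ref{proj} (in particular equation~\eqref{projc}) immediately beforehand precisely so that each axiom of $\CP+\PR$ can be checked component-wise in $\Prop$ by absorbing the extra projections that the definition of conditional composition in $\PLM$ introduces, which is what you do, including the genuinely non-trivial cases (CP4 via \eqref{projc}, and the $m\le n$ versus $m\ge n+1$ split for the last $\PR$ axiom). The only small imprecision is your claim that both auxiliary projection laws follow from the $Q=\tr$ case of Lemma~\ref{proj}: that case gives $\pi_n\circ\pi_{n+k}=\pi_n$, while the other half of $\pi_n\circ\pi_m=\pi_{\min(n,m)}$ (that $\pi_k$ fixes terms of depth at most $k$) needs its own easy structural induction, i.e.\ the fact the paper records just before Lemma~\ref{proj} that $\pi_{n+j}(P)=P$ for sufficiently large $n$ and all $j$.
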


The projective limit
model $\PLM$ contains elements that are not the 
interpretation of finite propositional statements 
in $\Prop$ (in other words, 
elements of infinite depth).
In the next section we discuss some examples.

\section{Recursive specifications}
\label{sec:recspec}
In this section we discuss 
\emph{recursive specifications} over $\Sigma_\CP(A)$, 
which provide an alternative and simple way
to define propositional statements in $\PLM$. 
We first restrict ourselves to a 
simple class of recursive specifications: Given $\ell>0$, a set
\[E=\{X_i=t_i\mid i=1,...,\ell\}\]
of equations is
a \emph{linear specification} over $\Sigma_\CP(A)$ if
\[t_i::=\tr\mid\fa\mid X_j\lef a_i\rig X_k\]
for $i,j,k\in\{1,...,\ell\}$ and $a_i\in A$. 
A \emph{solution} for $E$ 
in $\PLM$ is a series of propositional statements 
\[(P_{1,n})_n, ...,(P_{\ell,n})_n\]
such that $(P_{i,n})_n$ solves the equation for $X_i$.
In $\PLM$, solutions for linear specifications exist. 
This follows from the property
that for each $m\in\Nplus$, $\pi_m(X_i)$ can be computed as 
a propositional statement
in $\Prop$ by replacing variables $X_j$ by $t_j$
sufficiently often. 
For example, if 
\[E=\{X_1=X_3\lef a\rig X_2,~ X_2=b\andthen X_1,
~X_3=\tr\}\]
we find $\pi_m(X_3)=\pi_m(\tr)=\tr$ for all $m\in\Nplus$,
and
 \begin{align*}
 \pi_1(X_2)&=\pi_1(b\andthen X_1)
 &\pi_{m+1}(X_2)&= \pi_{m+1}(b\andthen X_1)\\
 &=b,&&=b\andthen \pi_m(X_1),\\[2mm]
 \pi_1(X_1)
 &=\pi_1(X_3\lef a\rig X_2)
 &\pi_{m+1}(X_1)
 &=\pi_{m+1}(X_3\lef a\rig X_2)\\
 &=a,
 &&=\tr\lef a\rig\pi_{m}(X_2),
 \end{align*}
and we can in this way construct a projective sequence
per variable. We state without proof
that for a linear specification 
$E=\{X_i=t_i\mid i=1,...,\ell\}$
such sequences model \emph{unique} solutions in 
$\PLM$,\footnote{$\PLM$ can be turned into a metric space by
 defining $d((P_n)_n,(Q_n)_n)=2^{-n}$ for $n$ the least value
 with $P_n\neq Q_n$. The existence of unique solutions for
 linear specifications then follows from Banach's fixed point
 theorem; a comparable and detailed account of this fact
 can be found in~\cite{Vu08}.} 
and  we write
\[\langle X_i | E\rangle\]
for the solution of $X_i$ as defined in $E$. In order to
reason about linearly specified propositional statements, 
we add
these constants to the signature $\Sigma_{\CP}$, 
which consequently satisfy the equations 
\[\langle X_i | E\rangle= \langle t_i | E\rangle\]
where $\langle t_i | E\rangle$ is defined by replacing 
each $X_j$ in $t_i$ by $\langle X_j | E\rangle$. The proof
principle introducing these identities is called the 
\emph{Recursive Definition Principle} (RDP), and for linear
specifications RDP is valid in the projective
limit model $\PLM$.\footnote{A nice and comparable account 
of the validity of RDP in the 
projective limit model for \ACP\ is given in~\cite{BW90}.
In that text book, a sharp distinction is made between RDP|stating
that certain recursive specifications have \emph{at least} a solution
per variable|and the Recursive Specification Principle (RSP),
stating that they have \emph{at most} one solution per variable.
The uniqueness of solutions per variable then
follows by establishing the validity of both RDP and RSP.}
As illustrated above, all solutions satisfy
\[\langle X_i | E\rangle=(\pi_n(\langle X_i | E\rangle))_n.\]

Some examples of propositional statements defined by recursive
specifications are these:
\begin{itemize}
\item
For $E=\{X_1=X_2\lef a\rig X_3,~ X_2=\tr,~ X_3=\fa\}$
we find 
 \[\langle X_1 | E\rangle=(a,~a,~a,~...)\] 
which in the projective limit model represents the 
atomic proposition $a$. Indeed, by RDP 
we find $\langle X_1 | E\rangle=
\langle X_2 | E\rangle\lef a \rig\langle X_3 | E\rangle=
\tr \lef a \rig\fa=a$.

\item
For $E=\{X_1=X_2\lef a\rig X_3,~ X_2=\tr,~ X_3=\tr\}$
we find 
\[\langle X_1 | E\rangle=(a,~
a\andthen\tr,~a\andthen\tr,~a\andthen\tr,~...)\] 
which in the projective limit model represents $a\andthen\tr$.
By RDP we find $\langle X_1 | E\rangle=a\andthen\tr$.

\item
For $E=\{X_1=X_3\lef a\rig X_2,~ X_2=b\andthen X_1,
~X_3=\tr\}$ as discussed above,
we find 
 \[\langle X_1 | E\rangle=(a,~\tr\lef a\rig b,~
\tr\lef a\rig b\andthen a,~
\tr\lef a\rig b\andthen (\tr\lef a\rig b),~...)\] 
which in the projective limit model represents an 
\emph{infinite}  propositional statement, that is, one that satisfies 
\[\pi_i(\langle X_1 | E\rangle)=\pi_j(\langle X_1 | E\rangle)
\quad\Rightarrow\quad i=j,\] 
and thus has infinite depth.
By RDP we find 
$\langle X_1 | E\rangle=\tr\lef a\rig b\andthen\langle X_1 | E\rangle$.
 We note that the infinite propositional statement 
 $\langle X_1 | E\rangle$
can be characterized as \[\textit{while $\neg a$ do $b$}.\]
\end{itemize}

An example of a projective sequence that cannot
be defined by a linear specification, but that can be defined
by the \emph{infinite} linear specification 
$I=\{X_{i}=t_{i}\mid i\in\Nplus\}$ with 
\[
t_i=
\begin{cases}
a\andthen X_{i+1}&\text{if $i$ is prime},\\
b\andthen X_{i+1}&\text{otherwise},
\end{cases}
\]
is $\langle X_1 | I\rangle$, satisfying
\[
\langle X_1 | I\rangle=
(b,~b\andthen a,~b\andthen a\andthen a,
~b\andthen a\andthen a\andthen b,
~b\andthen a\andthen a\andthen b\andthen a,~...).
\]
Other examples of projective sequences that cannot
be defined by a finite linear specification
are
$\langle X_j | I\rangle$ for any $j>1$.

Returning to Example~\eqref{pedes} of a propositional statement
sketched in Section~\ref{sec:motiv},
we can be more explicit now: the  recursively defined 
propositional statement
$\langle X_1|E\rangle$ with $E$ containing
\begin{align*}
X_1&=X_2\lef\textit{green-light}\rig X_1,\\
X_2&=X_3\lef(\textit{look-left-and-check}\leftand 
\textit{look-right-and-check}\leftand\textit{look-left-and-check})\rig X_1,\\
X_3&= ...
\end{align*}
models in a straightforward way a slightly larger part 
of the processing of a pedestrian planning to
cross a road with two-way traffic driving on the right.

\section{Application perspective}
\label{sec:app}
Although we consider the family of models for \CP,
ranging from free valuation congruence to static valuation
congruence to be of independent importance, it is
evidently reasonable to ask what application perspective these 
matters may have.

There is a remarkably wide range of scenarios of usage for
propositional statements in general. A proposition
$P$ may be used to express a matter of fact, a belief, an 
objective or a desire. It may also express a general law
considered invariant in time in an appropriate setting,
or rather an (intended) variant for some product of
human design. Then $P$ may serve as an element of a 
knowledge-base, or as a phrase used in communication
or broadcast. Finally and most relevant to the discussion 
below a major use for a
proposition is to serve as a condition which impacts 
future behavior. This is primarily exemplified in 
program fragments of the form
\[...;\textit{if $\{P\}$ then $\{S_1\}$ else $\{S_2\}$};...\]
During execution of this fragment, $P$ must be evaluated.
Evaluation of $P$ can take many forms, ranging from 
finding a proof for either $P$ or $\neg P$ from the
information contained in some knowledge-base, by means of
either monotonic or non-monotonic logic, perhaps
constrained by resource bounds, to bottom-up evaluation 
using the atoms contained in $P$ as primitive queries.
The primary role of $P$ as a propositional statement 
with constants from 
$A\cup\{\tr,\fa\}$ and connectives 
$\neg,\leftand,\leftor,\_\lef\_\rig\_\:$
is to serve as a condition in the latter sense, as $P$
expresses a meaningful condition while its particular
form in addition has an algorithmic significance
by imposing a specific strategy for sequential 
bottom-up evaluation.

Let $P=a\leftand((b\leftor(\neg a\leftand c))\leftor
(d\leftand \neg a))$. Then evaluation of $P$ will
involve one, two or even three evaluations of $a$. To begin with
we provide a survey of different ways in which the
atomic query $a$ can be handled:
\begin{enumerate}
\item The atomic query $a$ may inspect a static database. 
Subsequent queries
provide identical results, different queries don't affect
one another.

\item The atomic query $a$ may inspect a dynamic database. 
Subsequent queries
may return different values but will not cause fluctuations
in the response of other atomic queries.

\item The atomic query 
$a$ may be handed over to a ``truth maintenance system''
(TMS) which tries to prove it, and otherwise returns \fa.
The knowledge-base managed by this TMS itself may be regularly
changed by means of a mechanism (often called a ``belief revision 
mechanism'') that processes a stream of incoming, potentially
authoritative information.

\item The atomic query 
$a$ itself may call another program that has some or
even significant side effects which may influence the
replies provided for forthcoming atomic tests.

\item Like 4, but observing the side effect of a task
may be limited to agents that operate on a high security level.
\end{enumerate}

We say that a propositional statement $P$ is in \emph{monotest 
form} if no atom can be evaluated more than once.
If $P$ is in {monotest  form} then
fluctuations regarding the evaluation of a single test 
have no impact on $P$. Rewriting to monotest form can always
be done modulo static valuation congruence. The operator
$\cach(X)$ on basic forms will produce an equivalent
monotest form:
\begin{align*}
\cach(\tr)&=\tr,\quad\cach(\fa)=\fa\\
\cach(x\lef a\rig y)&=\cach([\tr/a]x)\lef a\rig\cach([\fa/a]y).
\end{align*}
Now transforming a propositional statement $P$ into
$\cach(bf(P))$ may involve a combinatorial explosion in size.
Suppose $\mf(P)$ finds a monotest form for $P$ modulo $=_\mem$
in polynomial time. Then
\[\SAT_\mem(P)\iff \SAT_\mem(\mf(P))\iff \SAT_\con(\mf(P)).\]
Then $\SAT_\mem$ would be in P while it is known to be NP-complete.
As it turns out, the combinatorial explosion in size that comes with
the transformation $P\mapsto \cach(bf(P))$ is no coincidence
and for that reason for larger conditions
it is reasonable to assume that 
these are not in monotest form.

Except for the case that atomic tests are evaluated from a
static database, fluctuating replies 
cannot be excluded. Now consider program $X$ with
\[X=C[\text{if $\{P\}$ then $\{S_1\}$ else $\{S_2\}$}]\]
and assume that $P$ is not in monotest form and that
subsequent evaluations of atomic tests may have different results.
At this stage awareness of proposition algebra may be of use
in contemplating revisions of the design of the program.
If the ``internal logic'' of $P$ (i.e., the rationale
of $P$'s occurrence in $X$) is defeated by such fluctuations,
\[C[\text{if $\{\cach(bf(P))\}$ then $\{S_1\}$ else $\{S_2\}$}]\]
may be considered an improved design of the program (to implement 
this concisely, one may need auxiliary Boolean variables).
But a disadvantage of this transformation is that in some cases
a cached outcome of an atomic test is ``outdated''. If neither
consistency (no fluctuation), nor the use of outdated values 
is a serious worry, there is no incentive to change the design of $X$.
If, however, the use of of outdated values is to be preferably
prevented, then one may instead decide to restart the evaluation 
of $P$ whenever a repeated atomic test returns a different value.
Here is an operation $\ree_{V,W}(Q)$ which serves that purpose,
where $V,W\subseteq A$, $V\cap W=\emptyset$  and $Q$ is a 
subformula of $P$.
$\ree_{V,W}(Q)$ evaluates $Q$ using the information that
preceding tests $a\in V$ have returned \tr\ and preceding tests
$a\in W$ have returned \fa. If a reply is observed that fails 
to match with this information, evaluation of
$P$ starts again with both $V$ and $W$ made empty:
\begin{align*}
\ree(P)&=\ree_{\emptyset,\emptyset}(P),\\
\ree_{V,W}(\tr)&=\tr,\quad\ree_{V,W}(\fa)=\fa,\\
\ree_{V,W}(Q\lef a\rig R)&=\ree_{V\cup\{a\},W}(Q)\lef a\rig
\ree_{V,W\cup\{a\}}(R)\quad\text{if $a\not\in V\cup W$},\\
\ree_{V,W}(Q\lef a\rig R)&=\ree_{V,W}(Q)\lef a\rig
\ree_{\emptyset,\emptyset}(P)\quad\text{if $a\in V$},\\
\ree_{V,W}(Q\lef a\rig R)&=\ree_{\emptyset,\emptyset}(P)\lef a\rig
\ree_{V,W}(R)\quad\text{if $a\in W$}.
\end{align*}
Observe that $\ree(P)$ determines a proposition in the projective
limit model.

Now one may be afraid that repeated fluctuations cause $\ree(P)$ 
to diverge in adverse circumstances. A new test
$DLNI$ (deadline not immanent) may be introduced which is used 
to assert whether or not time suffices for a full re-evaluation of $P$.
The following modification for the design of $\ree$ is plausible:
\begin{align*}
\ree_{V,W}(Q\lef a\rig R)&=\ree_{V,W}(Q)\lef a\rig
(\ree_{\emptyset,\emptyset}(P)\lef DLNI\rig R)\quad\text{if $a\in V$},
\end{align*}
and symmetrically for $a\in W$. Now one might be dissatisfied with
$R$ not using the information contained in $V$ and $W$. 
Then one may use instead $[\tr/V,\fa/W]R$, the modification of $R$
in which its atoms that are in $V$ ($W$) are
substituted by $\tr~(\fa)$:
\[
\ree_{V,W}(Q\lef a\rig R)=\ree_{V,W}(Q)\lef a\rig
(\ree_{\emptyset,\emptyset}(P)\lef DLNI\rig [\tr/V,\fa/W]R)\\
\text{ if $ a\in V$},
\]
and symmetrically for $a\in W$.

Proposition algebra provides an approach that allows to compare 
and develop these design alternatives within a formal setting. 
The advantage of doing so is a matter of separation of concerns,
which can be considered a big issue for imperative programming.

\section{Conclusions}
\label{sec:conc}

Proposition algebra in the form of \CP\ for  
propositional statements with conditional composition and
either enriched or not
with negation and sequential connectives, is proposed as an 
abstract data type. 
Free valuations provide the natural semantics 
for \CP\ and 
these are semantically at the opposite end of static valuations.
It is shown that taking conditional composition and 
free valuations as
a point of departure implies that a ternary connective is needed 
for functional completeness; binary connectives are not sufficient.
Furthermore, \CP\ admits a meaningful and 
non-trivial extension to projective limits, and this 
constitutes the most simple case of an inverse limit construction 
that we can think of.

The potential role of proposition algebra is only touched upon by some 
examples. It remains a challenge to find convincing examples that
require reactive valuations, and to find earlier accounts of this 
type of semantics for propositional logic.
The basic idea of proposition algebra with 
free and reactive valuations
can be seen as
a combination of the following two ideas:
\begin{itemize}
\item Consider atomic propositions as events (queries) that can have a 
side effect in a sequential system, and take McCarthy's sequential 
evaluation as described in~\cite{McC63} to 2-valued 
propositional logic; this motivates reactive valuations as those
that define evaluation or computation as a sequential phenomenon.
\item
In the resulting setting, Hoare's conditional composition as
introduced in~\cite{Hoa85} is more
natural than the sequential, non-commutative versions of 
conjunction and disjunction, and (as it appears)
more expressive: a ternary connective is needed anyhow.
\end{itemize}
For conditional composition we have chosen for the notation 
\[\_\lef \_ \rig \_\]
from Hoare~\cite{Hoa85} in spite of the 
fact that our theme is technically closer to thread 
algebra~\cite{BM07} where a different notation is used.
We chose for the notation $\_\lef\_\rig\_\:$
because its most well-known 
semantics is static valuation semantics (which is simply 
conventional propositional logic) for which this notation 
was introduced in~\cite{Hoa85}.\footnote{This notation 
 was used by Hoare in his 1985~book on CSP~\cite{Hoa85a} 
 and by Hoare \emph{et al.} in 
 the well-known 1987 paper \emph{Laws of Programming}~\cite{HHH87} 
 for expressions $P\lef b\rig Q$ with $P$ and $Q$ programs and 
 $b$ a Boolean expression without mention of~\cite{Hoa85}
 that appeared in 1985.}
To some extent, thread algebra and propositional logic in the 
style of~\cite{Hoa85} are models of the same signature.
A much more involved use of
conditional composition can be found in~\cite{PZ07}, where the
propositional fragment of Belnap's four-valued logic~\cite{Bel77} 
is characterized
using only conditional composition and his four constants
representing these truth values.

In this paper we assumed that $|A|>1$. The case that $|A|=1$ is in detail
described in~\cite{Chris}. In particular, $=_{\rp}$ and $=_{\st}$ and thus 
all valuation congruences in between coincide in this case.

\paragraph{Related work.}
We end with a few notes on related matters.
\begin{enumerate}

\item On McCarthy's conditional expressions~\cite{McC63}.
In quite a few papers 
the `lazy evaluation' semantics proposed in 
McCarthy's work is discussed, or taken
as a point of departure. We mention a few of these
in reverse chronological order:

H\"ahnle states in his paper \emph{Many-valued logic, 
partiality, and abstraction in formal specification 
languages}~\cite{Hahn05} that 
\begin{quote}
\emph{``sequential conjunction
[...] represents the idea that
if the truth value can be determined after evaluation of the first 
argument, then the result is computed without looking at the 
second argument. Many programming languages contain operators that
exhibit this kind of behavior"}. 
\end{quote}
Furthermore, 
Konikowska describes in~\cite{Kon96} a model of so-called
McCarthy algebras in terms of three-valued logic, while 
restricting to the well-known symmetric binary connectives, and 
provides sound axiomatizations and representation results.
This is achieved by admitting only \tr\ and \fa\ as constants
in a McCarthy algebra, and distinguishing an element $a$ as in
one of four possible classes (`positive' if $a\vee x=a$, 
`negative' if $a\wedge x=a$, `defined' if $a\wedge \neg a=\fa$,
and `strictly undefined' if $a=\neg a$).

Finally, Bloom and Tindell discuss in their paper
\emph{Varieties of ``if-then-else''}~\cite{BT83} various
modelings of conditional composition, both 
with and without a truth value \emph{undefined},
while restricting to the ``redundancy law'' 
\[(x\lef y\rig z)\lef y\rig u=x\lef y \rig u,\]
a law that we called \axname{CPcontr} in Section~\ref{sec:com}
and that
generalizes the axiomatization of contractive 
valuation congruence defined in that section to an extent in which
only the difference beteen \tr, \fa\ and 
\emph{undefined} plays a decisive role.

As far as we can see, none of the papers mentioned here even 
suggests the idea of \emph{free} or \emph{reactive}
valuation semantics. Another example where sequential 
operators play a role is \emph{Quantum logic} (for a 
brief overview see
\cite{Mit04}), where next to normal conjunction a notion
of \emph{sequential conjunction} $\sqcap$ is exploited
that is very similar to $\leftand$ (and that despite
its notation is certainly not symmetric).
\item
Concerning projections and the projective limit model $\PLM$
we mention that in much current research and exposition,
projections are defined also for depth 0 
(see, e.g.,~\cite{BM07,Vu08} for thread algebra, 
and~\cite{Fok00} for process algebra). 
However, \CP\ does 
not have a natural candidate for $\pi_0(P)$ and therefore we
stick to the original approach as described in~\cite{BK84}
(and overviewed in~\cite{BW90}) that starts from projections
with depth 1.

\item
Reactive valuations were in a different form employed in 
accounts of process algebra with propositional statements: 
in terms of
operational semantics, this involves transitions
\[P\step{a,w} Q\]
for process expressions $P$ and $Q$
with $a$ an action and $w$ ranging over a class of valuations.
In particular this approach deals with 
process expressions that contain propositional statements 
in the
form of guarded commands, such as $\phi:\rightarrow P$ that
has a transition 
\[(\phi:\rightarrow P)\step{a,w}Q\]
if $P\step{a,w} Q$ and $w(\phi)=\tr$.
For more information about this approach, see, 
e.g.,~\cite{BP98,BP98a}.
\end{enumerate}

\appendix
\section{Proof of Theorem~\ref{THM:CR}}
\label{app:A}
\begin{proof}
This proof
has the same structure as the proof of Theorem~\ref{thm:fr},
but a few cases require more elaboration.

Towards a contradiction,
assume that $t\in
T_C^{1,2}(A)$ is a term with property $\phi_{a,b,c}$
and $\#_{2p}(t)$ is minimal.

We first argue that $t\not\equiv f(b,t')$ for some
binary function $f$ and term $t'$. Suppose 
otherwise, then $b$ must be 
the central condition in $f(b,t')$, so 
$f(b,t')=_{\con}g(b,t')\lef b\rig h(b,t')$ 
for certain binary functions $g$ and $h$ in 
$T_C^{1,2}(X)$. Notice
that because $b$ is not central in $\tr_b(g(b,t'))$,
a different atom must be central in this term, and this
atom must be $a$. For this to hold, $a$ must be central
in $\tr_b(t')$ and no atom different from $a$
can be tested by the first requirement of $\phi_{a,b,c}$.
So, after contraction of all further $a$'s we find
\[\tr_b(t')=_{\con}P\lef a \rig Q\]
with $P,Q\in\{\tr,\fa\}$, and similary 
\[\fa_b(t')=_{\con}P'\lef c \rig Q'\]
with $P',Q'\in\{\tr,\fa\}$.
If $P\not\equiv Q$ and $P'\not\equiv Q'$, 
then $t'$ is a 
term that satisfies $\phi_{a,b,c}$, but $t'$ 
is a term with lower $\#_{2p}$-value than 
$g(b,t')\lef b\rig h(b,t')$, which is a 
contradiction.
If either $P\equiv Q$ or
$P'\equiv Q'$, then
\[t=_{\con}(P\lef a\rig Q)\lef b\rig(P'\lef c\rig Q'),\]
which contradicts 
$\phi_{a,b,c}$.

So it must be the case that 
\[t\equiv r(f(b,t'))\]
for some term $r(x)\in T_C^{1,2}(\{x\})$
such that $b$ is central in $f(b,t')$
and $x$ is central in $r(x)$. 
If no such such term $r(x)$ exists,
then $t\equiv f'(a')$ with $f'(x)$ a unary operator
definable in $T_C^{1,2}(\{x\})$ and $a'\in A$, 
which cannot hold because
$t$ needs to contain $a$, $b$ and $c$.

Also there cannot be a unary
function $f'\in T_C^{1,2}(\{x\})$ with 
$r(f'(b))=_{\con}r(f(b,t'))$,
otherwise $r(f'(b))\in T^{1,2}_C(A)$ while
$\#_{2p}(r(f'(b)))<\#_{2p}(r(f(b,t')))$, which is 
a contradiction.

As $x$ is central in $f(x,y)$ we may write
\[f(x,y)=_{\con}g(x,y)\lef x\rig h(x,y)\]
for binary operators $g$ and $h$. Because
$b$ is central in $t$ we find
\[t=_{\con}
r\big (\tr_b(g(b,t'))\lef b\rig \fa_b(h(b,t'))\big ).\]

We proceed with a case distinction on the form that 
$\tr_b(g(b,t'))$ and $\fa_b(h(b,t'))$
may take. At least one of these is modulo $\con$
not equal to $\tr$ or $\fa$ (otherwise $f(b,t')$ 
could be replaced by $f'(b)$ for some unary function $f'$ 
and this was excluded above).

\begin{enumerate}

\item
Suppose $\tr_b(g(b,t'))\not\in_{\con}\{\tr,\fa\}$ and 
$\fa_b(h(b,t'))\not\in_{\con}\{\tr,\fa\}$. We show that
this is not possible: first notice
that because $b$ is not central in $\tr_b(g(b,t'))$,
a different atom must be central in this term, and this
atom must be $a$. For this to hold, $a$ must be central
in $\tr_b(t')$ and no atom different from $a$
can be tested by the first requirement of $\phi_{a,b,c}$.
So, after contraction of all further $a$'s we find
\[\tr_b(t')=_{\con}P\lef a \rig Q\]
with $P,Q\in\{\tr,\fa\}$, and similary 
$\fa_b(t')=_{\con}P'\lef c \rig Q'$
with $P',Q'\in\{\tr,\fa\}$.

If $P\not\equiv Q$ and $P'\not\equiv Q'$, 
then $t'$ is a 
term that satisfies $\phi_{a,b,c}$, but $t'$ 
is a term with lower $\#_{2p}$-value than 
$r(g(b,t')\lef b\rig h(b,t'))$, which is a 
contradiction.

Assume $P\equiv Q$ 
(the case $P'\equiv Q'$ is symmetric). 

Now $t=_{\con}r\big(\tr_b(g(b,t'))
\lef b \rig \fa_b(h(b,t'))\big)$, and 
no $b$'s can occur in
$T_b(g(b,t'))$, so
\[T_b(g(b,t'))\in_{\con}\{P\lef a\rig Q,~
\fa\lef (P\lef a\rig Q)\rig\tr, ~ 
(P\lef a\rig Q)\andthen \tr,~
(P\lef a\rig Q)\andthen \fa\}.\]
For $F_b(h(b,t'))$ a similar argument applies,
which implies that (recall $P\equiv Q$)
\[
\tr_b(g(b,t'))=_{\con}a\andthen P\text{ and }
\fa_b(h(b,t'))=_{\con}P'\lef c\rig Q'
\quad\text{with $P,P',Q'\in\{\tr,\fa\}$}.
\]
Assume $P\equiv \tr$ (the case $P\equiv\fa$ is symmetric).
So in this case
\[t=_{\con}r\big((a\andthen\tr)\lef b\rig(P'\lef c\rig Q')\big),\] 
and we distinguish two cases:

$i$. $P'\equiv \tr$ or $Q'\equiv\tr$. 
Now the reply to $a$ in $a\andthen \tr$ following
a positive reply to the initial $b$
has no effect, so this $a$ must be followed by another
central $a$.
But this last $a$ can also be reached
after a $b$ and a $c$, which contradicts 
$\phi_{a,b,c}$.

$ii$. $P'\equiv Q'\equiv\fa$. Since property 
$\phi_{a,b,c}$ holds it must be the case that
$a$ is a central condition in 
$r(\tr)$ with the property that $\tr_a(r(\tr))\ne_{\con}
\fa_a(r(\tr))$, otherwise the initial $b$ that stems 
from the substitution 
$x\mapsto (a\andthen\tr)\lef b\rig(c\andthen \fa)$ in $r(x)$
is upon reply \tr\ 
immediately followed by $a\andthen \tr$
and each occurrence of this $a$ is not able to yield 
both \tr\ and \fa, contradicting $\phi_{a,b,c}$. (And
also because this substitution yields no further 
occurrences of $b$ upon reply $\tr$.)

Similarly, $c$ is a central condition in $r(\fa)$ 
with the property that $\tr_c(r(\fa))\ne_{\con}
\fa_c(r(\fa))$.
We find that
$r(b)$
also satisfies $\phi_{a,b,c}$.
Now observe that $r(b)$  is a term with lower
$\#_{2p}$-value than 
$r(f(b,t'))$,
which is a contradiction.

\item We are left with four cases:
either $a$ is central in $\tr_b(g(b,t'))$ and 
$\fa_b(h(b,t'))\in_{\con}\{\tr,\fa\}$, or
$c$ is central in $\fa_b(h(b,t'))$ and 
$\tr_b(g(b,t'))\in_{\con}\{\tr,\fa\}$.
These cases are symmetric and it suffices to consider
only the first one, the others can be dealt with similarly. 

So assume $a$ is central in $\tr_b(g(b,t'))$ and 
$\fa_b(h(b,t'))=_{\con}\tr$. This implies
$\tr_b(g(b,t'))=_{\con}P\lef a\rig Q$
for some $P,Q$,
and after contraction of all 
$a$'s in $P$ and $Q$, 
\[\tr_b(g(b,t'))=_{\con}
P'\lef a\rig Q'\quad\text{ for some $P',Q'\in\{\tr,\fa\}$}.
\]
We find
\[t=_{\con}
r((P'\lef a\rig Q')\lef b\rig\tr),\]
and we distinguish two cases:

$i$. $P'\equiv\tr$ or $Q'\equiv\tr$. Now $c$ can be reached 
after a negative reply to $b$ according to $\phi_{a,b,c}$,
but this $c$ can also be reached
after a positive reply to $b$ and the appropriate reply 
to $a$, which contradicts 
$\phi_{a,b,c}$.

$ii$. $P'\equiv Q'\equiv\fa$. Since property 
$\phi_{a,b,c}$ holds it must be the case that
$a$ is a central condition in 
$r(\fa)$ with the property that $\tr_a(r(\fa))\ne_{\con}
\fa_a(r(\fa))$, otherwise the initial $b$ that stems 
from the substitution 
$x\mapsto (a\andthen\fa)\lef b\rig\tr$ in $r(x)$
is upon reply \tr\ 
immediately followed by $a\andthen \fa$
and each occurrence of this $a$ is not able to yield 
both \tr\ and \fa, contradicting $\phi_{a,b,c}$. (And
also because this substitution yields no further 
occurrences of $b$ upon reply $\tr$.)

Also, $c$ is a central condition in $r(\tr)$ 
with the property that $\tr_c(r(\tr))\ne_{\con}
\fa_c(r(\tr))$.
We find that
$r(b)$
also satisfies $\phi_{a,b,c}$.
Now observe that $r(b)$  is a term with lower
$\#_{2p}$-value than 
$r(f(b,t'))$,
which is a contradiction.

\end{enumerate}
This concludes our proof.
\end{proof}
\end{document}